\documentclass[11pt,letterpaper]{article}
\usepackage{authblk}

\usepackage[letterpaper,margin=1in]{geometry}
\usepackage{microtype}
\usepackage{amsmath,amssymb,amsthm,mathtools}
\usepackage{aliascnt}
\usepackage{algorithm}
\usepackage[noend]{algpseudocode}
\usepackage{needspace}
\makeatletter
\renewcommand{\theALG@line}{\thealgorithm.\arabic{ALG@line}}
\providecommand{\theHALG@line}{}
\renewcommand{\theHALG@line}{\thealgorithm.\arabic{ALG@line}}
\makeatother
\usepackage{enumitem}
\usepackage{hyperref}
\usepackage[nameinlink,noabbrev,capitalise]{cleveref}
\usepackage[numbers,square,sort&compress]{natbib}
\usepackage{url}

\crefname{theorem}{Theorem}{Theorems}
\Crefname{theorem}{Theorem}{Theorems}
\crefname{lemma}{Lemma}{Lemmas}
\Crefname{lemma}{Lemma}{Lemmas}
\crefname{corollary}{Corollary}{Corollaries}
\Crefname{corollary}{Corollary}{Corollaries}
\crefname{algorithm}{Algorithm}{Algorithms}
\Crefname{algorithm}{Algorithm}{Algorithms}
\crefname{section}{Section}{Sections}
\Crefname{section}{Section}{Sections}

\hypersetup{
  hidelinks,
  pdftitle={One Discrete Gaussian Sample in 2 to the n over 2 plus o of n Time},
  pdfauthor={},
  pdfkeywords={discrete Gaussian sampling, shortest vector problem, lattice sparsification, smoothing parameter}
}

\newtheorem{theorem}{Theorem}[section]
\newaliascnt{lemma}{theorem}
\newtheorem{lemma}[lemma]{Lemma}
\aliascntresetthe{lemma}
\newaliascnt{corollary}{theorem}
\newtheorem{corollary}[corollary]{Corollary}
\aliascntresetthe{corollary}
\theoremstyle{definition}
\newaliascnt{definition}{theorem}
\newtheorem{definition}[definition]{Definition}
\aliascntresetthe{definition}

\newcommand{\R}{\mathbb{R}}
\newcommand{\Q}{\mathbb{Q}}
\newcommand{\Z}{\mathbb{Z}}
\newcommand{\eps}{\varepsilon}
\newcommand{\poly}{\operatorname{poly}}
\newcommand{\polylog}{\operatorname{polylog}}
\newcommand{\dist}{\operatorname{dist}}
\newcommand{\CVP}{\mathsf{CVP}}
\newcommand{\ip}[2]{\langle #1,#2\rangle}
\newcommand{\norm}[1]{\lVert #1\rVert_2}
\newcommand{\ckl}{c_{\mathrm{Lev}}}
\newcommand{\betakl}{\beta_{\mathrm{Lev}}}

\setlist[itemize]{leftmargin=1.6em,itemsep=0.2em,topsep=0.3em}
\setlist[enumerate]{leftmargin=1.8em,itemsep=0.2em,topsep=0.3em}

\title{One Discrete Gaussian Sample in $2^{n/2+o(n)}$ Time} 
\author{Jiseung Kim}
\affil{Jeonbuk National University \\ jiseungkim@jbnu.ac.kr}
\date{}

\begin{document}
\maketitle 

\begin{abstract}
Aggarwal, Dadush, Regev, and Stephens-Davidowitz (ADRS; STOC 2015) sample $2^{n/2}$ discrete Gaussians
at an arbitrary parameter in $2^{n+o(n)}$ time, and above smoothing in $2^{n/2+o(n)}$ time.
They ask whether the latter bound suffices for one sample at an arbitrary parameter.
We answer this question affirmatively: for every rank-$n$ lattice $L\subseteq\R^n$ specified by a
rational basis and every rational $s^2>0$, we
produce one sample from $D_{L,s}$ within statistical distance $\exp(-\Omega(n^3))$ in expected
$2^{n/2+o(n)}$ time and $2^{n/2+o(n)}$ space on every execution.  The algorithm samples from random superlattices
that are smooth at the required scale with constant probability and outputs the first point in $L$;
a Gaussian-mass comparison shows
that the $2^{n/2}$ samples produced by one ADRS call contain a point of $L$ with
inverse-polynomial probability.  The factor $2^{n/2}$ is tight in this Gaussian-mass comparison.
For every fixed rational $\alpha<1.4697$, the same comparison gives a sub-$2^n$
algorithm for exact CVP on targets satisfying
$\dist(y,L)\le\alpha\lambda_1(L)$, without a uniqueness assumption, and an exact-SVP algorithm in
$2^{0.7315n+o(n)}$ time.
\end{abstract}

\section{Introduction}
For a lattice $L\subseteq\R^n$ and a parameter $s>0$, the discrete Gaussian $D_{L,s}$ assigns
each $x\in L$ probability proportional to $\exp(-\pi\norm{x}^2/s^2)$.  Discrete Gaussian
sampling (DGS) is a basic algorithmic primitive in the geometry of numbers and in lattice
cryptography~\citep{MR07,GPV08}.  At large parameters it is efficiently samplable; at the small
parameters relevant to exact lattice problems, its complexity changes sharply.

Aggarwal, Dadush, Regev, and Stephens-Davidowitz (ADRS) give an arbitrary-parameter algorithm
that outputs $2^{n/2}$ samples in $2^{n+o(n)}$ time and space.  Above the smoothing parameter,
their second algorithm outputs the same number of samples in $2^{n/2+o(n)}$ time and space.  They explicitly
ask whether this roughly $2^{n/2}$ running time can be achieved at arbitrary parameters, at least
when only one sample is requested~\citep{ADRS15}.

Our main result shows that the faster above-smoothing running time already suffices for one sample at
every parameter.

\begin{theorem}[Main result, informal; see \Cref{thm:one-dgs}]
Given a rational basis of a rank-$n$ lattice $L\subseteq\R^n$ and a rational squared parameter
$s^2>0$, a random vector whose distribution is $\exp(-\Omega(n^3))$-close to $D_{L,s}$ can be
produced in expected $2^{n/2+o(n)}$ time.  Every execution uses at most
$2^{n/2+o(n)}$ space.
\end{theorem}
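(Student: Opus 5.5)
The plan is to reduce sampling from $D_{L,s}$ to the above-smoothing ADRS sampler, run on random sparse superlattices of $L$, and to accept the first output that lands in $L$. The abstract already announces this as the algorithm. Fix a prime $p$ and an integer $k$ with $p^k$ of order $2^{n/2}$, up to polynomial factors to be tuned. Pick a uniformly random surjection $\varphi: L^* \to \Z_p^k$ and set $L' = \{x : \langle x, w\rangle \in \Z \text{ for all } w \in \ker\varphi\}$. Then $L'$ is a random superlattice with $L \subseteq L'$ and $[L':L]=p^k$.

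The first step is to show that $L'$ is smooth at scale $s$ with constant probability, i.e.\ $\rho_{1/s}((L')^*\setminus\{0\}) \le \eps$. Since $(L')^* = \ker\varphi$ is a random index-$p^k$ sublattice of $L^*$, I would use the standard sparsification bound $\mathbb{E}[\rho_{1/s}((L')^*\setminus\{0\})] \approx \rho_{1/s}(L^*\setminus\{0\})/p^k$, together with the Poisson-summation identity $\rho_{1/s}(L^*) = \det(L)\, s^{-n}\rho_s(L)$. Markov's inequality then gives smoothness with constant probability whenever $p^k \gtrsim \rho_{1/s}(L^*)$.

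The second step conditions on $L'$ being smooth and runs ADRS above smoothing on $L'$ at parameter $s$. This produces $2^{n/2}$ samples that are jointly close to independent $D_{L',s}$ samples. Conditioning a $D_{L',s}$ sample on lying in $L$ yields exactly $D_{L,s}$. Each sample lies in $L$ with probability $\rho_s(L)/\rho_s(L')$, and smoothness makes $\rho_s(L') \approx p^k s^n/\det(L)$.

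The heart of the proof is the Gaussian-mass comparison. We need $2^{n/2}\cdot \rho_s(L)\det(L)/(p^k s^n) \ge 1/\poly(n)$ while at the same time $p^k \gtrsim \rho_{1/s}(L^*) = \det(L)\rho_s(L)/s^n$. Set $M = \det(L)\rho_s(L)/s^n$. Taking $p^k \approx M$ up to a constant, the success probability per sample becomes about $\rho_s(L)\det(L)/(M s^n) = 1$. This is too good, which signals a missing constraint: the index must be at least roughly $2^{-n/2}$ times... Indeed the real constraint is that $L'$ must also be rational and that ADRS runs in $2^{n/2}$ time regardless of index, while the constraint $p^k \ge 1$ forces $M\ge 1$.

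I would therefore split into two cases.
\begin{enumerate}
\item If $M$ is small, $L$ itself is essentially smooth at scale $s$, and ADRS is run directly on $L$.
\item Otherwise, take $p^k = \Theta(M)$, subject to $p^k \le 2^{n/2}\poly(n)$, which bounds the number of retries. When $M$ exceeds $2^{n/2}$, I would instead rescale, using that $\rho_s(L)/\rho_{s/\sqrt2}(L)$ is at most $2^{n/2}$.
\end{enumerate}
This rescaling is where the tightness of $2^{n/2}$ enters. The cleanest route is to sample at $s' = s/\sqrt{2}$ on the superlattice, in the style of the ADRS pairing/averaging step, and to compare $\rho_{s}$ and $\rho_{s/\sqrt2}$ via $\rho_{s/\sqrt2}(L)\ge 2^{-n/2}\rho_s(L)$. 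Getting this inequality into the right direction is the main obstacle, and I would try it first by Poisson summation.

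The remaining steps are routine. Non-smooth draws of $L'$ are detected and repeated, or simply retried, because only a constant fraction succeed; this gives expected rather than worst-case time. The errors are tracked so that the total statistical distance is $\exp(-\Omega(n^3))$, obtained by taking ADRS's error parameter that small and amplifying repetitions polynomially. Space stays at $2^{n/2+o(n)}$ because each attempt discards its samples before the next one begins.
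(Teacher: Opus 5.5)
There is a genuine gap at the step you call the heart of the proof. The fast ADRS sampler (\Cref{thm:adrs-smooth-sampler}) only runs at parameters above $\sqrt2\,\eta_{1/2}(L')$, not above $\eta(L')$. Dropping this $\sqrt2$ is exactly why your computation came out ``too good.''

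The paper's design accounts for it. Put $t=s/\sqrt2$, choose the index $q$ within a polynomial factor of $\rho_{1/t}(L^*)$ so that $L'$ is smooth at $t$ with constant probability, and run ADRS at $s=\sqrt2t$. Poisson summation gives $\rho_s(L')\le\frac54\,q(\sqrt2t)^n/\det L$. Substituting $q\lesssim\rho_{1/t}(L^*)=\det L\,\rho_t(L)/t^n$ cancels the determinant and $t^n$, leaving $\rho_s(L')\le O(m)\,2^{n/2}\rho_t(L)\le O(m)\,2^{n/2}\rho_s(L)$ (\Cref{thm:mass-comparison}). The $2^{n/2}$ samples of one call then contain a point of $L$ with inverse-polynomial probability.

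The only inequality needed is the trivial monotonicity $\rho_{s/\sqrt2}(L)\le\rho_s(L)$. The scaling bound $\rho_{s/\sqrt2}(L)\ge2^{-n/2}\rho_s(L)$ that you single out as the main obstacle plays no role. Your case split does not substitute for this argument:
\begin{itemize}
\item Capping $p^k\le2^{n/2}\poly(n)$ destroys smoothness. The required index satisfies $\rho_{1/t}(L^*)\ge\det L/t^n$, which is unbounded as $s\to0$. A huge index costs only $O(\log q)$ extra bits in the basis, not running time.
\item Running ADRS directly on $L$ when $M$ is small still fails for $\eta_{1/2}(L)\le s\le\sqrt2\,\eta_{1/2}(L)$.
\item Sampling at $s/\sqrt2$ on the superlattice produces $D_{L,s/\sqrt2}$, not $D_{L,s}$.
\end{itemize}

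Two further steps fail as written.

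First, with a fixed prime $p$ and $k\ge2$, $\ker\varphi$ always contains $pL^*$. Its mass $\rho_{1/t}(pL^*\setminus\{0\})=\rho_{1/(pt)}(L^*\setminus\{0\})$ exceeds $1/2$ whenever $pt<\eta_{1/2}(L)$. So $L'$ is then never smooth at $t$, and the estimate $\mathbb E\approx\rho_{1/t}(L^*\setminus\{0\})/p^k$ is false. You need a cyclic quotient: either a large prime with $k=1$, or $\Z/2^m\Z$ as in the paper, where \Cref{lem:dilation,lem:inclusion} control vectors divisible by powers of two.

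Second, you cannot detect whether a draw $L'$ is smooth, nor compute the right index. The algorithm must therefore try every exponent up to the explicit bound $J_{\mathrm{mod}}(B,\tau)$, and it will make calls below the ADRS threshold. Unbiasedness then rests on the variable-length guarantee of \Cref{def:variable-dgs}. Even below threshold, ADRS returns $M$ independent samples of $D_{L',s}$, with $M$ chosen independently of their values. Hence the first returned point in $L$ is exactly $D_{L,s}$-distributed (\Cref{lem:variable-restriction,lem:first-success}). Your proposal never uses this, and without it the outputs of non-smooth or wrong-index calls could bias the result.
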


Here ``arbitrary'' refers to the scale regime: $s$ may lie above or below the smoothing parameter
and need not be related to $\lambda_1(L)$.  The requirement $s^2\in\Q$ is only the finite input
convention for the Turing-machine statement.  The algorithm needs neither a reduced basis nor an
estimate of $\lambda_1(L)$.  \Cref{sec:precision,thm:one-dgs} give the polynomial dependence on
the input length and the randomized Turing-machine implementation.  The algorithm still stores
one output list of the ADRS sampler.  The improvement is in the time needed for one sample at an
arbitrary parameter, not in space.

\paragraph{Applications to exact lattice problems.}
A one-dimensional augmentation turns the mass comparison into an algorithm for exact CVP.  For
every fixed rational $\alpha<1.4697$, it runs in sub-$2^n$
time on inputs satisfying $\dist(y,L)\le\alpha\lambda_1(L)$, without assuming a unique closest
vector.  At $\alpha=1$, the exponent is approximately $0.7315$; \Cref{thm:cvp-alpha} gives the full
dependence on $\alpha$.

The same pointwise sampling bound gives a secondary exact-SVP consequence.  At the largest scale
where the Gaussian mass remains subexponential, it yields time $2^{0.7315n+o(n)}$ and space
$2^{n/2+o(n)}$.  \Cref{thm:svp-above-smoothing} separates this exponent into the cost of one
above-smoothing call and the number of calls needed to see a shortest vector.

\subsection{Related work}

The closest starting point is ADRS: they give the first $2^{n+o(n)}$-time sampler at arbitrary
parameters and a $2^{n/2+o(n)}$-time sampler above smoothing~\citep{ADRS15}.  Our main result asks
for only one output and removes the smoothing restriction at the latter cost.  Gaussian measures
and smoothing also underlie lattice cryptography~\citep{MR07,GPV08}, and the smoothing parameter
itself can be approximated algorithmically~\citep{CDLP13}.  Dimension-preserving reductions in the
opposite direction use exact CVP for shifted DGS and exact SVP for centered DGS~\citep{Ste16}.
Polynomial-time sampling under lattice extensions and restrictions, and quantum DGS, are studied
respectively in \citep{EWY23,CSS26}.

For exact SVP, DGS is already a central algorithmic tool.  The main classical line runs from the
Voronoi-cell algorithm~\citep{MV13} through the ADRS sampler~\citep{ADRS15} and the simpler
pair-and-average algorithm~\citep{AS18}.  Above-smoothing DGS and BDD also yield time--space and
classical--quantum tradeoffs~\citep{CCL18,ACKS25}; faster bounds are known for random
lattices~\citep{PoulyShen26}.  Our secondary exact-SVP result instead uses lattice
sparsification~\citep{DadushKun13}: the superlattice is chosen from its dual Gaussian mass, and the
two lattices are compared at different scales.

For exact Euclidean CVP, the relevant distinction is between unrestricted targets and targets close
to the lattice.  Micciancio and Voulgaris gave a deterministic $\widetilde O(4^n)$-time,
$\widetilde O(2^n)$-space algorithm~\citep{MV13}; Aggarwal, Dadush, and Stephens-Davidowitz obtained
$2^{n+o(n)}$ time and space using shifted discrete Gaussians~\citep{ADS15}.  Aggarwal and
Stephens-Davidowitz later simplified the sampling step while retaining the exact-CVP
recursion~\citep{AS18}.  In the close-target regime, ADRS solve $(0.422-o(1))$-BDD in
$2^{n/2+o(n)}$ time, while Dadush, Regev, and Stephens-Davidowitz study $\alpha$-BDDP and reductions
for approximate CVP~\citep{ADRS15,DRS14}.  Under SETH, exact $\CVP_p$ has no
$2^{(1-\varepsilon)n}$-time algorithm for finite $p\in[1,\infty)\setminus2\mathbb Z$; for $p=2$,
ABGS instead prove a barrier for a specified class of natural reductions~\citep{ABGS21}.  Abboud
and Kumar obtain a sub-$2^n$ algorithm for $(0,1)$-CVP under a coordinate-size condition~\citep{AK25}.
Our CVP result covers every rational lattice and target satisfying the stated distance
bound.

\paragraph{Concurrent work.}
Gao, Feng, and Hu independently obtain an exact-SVP algorithm in time
$2^{0.7314n+o(n)}$ and space $2^{n/2+o(n)}$ by sampling above smoothing on a
prime-index superlattice and keeping the samples in the input lattice~\citep{GFH26}.
Our exact-SVP consequence is similar, with a slightly larger exponent of approximately $0.7315$.
The two analyses choose the superlattice differently.  Gao, Feng, and Hu tune a prime to a guessed
shortest-vector scale and separate the dual mass in $pL^*$ from the remaining mass.  We use the
kernel of a homomorphism from $L^*$ to $\Z/2^m\Z$, choosing $m$ from the dual Gaussian mass at the
given parameter.  Our main theorem treats arbitrary-parameter one-sample DGS and also proves
tightness of the Gaussian-mass comparison used in its analysis.

Hhan gives a different exact-SVP algorithm, using the Hessian of the periodic Gaussian at the
midpoint of a shortest vector to recover that vector.  The algorithm runs in expected time
$2^{0.603867n+o(n)}$ and space $2^{n/2+o(n)}$~\citep[Theorem~1.1]{Hhan26}.  Combined
with the dimension-preserving reduction of Stephens-Davidowitz~\citep[Theorem~4.6]{Ste16}, this
yields arbitrary-parameter centered DGS with the same exponential running time and
inverse-polynomial statistical error.  For DGS, its exponent is larger than $1/2$, and the
reduction gives weaker statistical error than $\exp(-\Omega(n^3))$.  Our algorithm obtains one
sample directly from the Gaussian-mass comparison whose $2^{n/2}$ dependence is tight.

\subsection{Technical overview}

The proof is organized around one quantitative question: can a superlattice that is smooth enough
for the fast ADRS sampler still contain points of $L$ often enough?  The same dual Gaussian mass controls
both the smoothness of a random superlattice and the loss from restricting its samples back to $L$.

\paragraph{Rejection and Gaussian mass.}
Suppose that $\Lambda\supseteq L$ is a superlattice from which we can sample at parameter $s$.
For an ideal sample $X\sim D_{\Lambda,s}$,
\[
 \Pr[X\in L]=\frac{\rho_s(L)}{\rho_s(\Lambda)},
 \qquad
 \Pr[X=x\mid X\in L]=\frac{e^{-\pi\norm{x}^2/s^2}}{\rho_s(L)}.
\]
Thus rejection from $\Lambda$ has the desired conditional distribution, and its cost is determined
by $\rho_s(\Lambda)/\rho_s(L)$.  We must choose $\Lambda$ smooth enough for the fast ADRS sampler
while keeping
\[
 \frac{\rho_s(\Lambda)}{\rho_s(L)}\le2^{n/2}\poly(n),
\]
so that the $2^{n/2}$ samples returned by one call include a point of $L$ with
inverse-polynomial probability.

The smooth ADRS sampler cannot in general be run directly on $L$ when $s$ is below smoothing.
The arbitrary-parameter sampler reaches this range by repeatedly transforming a large list, and
asking for only one final sample does not remove its $2^{n+o(n)}$ cost.  We instead set
$t=s/\sqrt2$, construct a random superlattice that is smooth at $t$ with constant probability, and
run the ADRS sampler on it at
$s=\sqrt2t$.

\paragraph{Constructing a smooth superlattice.}
Smoothness is controlled by short vectors in the dual.  Write
$\Theta_t:=\rho_{1/t}(L^*)$.  Following the lattice-sparsification method of Dadush and
Kun~\citep{DadushKun13}, choose the kernel $M_z\subseteq L^*$ of a random homomorphism to
$\Z/q\Z$, where $q=2^m$, and put $L_z=M_z^*\supseteq L$.  A nonzero dual vector satisfies the
random linear condition with probability roughly $1/q$; vectors divisible by powers of two require the
more careful estimate in \Cref{lem:inclusion}.  When $q$ is within a polynomial factor of
$\Theta_t$, the expected nonzero Gaussian mass remaining in $M_z$ is a constant.  Consequently,
with constant probability,
\[
 \rho_{1/t}(M_z\setminus\{0\})\le\frac14,
\]
and $L_z$ is smooth at parameter $t$.

The value $\Theta_t$ need not be computed.  From the rational input $(B,t^2)$ we obtain a
polynomial-size range containing the relevant exponent $m$, try every value in that range, and use
a fresh random kernel each time.  This remains efficient even if $[L_z:L]=2^m$ is large: a
rational basis of $L_z$ needs only $O(m)$ additional bits, and membership in $L$ is tested in the
original coordinates.

\paragraph{The mass comparison.}
The same dual mass that determines the index also controls the rejection probability.  For the
distinguished value $m_*(L,t)$, Poisson summation gives
\begin{equation}
\label{eq:intro-mass-comparison}
 \rho_{\sqrt2t}(L_z)
 \le 40(2m-1)\,2^{n/2}\rho_t(L).
\end{equation}
To see where the exponent comes from, Poisson summation writes the left-hand side as the product
of the index, $(\sqrt2t)^n/\det L$, and a bounded dual mass.  The choice of index is governed by
\[
 \Theta_t=\rho_{1/t}(L^*)=\frac{\det L}{t^n}\rho_t(L).
\]
After substitution, the determinant and the power of $t$ cancel.  The only exponential loss is
$(\sqrt2)^n=2^{n/2}$; the factor $2m-1$ accounts for the possible powers of two dividing a dual
coefficient vector.  Since $\rho_s(L)\ge\rho_t(L)$, \eqref{eq:intro-mass-comparison}
implies that one sample from $D_{L_z,s}$ lies in $L$ with probability at least
$1/(40(2m-1)2^{n/2})$.

The factor $2^{n/2}$ in this comparison is tight.  Every superlattice that is smooth at $t$ has
Gaussian mass at parameter $\sqrt2t$ at least $\frac23 2^{n/2}\rho_t(L)$.  This statement concerns
the comparison with $\rho_t(L)$; it is not a lower bound on the acceptance probability or running
time of every rejection sampler.

\paragraph{Recovering one sample.}
An ADRS call on $L_z$ at parameter $s$ returns $2^{n/2}$ samples.  This list size cancels the
$2^{-n/2}$ membership probability, so the list contains a point of $L$ with inverse-polynomial
probability.  We scan the list and return its first point in $L$.  Conditional on success, the
displayed identity at the start of the overview gives exactly $D_{L,s}$.

The unknown modulus and the approximate ADRS output require a final wrapper.  The algorithm does
not know which value of $m$ gives the mass comparison, and a call may return a variable-length list
or $\bot$.  We try
all admissible values of $m$ and repeat the resulting finite experiment until it succeeds.
\Cref{lem:first-success} shows that the first accepted point is still distributed as $D_{L,s}$, while
\Cref{sec:precision} implements the experiment with fair random bits and total statistical error
$\exp(-\Omega(n^3))$.

\paragraph{Shifted targets.}
The construction also applies after adjoining a target coordinate, but an additional mass ratio
appears.  Given a target $y$ and a scale $u>0$, form the one-dimensional extension
\[
 \widetilde\Lambda
   =\{(x-ky,ku/\sqrt2):x\in L_z,\ k\in\Z\}.
\]
Conditioning a sample from $D_{\widetilde\Lambda,\sqrt2u}$ on $k=1$ and $x\in L$ gives exactly
$D_{L-y,\sqrt2u}$.  Before conditioning, its probability is exactly
\[
 e^{-\pi/4}\frac{\rho_{\sqrt2u}(L-y)}{\rho_{\sqrt2u}(\widetilde\Lambda)}.
\]
The mass comparison lower-bounds this probability by the ratio
$\rho_{\sqrt2u}(L-y)/\rho_u(L)$ times $2^{-n/2}$ and a polynomial factor.  This ratio can be
arbitrarily small, so the construction does not
give arbitrary-parameter shifted DGS in $2^{n/2+o(n)}$ time.  For the CVP application, however, it
suffices to lower-bound the weight of one closest vector.  The distance guarantee and the Gaussian-mass
bound in \Cref{lem:adrs-mass} make that probability explicit; optimizing $u$ yields
\Cref{thm:cvp-alpha}.

\section{Preliminaries}

A full-rank lattice in $\R^n$ is written
$L=L(B):=B\Z^n$, where the columns of the nonsingular matrix $B$ form a basis.  We use
$\ip xy$ and $\norm x$ for the Euclidean inner product and norm, write
$\det L:=|\det B|$, and identify the dual lattice with
$L^*=B^{-T}\Z^n$.  For a discrete set $A\subseteq\R^n$ and $s>0$, write
\[
\rho_s(A)=\sum_{x\in A}e^{-\pi\norm{x}^2/s^2},
\qquad
D_{L,s}(x)=\frac{e^{-\pi\norm{x}^2/s^2}}{\rho_s(L)}\quad(x\in L).
\]
Thus $D_{L,s}$ assigns probability proportional to $e^{-\pi\norm{x}^2/s^2}$ to each $x\in L$.
The smoothing parameter is
$\eta_\eps(L)=\inf\{s>0:\rho_{1/s}(L^*\setminus\{0\})\le\eps\}$.
We write $\lambda_i(L)$ for the $i$th successive minimum; in particular, $\lambda_1(L)$ is the
length of a shortest nonzero vector.  For a point $y\in\R^n$,
$\dist(y,L):=\min_{x\in L}\norm{y-x}$.  We write $d_{\mathrm{TV}}(P,Q)$ for total variation
distance and use $\bot$ for an aborted sampler call.  For $a\in\R$, put
$[a]_+:=\max\{0,a\}$.
If $L\subseteq\Lambda$ are full-rank lattices, then
$[\Lambda:L]=\det L/\det\Lambda$ denotes their index.
For a rank-$r$ lattice, put $N_r:=\lceil2^{r/2}\rceil$.
We repeatedly use the Poisson summation identity
\begin{equation}
\label{eq:poisson}
\rho_s(L)=\frac{s^n}{\det L}\rho_{1/s}(L^*).
\end{equation}

\paragraph{Representation and model.}
The input consists of a nonsingular rational basis $B$ and a positive rational squared scale
$\sigma=s^2$; the CVP application additionally receives a rational target $y$.  For a rational
object $u$, write $\operatorname{bits}(u)$ for the length of its standard binary encoding, and set
$\beta_{\mathrm{in}}:=\operatorname{bits}(B)$,
$\beta_s:=\operatorname{bits}(\sigma)$, and
$\beta_y:=\operatorname{bits}(y)$.  Running time counts bit operations, and space is the maximum
number of work-tape bits used on any execution.  Lattice membership and norm comparisons use exact
rational arithmetic; \Cref{sec:precision} implements the real-valued sampling probabilities by
certified intervals.

\begin{lemma}
\label{lem:dilation}
For every lattice $\Lambda$, integer $k\ge1$, and $s>0$,
\[
\rho_s(k\Lambda\setminus\{0\})\le\frac1k\rho_s(\Lambda\setminus\{0\}).
\]
\end{lemma}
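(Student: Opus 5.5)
The plan is to reduce the inequality to a one-dimensional statement along each line through the origin, and then use only that $r\mapsto e^{-\pi r^2/s^2}$ is decreasing in $r\ge0$. No Poisson summation is needed.

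First, decompose $\Lambda\setminus\{0\}$ into rays. Call $v\in\Lambda\setminus\{0\}$ \emph{primitive} if $v\notin j\Lambda$ for every integer $j\ge2$. Every nonzero $x\in\Lambda$ can be written uniquely as $x=jv$ with $v$ primitive and $j\ge1$ an integer. Existence follows because the multiples of $x/\norm{x}$ lying in $\Lambda$ form a discrete subgroup of a line, hence a cyclic group. So
\[
\Lambda\setminus\{0\}=\bigsqcup_{v\text{ primitive}}\{jv:j\ge1\},
\qquad
k\Lambda\setminus\{0\}=\bigsqcup_{v\text{ primitive}}\{kjv:j\ge1\},
\]
where the second decomposition is the first one multiplied by $k$, and both unions are disjoint.

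Next, fix a primitive $v$ and set $f(i):=e^{-\pi i^2\norm{v}^2/s^2}$, which is nonincreasing in $i\ge1$. For each $j\ge1$, the $k$ integers $i\in\{k(j-1)+1,\dots,kj\}$ all satisfy $i\le kj$, so $f(kj)\le f(i)$ for each of them. Hence
\[
k\,f(kj)\le\sum_{i=k(j-1)+1}^{kj}f(i).
\]
These blocks of indices partition $\{1,2,\dots\}$ as $j$ ranges over $j\ge1$. Summing over $j$ therefore gives $k\sum_{j\ge1}f(kj)\le\sum_{i\ge1}f(i)$, that is, $k\,\rho_s(\{kjv:j\ge1\})\le\rho_s(\{iv:i\ge1\})$.

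Finally, sum this inequality over all primitive $v$ and use the two disjoint decompositions above; this yields $k\,\rho_s(k\Lambda\setminus\{0\})\le\rho_s(\Lambda\setminus\{0\})$. All series involved have nonnegative terms and are finite, since lattice Gaussian sums converge, so rearranging them is justified.

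The only step requiring any care is the ray decomposition, specifically that each nonzero lattice vector has a unique primitive "root" on its ray. The cyclic-subgroup argument settles it, so I expect no real obstacle.
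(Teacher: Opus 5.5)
Your proposal is correct and follows essentially the same route as the paper's proof: decompose $\Lambda\setminus\{0\}$ into primitive rays, use $k\,a_{kj}\le\sum_{\ell=(j-1)k+1}^{jk}a_\ell$ for the nonincreasing Gaussian weights along each ray, and sum over $j$ and all rays. The paper's ``both signs'' is handled in your version by counting $v$ and $-v$ as distinct primitive vectors.
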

\begin{proof}
On each primitive ray, the Gaussian weights form a nonincreasing sequence $(a_j)_{j\ge1}$ and
$k a_{kj}\le\sum_{\ell=(j-1)k+1}^{jk}a_\ell$.  Sum over $j$, both signs, and all primitive rays.
\end{proof}

\begin{lemma}[{\citealp[Lemma~2.3]{ADRS15}}]
\label{lem:gaussian-scaling}
For every lattice $\Lambda$, $s>0$, and $\gamma\ge1$,
\[
\rho_{\gamma s}(\Lambda)\le\gamma^n\rho_s(\Lambda).
\]
\end{lemma}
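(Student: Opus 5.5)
Poisson summation \eqref{eq:poisson} reduces the claim to a monotonicity statement on the dual side. Applying it at both scales gives
\[
\rho_{\gamma s}(\Lambda)=\frac{(\gamma s)^n}{\det\Lambda}\rho_{1/(\gamma s)}(\Lambda^*),
\qquad
\rho_{s}(\Lambda)=\frac{s^n}{\det\Lambda}\rho_{1/s}(\Lambda^*).
\]
Dividing the first identity by the second, the inequality $\rho_{\gamma s}(\Lambda)\le\gamma^n\rho_s(\Lambda)$ is equivalent to
\[
\rho_{1/(\gamma s)}(\Lambda^*)\le\rho_{1/s}(\Lambda^*).
\]

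This dual inequality holds term by term. Since $\gamma\ge1$, we have $1/(\gamma s)\le 1/s$. For every $w\in\Lambda^*$ this gives
\[
e^{-\pi\gamma^2s^2\norm{w}^2}\le e^{-\pi s^2\norm{w}^2},
\]
because the exponent $-\pi r^{-2}\norm{w}^2$ is nondecreasing in the parameter $r$. Summing over $\Lambda^*$ gives the dual inequality, and substituting back into the two identities gives $\rho_{\gamma s}(\Lambda)=\gamma^n\frac{s^n}{\det\Lambda}\rho_{1/(\gamma s)}(\Lambda^*)\le\gamma^n\rho_s(\Lambda)$.

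The main point is to choose the dual side. On the primal side the inequality runs the wrong way pointwise: each term $e^{-\pi\norm{x}^2/s^2}$ increases with $s$. Passing to the dual turns the scale change into an explicit factor $\gamma^n$ times a sum whose terms decrease with $\gamma$. The only bookkeeping is to apply Poisson summation with the correct rank. For a full-rank $\Lambda\subseteq\R^n$ the factor is $s^n/\det\Lambda$, and for a rank-$r$ lattice one works in its span, where the factor $\gamma^r\le\gamma^n$ still gives the claim.
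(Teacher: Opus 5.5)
Your proof is correct: Poisson summation at both scales turns the claim into $\rho_{1/(\gamma s)}(\Lambda^*)\le\rho_{1/s}(\Lambda^*)$, which holds term by term since $\gamma\ge1$, and your remark that a rank-$r$ lattice gives $\gamma^r\le\gamma^n$ is also right. The paper gives no proof of its own and simply cites ADRS Lemma~2.3; your argument is the standard Poisson-summation proof of that lemma, and it is the same dual-side technique the paper uses for Lemma~\ref{lem:shifted-mass}.
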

\begin{lemma}
\label{lem:shifted-mass}
For every lattice $\Lambda\subseteq\R^r$, shift $y\in\R^r$, and $s>0$,
$\rho_s(\Lambda-y)\le\rho_s(\Lambda)$.
\end{lemma}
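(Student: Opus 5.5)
We will prove \Cref{lem:shifted-mass} by applying Poisson summation to the shifted coset, as in \eqref{eq:poisson}. For a shift $y\in\R^r$, the function $x\mapsto e^{-\pi\norm{x-y}^2/s^2}$ has Fourier transform $w\mapsto s^r e^{-\pi s^2\norm{w}^2}e^{-2\pi i\ip{w}{y}}$. Poisson summation over $\Lambda$ therefore gives
\[
\rho_s(\Lambda-y)=\frac{s^r}{\det\Lambda}\sum_{w\in\Lambda^*}e^{-\pi s^2\norm{w}^2}e^{2\pi i\ip{w}{y}},
\]
up to the sign convention in the exponent, which is irrelevant here.

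The key step is to bound the right-hand side by its absolute value. Each summand has modulus $e^{-\pi s^2\norm{w}^2}$, so the triangle inequality gives
\[
\rho_s(\Lambda-y)\le\frac{s^r}{\det\Lambda}\rho_{1/s}(\Lambda^*).
\]
By \eqref{eq:poisson} in dimension $r$, this bound equals $\rho_s(\Lambda)$.

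If $\Lambda$ is not full rank in $\R^r$, we first write $y=y_\parallel+y_\perp$, where $y_\parallel$ lies in $\operatorname{span}\Lambda$ and $y_\perp$ is orthogonal to it. Then $\rho_s(\Lambda-y)=e^{-\pi\norm{y_\perp}^2/s^2}\rho_s(\Lambda-y_\parallel)\le\rho_s(\Lambda-y_\parallel)$. We then apply the argument above inside the span, using the rank of $\Lambda$ in place of $r$.

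The only delicate point is justifying Poisson summation for the shifted Gaussian, which is standard because it is a Schwartz function. Alternatively, one can note that the dual-side sum is real and nonnegative termwise in modulus. There is no substantial obstacle.
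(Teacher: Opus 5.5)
Your proposal is correct and follows the paper's own proof: apply Poisson summation to the shifted Gaussian, then use the triangle inequality to drop the phases, which leaves exactly $\rho_s(\Lambda)$. Your reduction of the rank-deficient case by splitting $y$ into components inside and orthogonal to $\operatorname{span}\Lambda$ is a harmless extra, since the paper only uses the lemma for full-rank lattices.
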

\begin{proof}
Poisson summation writes $\rho_s(\Lambda-y)$ as
$s^r(\det\Lambda)^{-1}\sum_{w\in\Lambda^*}e^{-\pi s^2\norm w^2}e^{2\pi i\ip wy}$.
The triangle inequality bounds its absolute value by the same sum with every phase removed, which
is $\rho_s(\Lambda)$.
\end{proof}
\begin{lemma}[{\citealp[Lemma~2.13]{DRS14}}]
\label{lem:dgs-tail}
Let $X\sim D_{L,s}$ and $T\ge1/\sqrt{2\pi}$.  Then
\[
\Pr[\norm X>Ts\sqrt n]
<
\left(\sqrt{2\pi e}\,T e^{-\pi T^2}\right)^n.
\]
In particular, taking $T=n$ makes the probability $\exp(-\Omega(n^3))$.
\end{lemma}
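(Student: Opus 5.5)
The plan is the standard Banaszczyk-type comparison of $\rho_s$ with $\rho_{\sigma s}$ for a well-chosen $\sigma\ge1$, combined with \Cref{lem:gaussian-scaling}. Put $r:=Ts\sqrt n$. Then $\Pr[\norm X>r]=\rho_s(L\setminus rB)/\rho_s(L)$, where $B$ is the closed unit ball. So it suffices to show
\[
\rho_s(L\setminus rB)<\left(\sqrt{2\pi e}\,Te^{-\pi T^2}\right)^n\rho_s(L).
\]
If $L\setminus rB$ is empty, this holds trivially because the right-hand side is positive. So assume it is nonempty.

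\emph{Step 1 (pointwise comparison).} Fix $\sigma\ge1$. For every $x$, split the Gaussian weight as
\[
e^{-\pi\norm x^2/s^2}=e^{-\pi\norm x^2/(\sigma s)^2}\cdot e^{-\pi(1-\sigma^{-2})\norm x^2/s^2}.
\]
For $\norm x>r$, the second factor is strictly less than $e^{-\pi(1-\sigma^{-2})T^2n}$, provided $\sigma>1$. Summing over $x\in L\setminus rB$ and then enlarging the sum to all of $L$ gives
\[
\rho_s(L\setminus rB)<e^{-\pi(1-\sigma^{-2})T^2 n}\rho_{\sigma s}(L).
\]

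\emph{Step 2 (rescaling).} Apply \Cref{lem:gaussian-scaling} with $\gamma=\sigma$ to get $\rho_{\sigma s}(L)\le\sigma^n\rho_s(L)$. Combining with Step 1,
\[
\rho_s(L\setminus rB)<\bigl(\sigma e^{-\pi T^2}e^{\pi T^2/\sigma^2}\bigr)^n\rho_s(L).
\]

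\emph{Step 3 (optimize $\sigma$).} Choose $\sigma^2=2\pi T^2$. This is admissible, i.e.\ $\sigma\ge1$, exactly when $T\ge1/\sqrt{2\pi}$, which is the hypothesis. With this choice $e^{\pi T^2/\sigma^2}=e^{1/2}$, so the base becomes
\[
\sqrt{2\pi}\,T\,e^{1/2}e^{-\pi T^2}=\sqrt{2\pi e}\,Te^{-\pi T^2},
\]
which is the claimed bound.

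\emph{Boundary case.} At $T=1/\sqrt{2\pi}$ we have $\sigma=1$, and Step 1 loses strictness. There the claimed base equals $\sqrt e\,e^{-1/2}=1$, so the bound reads $\Pr[\norm X>r]<1$. This holds because $0\in L$ has positive mass and $\norm 0\le r$.

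\emph{The ``in particular''.} Taking $T=n$ gives the bound $(\sqrt{2\pi e}\,n)^ne^{-\pi n^3}=\exp(-\pi n^3+O(n\log n))=\exp(-\Omega(n^3))$.

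The only step needing care is the choice of $\sigma$ in Step 3 together with checking that $\sigma\ge1$, so that \Cref{lem:gaussian-scaling} applies. Once that choice is made, the rest is routine.
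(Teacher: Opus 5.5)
Your proof is correct and is essentially the standard argument: the paper gives no proof of its own and only cites \citealp[Lemma~2.13]{DRS14}, which rests on Banaszczyk's tail bound, and that bound is proved exactly as you do, by comparing $\rho_s$ with $\rho_{\sigma s}$, applying \Cref{lem:gaussian-scaling}, and choosing $\sigma=\sqrt{2\pi}\,T$ (the cited version also allows a shift $L-u$, handled by inserting \Cref{lem:shifted-mass} before rescaling). Your separate treatment of $T=1/\sqrt{2\pi}$ is fine. Strictness there also follows directly from the weight of the point $0$ when you enlarge the sum from $L\setminus rB$ to all of $L$.
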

Below smoothing, the ADRS sampler may return a shorter list.  The following definition records the
distributional guarantee that remains valid.

\begin{definition}[{\citealp[Definition~5.1]{ADRS15}}]
\label{def:variable-dgs}
Let $\nu\ge0$, let $\varphi$ map lattices to nonnegative reals, and let $N\in\Z_{>0}$.  On input
$(L,s)$, an algorithm satisfies this guarantee if its output is within statistical distance
$\nu$ of the following experiment: first choose $M\in\Z_{\ge0}$ independently of the sample
values, and then output $M$ independent samples from $D_{L,s}$.  If $s>\varphi(L)$, then $M=N$.
\end{definition}

The independence of $M$ is essential below.  It allows us to take the subsequence lying in a
sublattice even when the call is made below the threshold.

\begin{theorem}[{\citealp[Theorem~5.11]{ADRS15}}]
\label{thm:adrs-smooth-sampler}
Let $N=N_n$ and $\kappa=\Omega(n)$.  There is an algorithm satisfying
\Cref{def:variable-dgs} on rank-$n$ lattices with statistical distance $\exp(-\Omega(\kappa))$ and threshold
$\varphi(L)=\sqrt2\,\eta_{1/2}(L)$.  In the ADRS arithmetic-cost model, its running time and
space are $2^{n/2+\polylog(\kappa)+o(n)}$ up to polynomial factors.
\end{theorem}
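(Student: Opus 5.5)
Since the statement is quoted from ADRS (their Theorem~5.11), the proof I would give reconstructs their above-smoothing sampler as a \emph{pair-and-average} procedure. The target is a procedure whose pairing decisions depend only on coset labels, never on the sample values. That design gives \Cref{def:variable-dgs} for free below the threshold and the full list size $N=N_n$ above it.

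\textbf{Step 1 (initial samples).} I would choose a sublattice $L'\subseteq L$ with $[L:L']=2^{\ell}$ for suitable $\ell\le n/2$. I would then produce $2^{n/2}\poly(n)$ independent samples from $D_{L,s_0}$ at a very large parameter $s_0=2^{O(n)}s$. This can be done in polynomial time per sample by a Klein/GPV-type sampler from an LLL-reduced basis. The statistical error is $\exp(-\Omega(\kappa))$ once $s_0$ exceeds the basis-length bound by a $\sqrt\kappa$ factor.

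\textbf{Step 2 (one combination step).} The basic identity is the following. If $X,Y\sim D_{L,\sigma}$ are independent and conditioned on lying in the same coset $c+L'$, then $(X+Y)/2$, suitably conditioned (or $X+Y$ after rescaling the lattice), is distributed exactly as a discrete Gaussian at parameter $\sigma/\sqrt2$ on the appropriate lattice. This follows by completing the square: $\norm{x}^2+\norm{y}^2=\tfrac12\norm{x+y}^2+\tfrac12\norm{x-y}^2$, and for fixed $x+y$ the sum over $x-y$ in the relevant coset is a shifted Gaussian mass on a dilate of $L'$. That inner mass is independent of the shift up to a factor $1\pm\eps$ exactly when $\sigma/\sqrt2$ is above $\eta_\eps$ of that dilate.

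This is where the threshold $\varphi(L)=\sqrt2\,\eta_{1/2}(L)$ enters. With $\eps=1/2$ the coset masses differ by at most a factor of $3$, so a rejection step with constant acceptance rate corrects the bias exactly. Since the correction factor depends only on the coset, rejection keeps the conditional law exactly Gaussian.

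\textbf{Step 3 (counting and iteration).} I would bucket the current list by coset of $L'$ and pair within buckets. With $2^{\ell}\le2^{n/2}$ buckets and a list of size $2^{n/2}\poly(n)$, a Chernoff bound on near-uniform bucket occupancies (uniform up to a factor of $3$ above smoothing) shows that each step loses only a constant fraction. I would iterate $O(n)$ times, halving the squared parameter each time, until reaching $s$. The $\poly(n)$ initial slack absorbs the constant-factor losses and leaves $N_n$ outputs. Each step costs $2^{n/2}\poly(n)$ arithmetic operations, and the arithmetic precision adds the $\polylog(\kappa)$ factor.

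\textbf{Main obstacle (below-threshold guarantee).} When $s\le\varphi(L)$ the bucket counts may be badly skewed, and I must still prove that the output is $M$ i.i.d.\ samples with $M$ independent of their values. I would handle this by first sampling all coset labels, then fixing the pairing and rejection decisions as functions of the labels alone. Conditional on the labels, each surviving average is an exact discrete Gaussian on its coset, and distinct pairs are independent. An inductive argument over the steps, using that the final parameter's output lattice absorbs the coset structure, then shows that conditional on $M$ the outputs are i.i.d.\ $D_{L,s}$. I expect this inductive step to be the hardest part of the argument. The $\exp(-\Omega(\kappa))$ total error then comes only from the initial sampler and finite precision.
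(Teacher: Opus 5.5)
\textbf{There is a genuine gap.} The paper does not reprove this statement. It imports it from ADRS and only records the shape of their procedure in the proof of \Cref{thm:adrs-rational}. Measured against that, your Steps~2--3 fail: the iteration cannot produce $D_{L,s}$ from a list of size $2^{n/2}\poly(n)$, for two reasons.

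\emph{The passes never return to $L$ or make progress.} Pairing inside cosets of a fixed $L'$ with $[L:L']=2^{\ell}\le 2^{n/2}$ does not keep you on $L$. If $2L\subseteq L'$, then $(X+Y)/2$ lies in $L'/2$, of covolume $2^{\ell-n}\det L$; alternatively $X+Y$ lies in $L'$, of covolume $2^{\ell}\det L$. So, starting on $L$ as in Step~1, you never return to $L$. Only pairing modulo $2L$, with $2^n$ buckets, returns to $L$, and that is the $2^{n}$-time sampler. More fundamentally, the problem is scale invariant, and each such pass multiplies $\sigma/(\det)^{1/n}$ by $2^{1/2-\ell/n}\ge 1$. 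With at most $2^{n/2}$ buckets, the passes therefore never move toward the smoothing regime.

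\emph{The counting fails.} Every output consumes two inputs. Descending from an LLL-based $s_0=2^{\Theta(n)}s$ takes $\Theta(n)$ passes, which shrink the list by $2^{-\Theta(n)}$. A constant loss per pass compounds, so no $\poly(n)$ slack can absorb it.

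ADRS avoid both problems as follows:
\begin{itemize}
\item They use a tower $\Lambda_0,\ldots,\Lambda_\ell=L$ of consecutive index $2^{a}$ with $a=\lceil n/2+Cn/\log n\rceil$. This gives strictly more than $2^{n/2}$ buckets, so each pass gains a factor $2^{-\Theta(1/\log n)}$ in normalized scale.
\item They draw the initial samples on $\Lambda_0$ from a basis of quasi-polynomial quality computed in $2^{O(n/\log n)}$ time (\Cref{thm:adrs-reduction} with $r\approx n/\log n$). Hence only $\ell=O(\log^4 n)$ passes are needed.
\item They start from a list of length $(C\kappa^4)^{\ell+1}2^{a}$. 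This list blow-up over polylogarithmically many passes is the source of the $2^{\polylog(\kappa)}$ factor, not arithmetic precision as you suggest.
\end{itemize}

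\textbf{Your rejection step is not implementable as stated.} The correction factor in Step~2 is the Gaussian mass of a coset of an $n$-dimensional lattice. That is a theta-function value, and your procedure has no way to evaluate it within the time bound. By contrast, the ADRS combiner, as the paper describes it, uses only coset counts, pairings, permutations, and Bernoulli/Poisson draws on the list. Without the correction, an average formed inside a bucket $c$ has law proportional to $\rho_{\sigma/\sqrt2}(z)\,\rho_{\sigma/\sqrt2}(c-z)$. This is the discrete Gaussian on the output lattice only if the second factor is constant in $z$, that is, only above smoothing. Your below-threshold induction (``each surviving average is an exact discrete Gaussian on its coset'') rests on exactly this step. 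So the variable-length guarantee of \Cref{def:variable-dgs}, which the rest of the paper relies on at the incorrect moduli, remains unproved by your argument.
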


For rational inputs, \Cref{thm:adrs-rational} gives a randomized Turing-machine implementation with
the same exponential rate, polynomial dependence on the input length, and the stated space bound on
every execution.

We also use the standard hybrid bound for adaptive calls.

\begin{lemma}
\label{lem:adaptive-tv}
Consider a randomized computation making at most $K$ oracle calls.  The input to a call may depend
on the preceding transcript.  If, for every transcript prefix, the actual distribution of call
$i$ is within total variation distance $\nu_i$ of its ideal distribution, then the actual and ideal
complete transcripts are within $\sum_{i=1}^K\nu_i$.
\end{lemma}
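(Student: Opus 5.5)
We prove \Cref{lem:adaptive-tv} by the standard hybrid argument, coupling one call at a time. For $j=0,1,\dots,K$ let $H_j$ be the hybrid transcript distribution in which calls $1,\dots,j$ are answered by the ideal distribution and calls $j+1,\dots,K$ by the actual one, with every call's input computed from the transcript prefix produced so far. Then $H_K$ is the ideal complete transcript and $H_0$ is the actual one. By the triangle inequality for total variation,
\[
d_{\mathrm{TV}}(H_0,H_K)\le\sum_{i=1}^K d_{\mathrm{TV}}(H_{i-1},H_i),
\]
so it suffices to show $d_{\mathrm{TV}}(H_{i-1},H_i)\le\nu_i$. If the computation makes fewer than $K$ calls on some branch, we pad it with dummy calls whose actual and ideal answers coincide.

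Fix $i$. The hybrids $H_{i-1}$ and $H_i$ agree on how the prefix before call $i$ is generated; call its common distribution $P$. They also use the same Markov kernel $T$, namely the actual behaviour of calls after $i$ together with the computation's own randomness, to extend a transcript that ends with the answer to call $i$. They differ only in the conditional law of the $i$th answer given the prefix $\pi$: it is $A_\pi$ in $H_{i-1}$ and $I_\pi$ in $H_i$, and by hypothesis $d_{\mathrm{TV}}(A_\pi,I_\pi)\le\nu_i$ for every $\pi$.

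We write each hybrid as the composite $P\otimes A\otimes T$ (respectively $P\otimes I\otimes T$) and use two standard facts. First, applying a common kernel does not increase total variation (data processing). Second, for a common first marginal, the distance between joints is at most the average of the conditional distances:
\[
d_{\mathrm{TV}}(P\otimes A,P\otimes I)=\mathbb{E}_{\pi\sim P}\,d_{\mathrm{TV}}(A_\pi,I_\pi)\le\nu_i .
\]
The displayed equality follows by summing $|P(\pi)A_\pi(a)-P(\pi)I_\pi(a)|$ over pairs $(\pi,a)$ and factoring out $P(\pi)$. Combining the two facts gives $d_{\mathrm{TV}}(H_{i-1},H_i)\le\nu_i$.

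The only step needing care is formalizing the hybrids when the inputs to later calls depend on earlier answers. This is handled by defining each hybrid sequentially through conditional kernels, as above, rather than as a product measure. Measurability is not an issue, because the transcripts of a Turing-machine computation are countable (the \Cref{sec:precision} setting), so every sum involved is over a countable set.
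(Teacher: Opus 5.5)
Your proof is correct and follows the paper's own argument. Both use the same hybrids (first $j$ calls ideal, the rest actual), bound each adjacent pair by averaging the per-prefix bound $\nu_i$ over the common prefix, use the fact that the shared continuation cannot increase total variation, and then apply the triangle inequality. Your explicit $P\otimes A\otimes T$ decomposition and the padding with dummy calls spell out what the paper's three-line proof leaves implicit.
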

\begin{proof}
For $0\le j\le K$, let the first $j$ calls be ideal and the remaining calls actual.  Two adjacent
hybrids differ only at call $j$.  Conditioned on their common prefix, this difference is at most
$\nu_j$; applying the same randomized continuation cannot increase total variation distance.
Average over prefixes and sum the adjacent distances.
\end{proof}

\section{Sampling below the smoothing parameter}

Given the requested sampling scale $s$, put $t=s/\sqrt2$.  The fast ADRS sampler cannot be run on
$L$ when $s\le\sqrt2\eta_{1/2}(L)$.  We therefore construct a random superlattice that is smooth at
$t$ with constant probability
and bound the Gaussian-mass loss incurred by restricting its samples back to $L$.

\subsection{Dual sparsification and the superlattice \texorpdfstring{$L_z$}{Lz}}

Let $D=B^{-T}$ be the dual basis, fix $t>0$ with rational square
$\tau=t^2$, and put $\Theta_t:=\rho_{1/t}(L^*)$.  For $L_z=M_z^*$ to be smooth at $t$, it suffices
to make the nonzero mass of $M_z\subseteq L^*$ at parameter $1/t$ small.  We obtain $M_z$ by
imposing one random linear condition modulo $q$ on the dual coefficients; powers of two provide an explicit sequence
of candidate indices.

Fix $q=2^m$.  Choose uniformly a primitive vector $z\in(\Z/q\Z)^n$, meaning that at least one
coordinate is odd, and define
\[
K_z=\{a\in\Z^n:\ip za\equiv0\pmod q\},
\qquad
M_z=DK_z\subseteq L^*,
\qquad
L_z=M_z^*\supseteq L.
\]

\begin{lemma}
\label{lem:kernel-basis}
For primitive $z$, the three indices satisfy $[\Z^n:K_z]=[L^*:M_z]=[L_z:L]=q$, and a basis of
$L_z$ is computable in $\poly(n,\beta_{\mathrm{in}},m)$ bit operations.
\end{lemma}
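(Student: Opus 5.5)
The plan is to compute the index $[\Z^n:K_z]$ directly, transfer it through the dual basis to get the other two indices, and then build a basis of $L_z$ explicitly from a basis of $K_z$.

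\textbf{Index of $K_z$.} Consider the map $\phi_z:\Z^n\to\Z/q\Z$, $a\mapsto\ip za \bmod q$. Its kernel is $K_z$, so $[\Z^n:K_z]=|\phi_z(\Z^n)|$. Since $z$ is primitive, some coordinate $z_i$ is odd and hence a unit modulo $q=2^m$. Then $\phi_z(z_i^{-1}e_i)=1$, so $\phi_z$ is surjective and $[\Z^n:K_z]=q$.

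\textbf{Transfer to $L^*$ and $L_z$.} The map $a\mapsto Da$ is a group isomorphism $\Z^n\to L^*$ carrying $K_z$ onto $M_z$, so $[L^*:M_z]=q$ and $\det M_z=q\det L^*=q/\det L$. Dualizing reverses inclusion, so $M_z\subseteq L^*$ gives $L=(L^*)^*\subseteq M_z^*=L_z$. Moreover $\det L_z=1/\det M_z=\det L/q$, hence $[L_z:L]=q$.

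\textbf{Computing a basis.} Choose an index $i$ with $z_i$ odd and let $w\equiv z_i^{-1}\pmod q$. An explicit basis of $K_z$ consists of $q e_i$ together with, for each $j\ne i$, the vector $e_j-(w z_j \bmod q)\,e_i$.
\begin{itemize}
\item Each of these $n$ vectors lies in $K_z$.
\item They form a triangular-type matrix with determinant $\pm q$.
\item Since $[\Z^n:K_z]=q$, they therefore generate all of $K_z$.
\end{itemize}
Call this integer matrix $C$, with entries of $O(m)$ bits. Then $M_z$ has basis $DC$, and $L_z=M_z^*$ has basis $(DC)^{-T}=(B^{-T}C)^{-T}=BC^{-T}$. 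Computing $C^{-T}$ (or solving the linear system) and multiplying by $B$ uses exact rational Gaussian elimination. This takes $\poly(n,\beta_{\mathrm{in}},m)$ bit operations, and the output entries have polynomially bounded bit length.

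\textbf{Main obstacle.} The only mildly delicate step is justifying that the triangular vectors generate all of $K_z$. The determinant argument above handles it. Everything else is routine bookkeeping of bit lengths under rational elimination, for which the standard Edmonds-type bounds suffice.
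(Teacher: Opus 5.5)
Your proposal is correct and follows the paper's proof. Both arguments get surjectivity of $a\mapsto\ip za\bmod q$ from an odd coordinate of $z$, use the same triangular basis $C_z$ of $K_z$ with $|\det C_z|=q$, and take $BC_z^{-T}$ as the basis of $L_z$; you additionally write out the steps the paper leaves implicit, namely why the listed vectors generate all of $K_z$ and how the index passes through $D$ and duality.
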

\begin{proof}
The map $a\mapsto\ip za\bmod q$ is surjective because one coordinate of $z$ is a unit modulo
$q$.  After permuting coordinates, assume $z_1$ is odd and let $u=z_1^{-1}\bmod q$.  The columns
\[
qe_1,
\qquad
e_j-(uz_j\bmod q)e_1\quad(2\le j\le n)
\]
form a basis matrix $C_z$ of $K_z$ with $|\det C_z|=q$.  Hence
$L_z=BC_z^{-T}\Z^n$.
\end{proof}

\begin{lemma}
\label{lem:inclusion}
Let $a\in\Z^n\setminus q\Z^n$, and let $2^j$, $0\le j<m$, be the largest common power of two
dividing its coordinates.  Then $\Pr_z[a\in K_z]\le2^{j+1}/q$.  Vectors in $q\Z^n$ are included
with probability one.
\end{lemma}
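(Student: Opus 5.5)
Write $a=2^j b$ with $b\in\Z^n$ having at least one odd coordinate; this is possible because $0\le j<m$ and $a\notin q\Z^n$. The condition $\ip za\equiv0\pmod{2^m}$ is then equivalent to $\ip zb\equiv0\pmod{2^{m-j}}$. This congruence depends only on $z\bmod 2^{m-j}$. The plan is therefore to (i) describe the distribution of $z\bmod 2^{m-j}$ when $z$ is uniform over primitive vectors in $(\Z/q\Z)^n$, and (ii) count the solutions of the reduced congruence.

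For (i), note that reduction $(\Z/2^m\Z)^n\to(\Z/2^{m-j}\Z)^n$ preserves primitivity when $m-j\ge1$, because primitivity is a condition modulo $2$. Every primitive residue class mod $2^{m-j}$ has exactly $2^{jn}$ lifts, and all of them are primitive. Hence $z'=z\bmod 2^{m-j}$ is uniform over the primitive vectors of $(\Z/2^{m-j}\Z)^n$. Set $q'=2^{m-j}$. It then suffices to show
\[
\Pr_{z'}[\ip{z'}b\equiv0\pmod{q'}]\le 2/q'=2^{j+1}/q .
\]

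For (ii), fix a coordinate $i$ with $b_i$ odd. Among all vectors $w\in(\Z/q'\Z)^n$, the map $w\mapsto\ip wb\bmod q'$ is surjective and its fibers have equal size. So exactly $q'^{\,n-1}$ vectors $w$ satisfy the congruence, since $w_i$ is determined uniquely by the other coordinates. The primitive vectors number $q'^{\,n}-(q'/2)^n=q'^{\,n}(1-2^{-n})\ge q'^{\,n}/2$. Bounding the number of primitive solutions by the number of all solutions gives
\[
\Pr\le \frac{q'^{\,n-1}}{q'^{\,n}/2}=\frac{2}{q'} .
\]

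The only step needing care is the conditioning on primitivity. It is what costs the factor $2$, and the crude bound above already absorbs it. The reduction step must also stay in the case $m-j\ge1$, which holds because $j<m$.

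Finally, if $a\in q\Z^n$, then $\ip za\equiv0\pmod q$ for every $z$, so such vectors are included with probability one.
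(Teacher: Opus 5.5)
Your proof is correct and is essentially the paper's argument. Both count solutions among all $z$ and then lose at most a factor $2$ because at least half of all vectors are primitive; the paper gets the unrestricted fraction $2^j/q$ directly from the image size $q/2^j$ of $z\mapsto\ip za$, whereas you first reduce modulo $2^{m-j}$ to get $1/q'$, and your bound $1-2^{-n}\ge1/2$ also covers $n=1$ without the paper's separate remark.
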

\begin{proof}
For uniform unrestricted $z$, the homomorphism $z\mapsto\ip za$ has image size $q/2^j$, so its
zero fraction is $2^j/q$.  Conditioning on primitive $z$ increases probabilities by less than
$2$ for $n\ge2$.  For $n=1$, a primitive $z$ is a unit modulo $q$, and the probability is zero.
\end{proof}

Let $m_*=m_*(L,t)$ be the least positive integer satisfying
\begin{equation}
\label{eq:mstar}
2^{m_*}\ge16(2m_*+1)\Theta_t,
\end{equation}
and put $q_*=2^{m_*}$.

\begin{theorem}
\label{thm:random-smoothing}
For $q=q_*$,
\[
\Pr_z\!\left[\rho_{1/t}(M_z\setminus\{0\})\le\frac14\right]\ge\frac34.
\]
Consequently, on this event, $\eta_{1/2}(L_z)<t$.
\end{theorem}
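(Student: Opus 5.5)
We bound the expectation $\mathbb E_z[\rho_{1/t}(M_z\setminus\{0\})]$ by $1/16$ and then apply Markov's inequality, which gives failure probability at most $1/4$. Write the nonzero dual vectors as $Da$ with $a\in\Z^n\setminus\{0\}$. Linearity of expectation gives
\[
\mathbb E_z\bigl[\rho_{1/t}(M_z\setminus\{0\})\bigr]=\sum_{a\neq0}\Pr_z[a\in K_z]\,e^{-\pi t^2\norm{Da}^2}.
\]
We split the sum according to the $2$-adic valuation $j$ of $a$, meaning the largest $2^j$ dividing all coordinates. The vectors with valuation exactly $j$ form a subset of $2^jL^*\setminus\{0\}$ in dual coordinates.

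For $0\le j<m_*$, \Cref{lem:inclusion} gives inclusion probability at most $2^{j+1}/q_*$. The mass of the relevant vectors is at most $\rho_{1/t}(2^jL^*\setminus\{0\})\le 2^{-j}\rho_{1/t}(L^*\setminus\{0\})\le 2^{-j}\Theta_t$ by \Cref{lem:dilation}. Each such level therefore contributes at most $2\Theta_t/q_*$, and the $m_*$ levels together contribute at most $2m_*\Theta_t/q_*$.

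For $j\ge m_*$, the vectors lie in $q_*\Z^n$ and are always included. Their total mass is at most $\rho_{1/t}(q_*L^*\setminus\{0\})\le\Theta_t/q_*$, again by \Cref{lem:dilation}. This contributes one more term, so the expectation is at most $(2m_*+1)\Theta_t/q_*$. By \eqref{eq:mstar} this is at most $1/16$.

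Markov's inequality now gives
\[
\Pr_z\bigl[\rho_{1/t}(M_z\setminus\{0\})>\tfrac14\bigr]\le\frac{1/16}{1/4}=\frac14,
\]
which is the first claim.

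For the consequence, note that $L_z^*=M_z$. On the event, $\rho_{1/t}(L_z^*\setminus\{0\})\le1/4<1/2$. The map $s'\mapsto\rho_{1/s'}(M_z\setminus\{0\})$ is continuous and nonincreasing in $s'$, and the value $1/4$ at $s'=t$ is strictly below $1/2$. Hence some $s'<t$ still satisfies the defining inequality, so $\eta_{1/2}(L_z)<t$, with strict inequality as claimed.

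The only delicate step is the dyadic bookkeeping. The $2^{j+1}$ loss in \Cref{lem:inclusion} has to be cancelled by the $2^{-j}$ gain from \Cref{lem:dilation}, and we must check that the level count $m_*$, together with the single always-included level, matches the factor $2m_*+1$ in \eqref{eq:mstar}.
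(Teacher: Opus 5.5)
Your proposal is correct and follows the paper's proof essentially step for step. You use the same partition by the largest power of two dividing the coefficient vector. You pair \Cref{lem:inclusion} with \Cref{lem:dilation} to bound the expectation by $(2m_*+1)\Theta_t/q_*\le 1/16$, then apply Markov's inequality, and finish with the same continuity and monotonicity argument to get $\eta_{1/2}(L_z)<t$.
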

\begin{proof}
Partition $\Z^n\setminus\{0\}$ according to the largest common power of two dividing their
coordinates.  Vectors divisible by $q$ have
dual Gaussian mass at most $\Theta_t/q$ by \Cref{lem:dilation}.  For $0\le j<m$, vectors divisible
by $2^j$ have mass at most $\Theta_t/2^j$, while \Cref{lem:inclusion} bounds their inclusion
probability by $2^{j+1}/q$.  Thus
\[
\mathbb E_z\rho_{1/t}(M_z\setminus\{0\})
\le
\frac{\Theta_t}{q}+
\sum_{j=0}^{m-1}\frac{2^{j+1}}q\frac{\Theta_t}{2^j}
=
\frac{(2m+1)\Theta_t}{q}
\le\frac1{16}.
\]
Markov's inequality gives the claimed probability.  The dual nonzero mass at parameter $1/t$
is then strictly below $1/2$, so continuity gives $\eta_{1/2}(L_z)<t$.
\end{proof}

The algorithm does not compute $\Theta_t$.  Exact rational linear algebra gives a rational lower
bound $\underline\lambda_B\le\sigma_{\min}(B^{-T})$.  Certified square-root arithmetic applied to
$\tau$ gives a positive rational $\underline u\le t\underline\lambda_B$ of polynomial bit length, and
hence $\Theta_t\le\rho_{1/\underline u}(\Z^n)\le(1+1/\underline u)^n$.
Define the explicit rational upper bound and its bit-length proxy by
\[
\overline\Theta_t:=\left(1+\frac1{\underline u}\right)^n,
\qquad
\ell_t:=\max\left\{0,\left\lceil\log_2\overline\Theta_t\right\rceil\right\},
\]
and set
\begin{equation}
\label{eq:explicit-J}
J_{\mathrm{mod}}(B,\tau):=\ell_t+2\left\lceil\log_2(\ell_t+2)\right\rceil+8.
\end{equation}
The two elementary bounds
\[
2^{J_{\mathrm{mod}}(B,\tau)}\ge256\overline\Theta_t(\ell_t+2)^2
\quad\text{and}\quad
2J_{\mathrm{mod}}(B,\tau)+1\le16(\ell_t+2)^2
\]
show that
$2^{J_{\mathrm{mod}}(B,\tau)}\ge16(2J_{\mathrm{mod}}(B,\tau)+1)\overline\Theta_t$.
Hence $m_*(L,t)\le J_{\mathrm{mod}}(B,\tau)$.  The bound is computable in polynomial time from
$(B,\tau)$ and is polynomial in the input length.  Although $q=2^m$ may be large, its binary
representation has only $O(m)$ bits; the algorithm never enumerates the quotient.

\subsection{Comparing the Gaussian masses of \texorpdfstring{$L_z$ and $L$}{Lz and L}}

Smoothness permits sampling from $L_z$, but it does not by itself ensure that a sample lies in
$L$.  The required probability is controlled by comparing the mass of $L_z$ at the sampling scale
$\sqrt2t$ with the mass of $L$ at parameter $t$.

Minimality in \eqref{eq:mstar} implies $m_*>1$ and
\begin{equation}
\label{eq:q-upper}
q_*=2^{m_*}<32(2m_*-1)\Theta_t.
\end{equation}

\begin{theorem}
\label{thm:mass-comparison}
On the good event of \Cref{thm:random-smoothing},
\[
\rho_{\sqrt2t}(L_z)
\le
40(2m_*-1)\,2^{n/2}\rho_t(L).
\]
\end{theorem}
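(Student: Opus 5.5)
We apply Poisson summation~\eqref{eq:poisson} to $L_z$ at parameter $\sqrt2t$, bound the resulting dual mass using the good event, and then rewrite the index $q_*$ through $\Theta_t$ and a second use of Poisson summation on $L$. The first step is
\[
\rho_{\sqrt2t}(L_z)=\frac{(\sqrt2t)^n}{\det L_z}\,\rho_{1/(\sqrt2t)}(M_z),
\]
since $L_z^*=M_z$. By \Cref{lem:kernel-basis}, $\det L_z=\det L/q_*$, so the prefactor equals $2^{n/2}q_*t^n/\det L$.

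For the dual mass, shrinking the parameter only lowers each Gaussian weight, so $\rho_{1/(\sqrt2t)}(M_z)\le\rho_{1/t}(M_z)=1+\rho_{1/t}(M_z\setminus\{0\})$. On the good event of \Cref{thm:random-smoothing} this is at most $1+\frac14=\frac54$. Hence
\[
\rho_{\sqrt2t}(L_z)\le\frac54\,2^{n/2}\,\frac{q_*t^n}{\det L}.
\]

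Next we substitute the upper bound~\eqref{eq:q-upper}, $q_*<32(2m_*-1)\Theta_t$. Poisson summation for $L$ at parameter $t$ gives $\Theta_t=\rho_{1/t}(L^*)=\frac{\det L}{t^n}\rho_t(L)$. The factors $t^n$ and $\det L$ then cancel, leaving
\[
\rho_{\sqrt2t}(L_z)\le\frac54\cdot32(2m_*-1)\,2^{n/2}\rho_t(L)=40(2m_*-1)\,2^{n/2}\rho_t(L).
\]

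The only step that needs any care is~\eqref{eq:q-upper}, which the paper states just before the theorem. If we must justify it ourselves, we use minimality of $m_*$. When $m_*>1$, the value $m_*-1$ violates~\eqref{eq:mstar}, which gives $2^{m_*-1}<16(2m_*-1)\Theta_t$, and doubling yields $q_*<32(2m_*-1)\Theta_t$. To see that $m_*>1$, note that $\Theta_t\ge1$, so $m=1$ would require $2\ge48$, which is false. Everything else is a direct computation.
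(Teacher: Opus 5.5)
Your proposal is correct and follows the paper's proof essentially step for step. You apply Poisson summation to $L_z$ at $\sqrt2t$, bound the dual mass by $5/4$ on the good event, substitute $q_*<32(2m_*-1)\Theta_t$, and cancel $\det L$ and $t^n$ via Poisson summation on $L$. Your minimality argument for \eqref{eq:q-upper}, including $m_*>1$ from $\Theta_t\ge1$, correctly fills in the detail the paper states without proof.
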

\begin{proof}
Since $M_z=L_z^*\subseteq L^*$ has index $q_*$,
$\det L_z=\det L/q_*$.  Applying \eqref{eq:poisson} to $L_z$ at $\sqrt2t$ gives
$\rho_{\sqrt2t}(L_z)=q_*(\sqrt2t)^n\rho_{1/(\sqrt2t)}(M_z)/\det L$.
On the good event, $\rho_{1/(\sqrt2t)}(M_z)\le\rho_{1/t}(M_z)\le5/4$.  By
\eqref{eq:q-upper} and a second application of \eqref{eq:poisson},
$\Theta_t=\rho_{1/t}(L^*)=(\det L/t^n)\rho_t(L)$.
Substitution cancels the determinant and scale factors:
\[
\begin{aligned}
\rho_{\sqrt2t}(L_z)
&\le
\frac54\cdot32(2m_*-1)
\rho_{1/t}(L^*)
\frac{(\sqrt2t)^n}{\det L}\\
&=
40(2m_*-1)2^{n/2}\rho_t(L).
\end{aligned}
\]
\end{proof}

\subsection{Sampling from the original lattice}

At the distinguished modulus $m_*(L,t)$, \Cref{thm:random-smoothing} makes $L_z$ smooth at $t$, while
\Cref{thm:mass-comparison} bounds the loss from restricting the samples back to $L$.  We sample on
$L_z$ and output precisely the points that belong to $L$.

The analysis uses $K_z$ and $M_z$, but the algorithm only needs a basis of $L_z$.  The construction
in \Cref{lem:kernel-basis} is explicit: choose an odd coordinate of $z$, form a basis matrix $C_z$
of $K_z$, and use $BC_z^{-T}$ as a basis of $L_z$.

\begin{algorithm}[H]
\caption{Sampling from $D_{L_z,\sqrt2t}$ and returning the vectors in $L$}
\label{alg:superlattice-run}
\begin{algorithmic}[1]
\Require Rational basis $B$ of $L$, $\tau=t^2\in\Q_{>0}$, positive integers $m$ and $\kappa$
\State $q\gets2^m$; sample $z$ uniformly from the primitive vectors in $(\Z/q\Z)^n$
\State choose $i$ with $z_i$ odd and set $u\gets z_i^{-1}\bmod q$
\State let $C_z$ have columns $qe_i$ and
$e_j-(uz_j\bmod q)e_i$ for $j\ne i$; set $G_z\gets BC_z^{-T}$
\State call the ADRS sampler on $G_z\Z^n$ with squared parameter $2\tau$ and statistical parameter $\kappa$
\If{the sampler returns $\bot$} \State \Return the empty sequence \EndIf
\State \Return the subsequence of returned vectors $x$ satisfying $B^{-1}x\in\Z^n$
\end{algorithmic}
\end{algorithm}

The algorithm never enumerates $L_z/L$.  The basis $G_z$ has
$\poly(n,\beta_{\mathrm{in}},m)$ bits,
and membership in $L$ is an exact integrality test in the coordinates of $B$.  The output may be
processed sequentially and discarded before the next call.  Only the vectors from one ADRS call
are stored.

Fix $L$, $t>0$, and the superlattice $L_z$.  On the good event,
$\eta_{1/2}(L_z)<t$, so \Cref{thm:adrs-smooth-sampler} at parameter $s=\sqrt2t$ returns a list of
$N:=N_n$ samples jointly close to $D_{L_z,\sqrt2t}^{\,N}$.

\begin{theorem}
\label{thm:setwise-list}
Let $S\subseteq L$ be finite.  For a choice of $z$ satisfying \Cref{thm:random-smoothing}, a list of
$N$ independent samples from $D_{L_z,\sqrt2t}$ intersects $S$ with probability at least
\[
\frac1{80(2m_*-1)}
\min\left\{
1,
\frac{\rho_{\sqrt2t}(S)}{\rho_t(L)}
\right\}.
\]
For a random choice of $z$, the joint probability over $z$ and the samples satisfies the same lower
bound with $80$ replaced by $160$.
\end{theorem}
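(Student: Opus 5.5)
Fix $z$ on the good event of \Cref{thm:random-smoothing}.  Put $p:=\Pr_{X\sim D_{L_z,\sqrt2t}}[X\in S]=\rho_{\sqrt2t}(S)/\rho_{\sqrt2t}(L_z)$; this formula is valid because $S\subseteq L\subseteq L_z$.  \Cref{thm:mass-comparison} bounds the denominator, giving
\[
p\ge\frac{\rho_{\sqrt2t}(S)}{40(2m_*-1)\,2^{n/2}\rho_t(L)}=\frac{r}{c\,2^{n/2}},
\qquad r:=\frac{\rho_{\sqrt2t}(S)}{\rho_t(L)},\quad c:=40(2m_*-1).
\]
The $N$ samples are independent, so the list misses $S$ with probability $(1-p)^N\le e^{-pN}$.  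Since $N=N_n\ge2^{n/2}$, we get $pN\ge r/c$.

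The remaining step is the elementary inequality $1-e^{-x}\ge x/2$, valid for $x\in[0,1]$, together with monotonicity of $x\mapsto1-e^{-x}$.  Put $x:=\min\{1,r\}/c$.  Because $c\ge40$, we have $x\le1$, and $pN\ge r/c\ge x$.  Therefore the hitting probability is at least $1-e^{-x}\ge x/2=\min\{1,r\}/(80(2m_*-1))$, which is the first claim.  When $r>1$ the truncation at $1$ costs nothing, since then $pN\ge1/c$ already.  The only care needed is that the bound is applied with the fixed $m_*$, and that $q=q_*$ for this analysis.

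For random $z$, condition on $z$.  By \Cref{thm:random-smoothing} the good event has probability at least $3/4$, and on it the conditional hitting probability is at least the bound above.  Multiplying gives the factor $3/4\cdot1/80\ge1/160$.  The only subtlety is that the list is a list of ideal samples from $D_{L_z,\sqrt2t}$ for each fixed $z$, so conditioning on $z$ is legitimate.
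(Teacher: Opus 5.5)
Your proof is correct and matches the paper's argument: you bound the one-sample probability $\rho_{\sqrt2t}(S)/\rho_{\sqrt2t}(L_z)$ via \Cref{thm:mass-comparison}, use $N\ge2^{n/2}$, and apply $1-(1-p)^N\ge\frac12\min\{1,Np\}$, which you rederive inline from $(1-p)^N\le e^{-pN}$ and $1-e^{-x}\ge x/2$ on $[0,1]$. For random $z$ you then multiply by the probability $3/4$ from \Cref{thm:random-smoothing}, exactly as the paper does.
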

\begin{proof}
For one ideal sample, $p_z(S)=\rho_{\sqrt2t}(S)/\rho_{\sqrt2t}(L_z)$.  By
\Cref{thm:mass-comparison},
\[
Np_z(S)\ge\frac{\rho_{\sqrt2t}(S)}{40(2m_*-1)\rho_t(L)}.
\]
For $0\le p\le1$ and integer $N\ge1$,
$1-(1-p)^N\ge\frac12\min\{1,Np\}$.  This proves the first bound.  Multiplying by the
probability $3/4$ from \Cref{thm:random-smoothing} and weakening the absolute constant gives the second.
\end{proof}

\begin{corollary}
\label{cor:pointwise-list}
For every fixed $x\in L$ and every choice of $z$ satisfying \Cref{thm:random-smoothing}, a list of
$N$ independent samples from $D_{L_z,\sqrt2t}$ contains $x$ with probability at least
\[
\frac1{80(2m_*-1)}
\min\left\{
1,
\frac{e^{-\pi\norm{x}^2/(2t^2)}}{\rho_t(L)}
\right\}.
\]
The corresponding run with random $z$ satisfies the same statement with the constant
$80$ replaced by $160$.
\end{corollary}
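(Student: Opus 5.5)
The corollary is the special case $S=\{x\}$ of \Cref{thm:setwise-list}. Fix $x\in L$ and a $z$ in the good event of \Cref{thm:random-smoothing}, and set $S=\{x\}$. This is a finite subset of $L$, and its Gaussian mass at the sampling scale is
\[
\rho_{\sqrt2t}(S)=e^{-\pi\norm{x}^2/(2t^2)},
\]
because $(\sqrt2t)^2=2t^2$. A list of $N$ samples intersects $\{x\}$ exactly when it contains $x$. Substituting this mass into the first bound of \Cref{thm:setwise-list} gives the stated lower bound with constant $80$.

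For the random-$z$ statement I would use the second part of \Cref{thm:setwise-list}, again with $S=\{x\}$, which gives constant $160$. To make this step self-contained, one can also argue directly. The event that the list contains $x$ includes the event that $z$ is good and the list contains $x$. By \Cref{thm:random-smoothing}, $z$ is good with probability at least $3/4$. Conditioned on any good $z$, the first part shows the list contains $x$ with probability at least the bound with constant $80$. Multiplying, $\tfrac34\cdot\tfrac1{80}\ge\tfrac1{160}$, so the bound holds with constant $160$.

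The argument has no real obstacle; it is a direct specialization. The only point to check is that the scale in the exponent matches: the mass is computed at parameter $\sqrt2t$, while the normalization is $\rho_t(L)$. This mismatch is exactly what \Cref{thm:mass-comparison} accounts for, and it is already built into \Cref{thm:setwise-list}.
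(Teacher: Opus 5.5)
Your proposal is correct and is exactly the paper's proof: specialize \Cref{thm:setwise-list} to $S=\{x\}$, using $\rho_{\sqrt2t}(\{x\})=e^{-\pi\norm{x}^2/(2t^2)}$. Your direct check $\tfrac34\cdot\tfrac1{80}\ge\tfrac1{160}$ for the random-$z$ case is the same weakening of the constant that the paper performs inside \Cref{thm:setwise-list}.
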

\begin{proof}
Apply \Cref{thm:setwise-list} with $S=\{x\}$.
\end{proof}

These statements concern whether an entire output list intersects a set; they do not condition on
that event.  Replacing the ideal list by the ADRS output therefore changes its probability by at
most the total variation distance.

\subsection{Tightness of the Gaussian-mass comparison}

The upper bound in \Cref{thm:mass-comparison} compares $\rho_{\sqrt2t}(L_z)$ with $\rho_t(L)$.
The factor $2^{n/2}$ in this comparison is unavoidable for every superlattice that is smooth at
$t$.

\begin{theorem}
\label{thm:index-lower}
Let $L\subseteq\Lambda$ be full-rank lattices and suppose $\eta_{1/2}(\Lambda)\le t$.  Then
\[
[\Lambda:L]\ge\frac23\rho_{1/t}(L^*).
\]
\end{theorem}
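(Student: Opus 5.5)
We need $[\Lambda:L]\ge\frac23\rho_{1/t}(L^*)$ given $\rho_{1/t}(\Lambda^*\setminus\{0\})\le 1/2$. The plan is to treat $\Lambda^*$ as a sublattice of $L^*$ and split the dual mass of $L^*$ over the cosets of $\Lambda^*$.

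First, since $L\subseteq\Lambda$, we have $\Lambda^*\subseteq L^*$ with $[L^*:\Lambda^*]=[\Lambda:L]=:q$, using $\det\Lambda^*=1/\det\Lambda$. Choose coset representatives $c_1,\dots,c_q$ of $L^*/\Lambda^*$. Then
\[
\rho_{1/t}(L^*)=\sum_{i=1}^q\rho_{1/t}(\Lambda^*+c_i).
\]

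Second, bound each coset mass by the total mass of $\Lambda^*$ using \Cref{lem:shifted-mass}: $\rho_{1/t}(\Lambda^*+c_i)\le\rho_{1/t}(\Lambda^*)$. The shift sign is irrelevant, since $\rho$ is symmetric under $x\mapsto -x$.

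Third, use the smoothing hypothesis. If $\eta_{1/2}(\Lambda)\le t$, then $\rho_{1/t}(\Lambda^*\setminus\{0\})\le 1/2$. This follows from the definition via the infimum, because the map $s\mapsto\rho_{1/s}(\Lambda^*\setminus\{0\})$ is continuous and nonincreasing. Hence $\rho_{1/t}(\Lambda^*)\le 3/2$.

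Combining the three steps gives $\rho_{1/t}(L^*)\le q\cdot\frac32$, that is, $[\Lambda:L]\ge\frac23\rho_{1/t}(L^*)$.

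No step is a real obstacle. The only points needing care are the infimum/continuity issue in the third step and the index identity $[L^*:\Lambda^*]=[\Lambda:L]$.

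Afterwards I would add the consequence that makes this a tightness statement for \Cref{thm:mass-comparison}. By \eqref{eq:poisson},
\[
\rho_{\sqrt2t}(\Lambda)=\frac{(\sqrt2t)^n}{\det\Lambda}\rho_{1/(\sqrt2t)}(\Lambda^*)\ge\frac{(\sqrt2 t)^n q}{\det L}.
\]
Applying the index bound and Poisson summation to $\rho_{1/t}(L^*)$ then yields $\rho_{\sqrt2t}(\Lambda)\ge\frac23 2^{n/2}\rho_t(L)$.
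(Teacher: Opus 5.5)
Your proof is correct. It reaches the same intermediate inequality as the paper, $\rho_{1/t}(L^*)\le[\Lambda:L]\,\rho_{1/t}(\Lambda^*)$, and closes with the same smoothing bound $\rho_{1/t}(\Lambda^*)\le 3/2$; only the derivation of the intermediate inequality differs.

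The paper works with centered sums only. With $q=[\Lambda:L]$, it applies \eqref{eq:poisson} to both $L$ and $\Lambda$ at parameter $t$ to get $\rho_t(\Lambda)/\rho_t(L)=q\,\rho_{1/t}(\Lambda^*)/\rho_{1/t}(L^*)$. It then needs only the trivial primal inequality $\rho_t(\Lambda)\ge\rho_t(L)$, which holds because $L\subseteq\Lambda$.

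You instead stay entirely on the dual side. You split $L^*$ into the $q$ cosets of $\Lambda^*$ and bound each coset by \Cref{lem:shifted-mass}. The two arguments are Poisson-dual forms of one fact:
\begin{itemize}
\item The paper's route avoids shifted sums and coset representatives.
\item Your route makes the averaging interpretation explicit: no coset of $\Lambda^*$ outweighs $\Lambda^*$ itself. The price is invoking the shifted-mass lemma, which is itself a Poisson-summation argument with the phases removed by the triangle inequality.
\end{itemize}

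Your infimum/continuity remark and the index identity $[L^*:\Lambda^*]=[\Lambda:L]$ are handled correctly. Your closing derivation of the floor $\frac23\,2^{n/2}\rho_t(L)$ matches the paper's proof of \Cref{thm:mass-floor}.
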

\begin{proof}
Put $M=\Lambda^*\subseteq L^*$ and $q=[\Lambda:L]$.  Continuity and monotonicity of the
dual mass show that $\eta_{1/2}(\Lambda)\le t$ implies
$\rho_{1/t}(M\setminus\{0\})\le1/2$.  Poisson summation gives
$\rho_t(\Lambda)/\rho_t(L)=q\rho_{1/t}(M)/\rho_{1/t}(L^*)$.
The left-hand side is at least one, while smoothing gives $\rho_{1/t}(M)\le3/2$.  Rearranging proves
the claim.
\end{proof}

\begin{theorem}
\label{thm:mass-floor}
Under the hypotheses of \Cref{thm:index-lower},
\[
\rho_{\sqrt2t}(\Lambda)\ge\frac23\,2^{n/2}\rho_t(L).
\]
\end{theorem}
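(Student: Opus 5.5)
We apply Poisson summation \eqref{eq:poisson} to $\Lambda$ at scale $\sqrt2t$, bound the resulting dual mass from below by its zero term, and then combine the index bound of \Cref{thm:index-lower} with a second application of \eqref{eq:poisson} to $L$ at scale $t$. Only the zero vector of $M:=\Lambda^*$ is needed, so the smoothing hypothesis enters only through \Cref{thm:index-lower}.

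First, since $\det\Lambda=\det L/[\Lambda:L]$, Poisson summation for $\Lambda$ gives
\[
\rho_{\sqrt2t}(\Lambda)=\frac{(\sqrt2t)^n}{\det\Lambda}\rho_{1/(\sqrt2t)}(M)
=[\Lambda:L]\,\frac{2^{n/2}t^n}{\det L}\,\rho_{1/(\sqrt2t)}(M).
\]
Because every Gaussian weight is nonnegative and the zero vector contributes $1$, we have $\rho_{1/(\sqrt2t)}(M)\ge1$. Hence
\[
\rho_{\sqrt2t}(\Lambda)\ge[\Lambda:L]\,\frac{2^{n/2}t^n}{\det L}.
\]

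Next, \Cref{thm:index-lower} gives $[\Lambda:L]\ge\frac23\rho_{1/t}(L^*)$. Poisson summation for $L$ at scale $t$ gives $\rho_{1/t}(L^*)=(\det L/t^n)\rho_t(L)$. Substituting, the factors $t^n$ and $\det L$ cancel, leaving
\[
\rho_{\sqrt2t}(\Lambda)\ge\frac23\,2^{n/2}\rho_t(L),
\]
which is the claim.

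I expect no step to be a serious obstacle. The one point to check is the direction of each inequality: dropping the nonzero dual terms only decreases the mass, so it is valid for a lower bound. The argument also does not need the upper bound $\rho_{1/t}(M)\le 3/2$ at scale $\sqrt2t$; that bound is used only inside \Cref{thm:index-lower}.
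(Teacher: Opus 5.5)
Your proposal is correct and follows the paper's proof step for step. You apply Poisson summation to $\Lambda$ at $\sqrt2t$, keep only the zero dual term, then substitute the index bound from \Cref{thm:index-lower} together with Poisson summation for $L$ at $t$.
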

\begin{proof}
The identity \eqref{eq:poisson} at $\sqrt2t$ and
$\rho_{1/(\sqrt2t)}(\Lambda^*)\ge1$ give
$\rho_{\sqrt2t}(\Lambda)\ge[\Lambda:L](\sqrt2t)^n/\det L$.
By \Cref{thm:index-lower} and another application of \eqref{eq:poisson},
$[\Lambda:L]\ge\frac23\det L\,\rho_t(L)/t^n$.  Substitution proves the claim.
\end{proof}

Thus \Cref{thm:mass-comparison} is tight in its $2^{n/2}$ dependence when its denominator is
$\rho_t(L)$.

\section{One sample at an arbitrary parameter}

The distinguished modulus $m_*(L,t)$ is not known, so the algorithm tries all admissible values of $m$.  At an
incorrect modulus the ADRS call may return fewer samples, but its list length is independent of
their values.  This independence lets the algorithm return the first point in $L$ without biasing
its distribution.

\begin{lemma}
\label{lem:variable-restriction}
Let $L\subseteq\Lambda$ be full-rank lattices and $s>0$.  Let $M\ge0$ be independent of
$X_1,X_2,\ldots$, where the $X_i$ are independent samples from $D_{\Lambda,s}$.  Return the first
$X_i\in L$ with $i\le M$, and return $\bot$ if no such index exists.  Conditional on returning a
vector, the output distribution is $D_{L,s}$.
\end{lemma}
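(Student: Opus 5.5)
The plan is a direct computation of the joint law of the output and the stopping index, conditioning first on $M$. Fix $x\in L$. The output is $x$ exactly when, for some index $i$, the following hold: $i\le M$, the samples $X_1,\dots,X_{i-1}$ all lie outside $L$, and $X_i=x$. These events are disjoint over $i$, so
\[
\Pr[\text{output}=x]=\sum_{i\ge1}\Pr[M\ge i]\,(1-p)^{i-1}\,\frac{e^{-\pi\norm{x}^2/s^2}}{\rho_s(\Lambda)},
\qquad p:=\frac{\rho_s(L)}{\rho_s(\Lambda)}.
\]
The factorization uses two facts: $M$ is independent of $(X_i)$, and the $X_i$ are i.i.d. Under independence, the event $\{M\ge i\}$ is independent of $\{X_1,\dots,X_{i-1}\notin L,\ X_i=x\}$. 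The latter event has probability $(1-p)^{i-1}\cdot e^{-\pi\norm{x}^2/s^2}/\rho_s(\Lambda)$ by independence and the definition of $D_{\Lambda,s}$.

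Write $c:=\sum_{i\ge1}\Pr[M\ge i](1-p)^{i-1}$. This quantity does not depend on $x$, and it is finite because $c\le\sum_i(1-p)^{i-1}=1/p$, using $p>0$ since $0\in L$. Hence $\Pr[\text{output}=x]=\frac{c}{\rho_s(\Lambda)}e^{-\pi\norm{x}^2/s^2}$. Summing over $x\in L$ gives
\[
\Pr[\text{output}\ne\bot]=\frac{c\,\rho_s(L)}{\rho_s(\Lambda)}=cp.
\]
If $cp>0$, dividing yields $\Pr[\text{output}=x\mid\text{output}\neq\bot]=e^{-\pi\norm{x}^2/s^2}/\rho_s(L)=D_{L,s}(x)$. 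If $cp=0$, i.e.\ $M=0$ almost surely, the conditional statement is vacuous.

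The only delicate point is that $M$ may be unbounded or infinite. Interchanging the sum over $i$ with the sum over $x$ is justified by nonnegativity (Tonelli), so no boundedness of $M$ is needed. Nothing else is an obstacle. The essential hypothesis is the independence of $M$ from the sample values; it is what makes the factor $c$ common to all $x$. This is exactly why \Cref{def:variable-dgs} insists that $M$ be chosen independently of the samples.
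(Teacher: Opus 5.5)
Your proof is correct and takes essentially the paper's route. The paper conditions on $M$ and on the index $i$ of the first sample in $L$, notes that independence leaves $X_i$ distributed as $D_{\Lambda,s}$ conditioned on $L$ (which is $D_{L,s}$), and averages over $M$ and $i$; your explicit sum over $i$ with the $x$-independent factor $c$ is the same decomposition written out as a joint-probability computation.
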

\begin{proof}
Fix $M$ and the index $i$ of the first sample in $L$.  Independence leaves $X_i$ distributed as
$D_{\Lambda,s}$ conditioned on $X_i\in L$, and for every $x\in L$,
\[
\Pr[X_i=x\mid X_i\in L]
=
\frac{e^{-\pi\norm{x}^2/s^2}}{\rho_s(L)}.
\]
This distribution is independent of $M$ and $i$, so averaging over them gives the claim.
\end{proof}

\begin{lemma}
\label{lem:first-success}
Let $Y_1,\ldots,Y_T$ be independent random variables in $\Omega\cup\{\bot\}$ such that
$\mathcal L(Y_i\mid Y_i\ne\bot)=D$ whenever $Y_i$ succeeds with positive probability.  Conditional
on at least one success, the first non-$\bot$ value has distribution $D$.  If an actual joint
distribution is within total variation distance $\nu$ of this ideal sequence and the ideal success
probability is $p>\nu$, then its first accepted output, conditioned on success, is within
$2\nu/(p-\nu)$ of $D$.
\end{lemma}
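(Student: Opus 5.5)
The argument has two parts: an exact computation for the ideal independent sequence, and a perturbation bound for the actual one.

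\emph{Ideal sequence.} For $1\le i\le T$, let $E_i$ be the event that $Y_1=\cdots=Y_{i-1}=\bot$ and $Y_i\ne\bot$. These events are disjoint, and their union is the event of at least one success. Take $i$ with $\Pr[E_i]>0$ and a measurable $A\subseteq\Omega$. Independence of the $Y_j$ gives
\[
\Pr[E_i,\,Y_i\in A]=\Pr[Y_1=\cdots=Y_{i-1}=\bot]\,\Pr[Y_i\ne\bot]\,D(A)=\Pr[E_i]\,D(A).
\]
Here we used that $\Pr[Y_i\ne\bot]>0$ whenever $\Pr[E_i]>0$, so the hypothesis $\mathcal L(Y_i\mid Y_i\ne\bot)=D$ applies. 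Summing over $i$ shows that the first non-$\bot$ value lies in $A$ with probability $\Pr[\text{success}]\,D(A)$. Dividing by $\Pr[\text{success}]$ proves the first claim.

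\emph{Actual sequence.} Let $f$ be the deterministic map that sends a sequence to its first non-$\bot$ entry, or to $\bot$ if there is none. Applying $f$ cannot increase total variation distance. Hence the ideal and actual outputs $Z=f(Y)$ and $Z'=f(Y')$, viewed as distributions on $\Omega\cup\{\bot\}$, are within $\nu$. Write $p'=\Pr[Z'\ne\bot]$; then $|p-p'|\le\nu$, and in particular $p'\ge p-\nu>0$.

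For measurable $A\subseteq\Omega$, put $a=\Pr[Z\in A]=p\,D(A)$ and $a'=\Pr[Z'\in A]$, so that $|a-a'|\le\nu$. Then
\[
\left|\frac{a'}{p'}-\frac{a}{p}\right|
\le\frac{|a'-a|}{p'}+a\left|\frac1{p'}-\frac1p\right|
\le\frac{\nu}{p'}+\frac{a}{p}\cdot\frac{|p-p'|}{p'}
\le\frac{2\nu}{p'}\le\frac{2\nu}{p-\nu},
\]
where the third inequality uses $a\le p$. Taking the supremum over $A$ gives the total variation bound $2\nu/(p-\nu)$.

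The only delicate point is the first part. There one must use independence to factor out the prefix of failures, and apply the conditional hypothesis only when $Y_i$ succeeds with positive probability. The second part is a routine perturbation estimate.
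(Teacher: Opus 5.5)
Your proof is correct and follows the paper's route. For the exact claim you decompose by the index of the first success and use independence. For the approximate claim you use the conditioning inequality $d_{\mathrm{TV}}(P(\cdot\mid A),Q(\cdot\mid A))\le 2\nu/(Q(A)-\nu)$. The paper only cites that inequality, whereas you derive it explicitly after pushing both sequences through the first-success map.
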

\begin{proof}
Condition on the index of the first success for the exact statement.  The approximate statement is
the conditioning inequality
$d_{\mathrm{TV}}(P(\cdot\mid A),Q(\cdot\mid A))\le2\nu/(Q(A)-\nu)$.
\end{proof}

Choose once and for all a positive integer $C_\kappa$ large enough to dominate the absolute
constants in the finite collection of error bounds in \Cref{sec:precision}.  This is a fixed,
hardwired constant of the algorithm rather than part of the input.

\begin{algorithm}[H]
\caption{Sampling one vector from $D_{L,s}$}
\label{alg:one-dgs}
\begin{algorithmic}[1]
\Require Rational basis $B$ of $L$ and squared parameter $\sigma=s^2\in\Q_{>0}$
\State $\beta_{\mathrm{in}}\gets\operatorname{bits}(B)$ and
$\beta_s\gets\operatorname{bits}(\sigma)$
\State $\tau\gets\sigma/2$ and compute $J:=J_{\mathrm{mod}}(B,\tau)$ from \eqref{eq:explicit-J}
\State $\kappa_0\gets\left\lceil C_\kappa\bigl(n^3+\log(2+\beta_{\mathrm{in}}+\beta_s+J)\bigr)\right\rceil$
\State $T\gets\lceil640J\kappa_0\rceil$
\While{true}
  \For{$m=1,\ldots,J$}
    \For{$r=1,\ldots,T$}
      \State $W\gets$ the vectors returned by \Cref{alg:superlattice-run} on
      $(B,\tau,m,\kappa_0)$ using fresh independent randomness
      \If{$W\ne\emptyset$} \State \Return the first vector of $W$ \EndIf
    \EndFor
  \EndFor
\EndWhile
\end{algorithmic}
\end{algorithm}

\begin{theorem}[Arbitrary-parameter one-sample DGS]
\label{thm:one-dgs}
Let $B\in\Q^{n\times n}$ be nonsingular, let $L=L(B)$, and let
$\sigma=s^2\in\Q_{>0}$.  \Cref{alg:one-dgs} outputs a random vector whose distribution is
$\exp(-\Omega(n^3))$-close to $D_{L,s}$.  Its expected running time is
\[
2^{n/2+o(n)}\poly(n,\beta_{\mathrm{in}},\beta_s),
\]
bit operations, and every execution uses at most
$2^{n/2+o(n)}\poly(n,\beta_{\mathrm{in}},\beta_s)$ bits of space.
\end{theorem}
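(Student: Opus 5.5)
We combine the pieces already established in three steps: correctness of the ideal experiment, the success probability of one pass of the outer loop, and the error and cost accounting.

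\emph{Step 1: correctness of the ideal experiment.} Consider one call to \Cref{alg:superlattice-run} with parameters $(B,\tau,m,\kappa_0)$, and replace the ADRS output by its ideal counterpart from \Cref{def:variable-dgs}. Fix $z$. The output is then a length $M$, chosen independently of the sample values, followed by $M$ i.i.d.\ samples from $D_{L_z,s}$. Returning the first vector of $W$ means returning the first of these samples that lies in $L$. By \Cref{lem:variable-restriction}, conditional on success this vector has law $D_{L,s}$, for every $z$ and every $m$. Averaging over $z$ preserves the conditional law.

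The successive calls of \Cref{alg:one-dgs} use fresh randomness, so they are independent. Each one is a $Y_i$ in the sense of \Cref{lem:first-success}. Hence, in the ideal world, the first accepted output, conditioned on the loop terminating, is exactly $D_{L,s}$. Termination happens almost surely, by Step 2.

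\emph{Step 2: success probability of one outer pass.} By \Cref{thm:random-smoothing} and the discussion after \eqref{eq:explicit-J}, we have $m_*(L,t)\le J$. Consider a call at $m=m_*$ whose choice of $z$ lies in the good event. On this event $\eta_{1/2}(L_z)<t$, so $s=\sqrt2t>\varphi(L_z)$ and the ideal list has length exactly $N_n$.

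Apply \Cref{thm:setwise-list} with $S$ a large finite subset of $L$, and let $S$ increase to $L$. Since $\rho_{\sqrt2t}(L)\ge\rho_t(L)$, the minimum equals $1$. Thus the joint success probability over $z$ and the samples is at least $p_0:=1/(160(2m_*-1))\ge 1/(320J)$.

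One pass of the inner loops performs $T=\lceil 640J\kappa_0\rceil$ calls at $m=m_*$. All of them fail with probability at most $(1-p_0)^T\le e^{-2\kappa_0}$. So the ideal probability that one pass of the \texttt{while} body succeeds is at least $1-e^{-2\kappa_0}$, which is close to $1$.

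\emph{Step 3: error and cost.} This is the main obstacle, because the loop is unbounded and the errors must be handled carefully. I would truncate the analysis to a single pass of the \texttt{while} body, which makes at most $JT$ calls.

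By \Cref{lem:adaptive-tv}, with the per-call error $\exp(-\Omega(\kappa_0))$ from \Cref{thm:adrs-smooth-sampler}, the rational implementation \Cref{thm:adrs-rational}, and the certified-precision bounds of \Cref{sec:precision}, the actual pass transcript is within $\nu\le JT\,e^{-c\kappa_0}$ of the ideal one. The choice of $C_\kappa$ makes $\nu\le e^{-\Omega(n^3)}$, because $\kappa_0$ dominates $\log J$ and $\log T$.

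By \Cref{lem:first-success}, applied to the single pass with $p\ge1/2$, the output of a successful pass is within $4\nu$ of $D_{L,s}$. Every pass is an independent copy of the same experiment. Hence the output of the whole algorithm equals the output of a pass conditioned on its success, and the distance bound $e^{-\Omega(n^3)}$ carries over. The actual success probability of a pass is at least $1/2-\nu$, so the number of passes is geometric with constant expected value.

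For the cost, one pass takes $JT=\poly(n,\beta_{\mathrm{in}},\beta_s)$ calls. Each call costs $2^{n/2+\polylog(\kappa_0)+o(n)}\poly(n,\beta_{\mathrm{in}},m)$ by \Cref{lem:kernel-basis} and \Cref{thm:adrs-rational}. Since $\kappa_0=O(n^3+\log(\cdot))$, the term $\polylog(\kappa_0)$ is $o(n)$ plus a polylog of the input length. This gives the expected time bound.

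Space is reused across calls: only one ADRS output list is stored at a time, and the loop counters need $O(\log(JT))$ bits. So every execution uses $2^{n/2+o(n)}\poly(n,\beta_{\mathrm{in}},\beta_s)$ space.
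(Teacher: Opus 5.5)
Your proof is correct and follows essentially the same route as the paper: exact conditional correctness from \Cref{lem:variable-restriction} and \Cref{lem:first-success}, a per-iteration ideal success probability of $1-e^{-\Omega(\kappa_0)}$ from the runs at $m_*$ via the mass comparison, errors summed only over the $JT$ calls of one iteration, and independence of restarts so that the output is one iteration conditioned on its success. Two small tightenings. First, $2\nu/(p-\nu)\le4\nu$ needs your bound $p\ge1-e^{-2\kappa_0}$, not $p\ge1/2$. Second, $\polylog(\kappa_0)$ is $o(n)$ plus $\poly(\log\log(2+\beta_{\mathrm{in}}+\beta_s+J))$, not a polylogarithm of the input length; this finer bound is what makes $2^{\polylog(\kappa_0)}$ polynomial in the input length, as in \Cref{cor:kappa-absorb}.
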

\begin{proof}
Put $t=\sqrt{\tau}=s/\sqrt2$.  For every modulus and choice of $z$, the ideal ADRS list satisfies
\Cref{lem:variable-restriction}; hence a successful run returns $D_{L,s}$.  This conditional distribution is
independent of the modulus, $z$, and list length, so \Cref{lem:first-success} gives the same distribution
for the first successful ideal run.

For the distinguished modulus $m_*$ and a choice of $z$ satisfying \Cref{thm:random-smoothing},
\Cref{thm:mass-comparison} gives
$\rho_s(L)/\rho_s(L_z)\ge1/(40(2m_*-1)2^{n/2})$,
where $\rho_{s/\sqrt2}(L)\le\rho_s(L)$.  The $N=N_n$ samples therefore intersect
$L$ with probability at least $1/(80(2m_*-1))$.  The required event holds with probability at least
$3/4$, and $m_*\le J$ by \eqref{eq:explicit-J}.  Thus, after accounting for the random choice of
$z$, each run at this modulus succeeds with probability $\Omega(1/J)$.  The
$T=\lceil640J\kappa_0\rceil$ independent runs therefore make one iteration fail with probability
$\exp(-\Omega(\kappa_0))$.
The expected number of iterations is $1+o(1)$.

One iteration makes $JT=O(J^2\kappa_0)$ sampler calls.  Each call costs
$2^{n/2+o(n)}\poly(n,\beta_{\mathrm{in}},\beta_s)$ bit operations by
\Cref{thm:adrs-smooth-sampler}, and the superlattice basis and membership tests have polynomial bit
complexity.  Rejection sampling a primitive $z$ succeeds with probability at least $1/2$, so its
expected cost is polynomial as well.  The returned vectors are processed sequentially and discarded before the next call,
so only the vectors returned by one call are stored at a time.
This proves the expected-time bound and the claimed bound on space.

The total variation errors of the polynomially many calls in one iteration sum to
$\nu=\exp(-\Omega(\kappa_0))$.  Since the ideal iteration succeeds with probability
$p=1-\exp(-\Omega(\kappa_0))$, \Cref{lem:first-success} bounds the conditional output error by
$2\nu/(p-\nu)=\exp(-\Omega(\kappa_0))$.  Independent restarts preserve this conditional distribution
rather than accumulating error over an unbounded number of iterations.  The exact implementation
and the space bound on every execution are proved in \Cref{sec:precision}.
\end{proof}

\section{Implementation on a randomized Turing machine}
\label{sec:precision}

It remains to implement the ADRS random choices and real-number comparisons on rational inputs.
The implementation below preserves the expected running time, bounds space on every execution,
and loses only negligible statistical distance.  Stopping after a constant multiple of the
expected running time would not suffice: it could discard a constant fraction of the output distribution.

The difficulty is not the computability of the required probabilities.  Exact comparisons may
request arbitrarily many random bits, and discrete Gaussian samples have unbounded support.
\Cref{lem:truncate-sampling} isolates the truncation argument.  The next three lemmas verify its
comparison hypotheses for the rational Gram computations and the one-dimensional sampler, while
\Cref{lem:coefficient-tail,lem:coefficient-tail-gram} bound the stored representations.
\Cref{thm:adrs-rational} then combines these facts for one ADRS call.

\subsection{Exact implementation of the random choices}

Every Gaussian parameter is represented by a positive rational squared scale $\sigma=s^2$.
For rational lattice vectors, the quotient $\norm{x}^2/\sigma$ is rational, and quantities such as
$e^{-\pi\norm{x}^2/\sigma}$ are evaluated by certified intervals.

\begin{definition}
A real quantity $x$ determined by the rational input is \emph{efficiently computable} if, on input
$q$, an algorithm returns dyadic
rationals $\underline x_q\le x\le\overline x_q$ with $\overline x_q-\underline x_q\le2^{-q}$ using a
number of bit operations polynomial in $q$, $n$, and the input bit length.
\end{definition}

Rational Gram--Schmidt data, square roots of positive rationals,
exponentials, and the one-dimensional theta masses used by the
Brakerski--Langlois--Peikert--Regev--Stehl\'e (BLPRS) sampler are efficiently
computable~\citep[Section~5.1]{BLPRS13}.  Narrow theta sums are truncated in the primal domain; wide sums are evaluated after
Poisson transformation.  Gaussian-integral tails certify the remainder in both cases.

\begin{lemma}
\label{lem:lazy-bernoulli}
If $p\in[0,1]$ is efficiently computable, an exact $\operatorname{Bernoulli}(p)$ variable can be
generated by a procedure that reads finitely many bits with probability one.  If the number of
bits read is limited to $Q$, the procedure aborts with probability at most $6\cdot2^{-Q}$.
\end{lemma}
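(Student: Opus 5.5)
The plan is the standard lazy comparison of a uniform random real against $p$. Draw fair bits $b_1,b_2,\ldots$ one at a time; after $k$ bits the uniform $U=\sum_i b_i2^{-i}$ is known to lie in the dyadic interval $I_k=[u_k,u_k+2^{-k}]$, where $u_k=\sum_{i\le k}b_i2^{-i}$. At stage $k$, query the efficiently computable approximation of $p$ at precision $q=k+2$, giving dyadic $\underline p\le p\le\overline p$ with $\overline p-\underline p\le2^{-k-2}$. Output $1$ if $u_k+2^{-k}\le\underline p$ (then certainly $U<p$ up to a null set), output $0$ if $u_k\ge\overline p$ (then $U\ge p$), and otherwise read another bit. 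Whenever the procedure halts, its answer equals $\mathbf 1[U<p]$, except possibly on the measure-zero event that $U$ is a dyadic endpoint. Since $\Pr[U<p]=p$, the output is exactly $\operatorname{Bernoulli}(p)$ conditional on halting, and halting at some finite stage certifies the value.

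The main step is to bound the probability of not having halted after $k$ bits. Continuing at stage $k$ requires $I_k$ to meet $[\underline p,\overline p]$ without being certified on either side. Concretely, $u_k<\overline p$ and $u_k+2^{-k}>\underline p$, so $I_k$ intersects an interval of length at most $2^{-k-2}$ around $p$. For that to happen, $U$ must lie within distance $2^{-k}+2^{-k-2}$ of $p$. Equivalently, at most two adjacent dyadic intervals of length $2^{-k}$ (plus boundary cases) qualify, so the probability is at most $2\cdot2^{-k}+2^{-k}\cdot$(a small slack), at most $3\cdot2^{-k}$. I would count this carefully: the intervals $I_k$ meeting $[\underline p,\overline p]$ number at most $2$ when the width is below $2^{-k}$, with a third only in the endpoint-touching case, which closed-versus-open handling rules out. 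So $\Pr[\text{not halted after }k\text{ bits}]\le 3\cdot 2^{-k}$. This tends to $0$, so the procedure reads finitely many bits almost surely.

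Capping the number of bits at $Q$ then aborts with probability at most $3\cdot2^{-Q}\le6\cdot2^{-Q}$. The factor $6$ leaves room for a cruder accounting, for instance if precision $q=k+1$ is used and the uncertain window has length $2^{-k}+2\cdot2^{-k}$. I expect the only delicate point to be this boundary and closed-interval bookkeeping together with the measure-zero dyadic-endpoint event. One also notes that each stage costs $\poly(k,n,\text{input length})$ bit operations by efficient computability.
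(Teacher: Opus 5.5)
Your proposal is correct and is essentially the paper's argument: read a lazy binary expansion of $U$, compare its current dyadic cell with a certified interval for $p$, halt as soon as the two are disjoint, and bound the probability of not halting at level $k$ by the probability that $U$ lies in an $O(2^{-k})$ window around $p$. Querying $p$ at precision $k+2$ only sharpens the constant: the continuation window $(\underline p-2^{-k},\overline p)$ has length at most $\tfrac54 2^{-k}$, contains at most two grid points, and so gives $2\cdot2^{-k}$, well within the stated $6\cdot2^{-Q}$.
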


\begin{proof}
Generate a lazy uniform binary expansion $U=0.U_1U_2\ldots$.  After $q$ bits, compare the
dyadic interval containing $U$ with a width-$2^{-q}$ interval containing $p$.  Decide when the
intervals are disjoint.  Failure by level $q$ implies that $U$ lies in an interval of length at most
$6\cdot2^{-q}$ around $p$.
\end{proof}

\Needspace{12\baselineskip}
\begin{lemma}
\label{lem:truncate-sampling}
Let $\mathcal A(x)$ be an exact Las Vegas procedure with discrete output and expected bit cost at
most a computable integer $T(x)\ge1$.  Assume the following.
\begin{enumerate}[label=(\roman*),leftmargin=2.3em]
\item Outside an event of probability $\delta_{\mathrm{repr}}$, its stored exact data use at most
$S(x)$ bits.
\item Conditional on every reachable transcript prefix before this representation cutoff, each
comparison is driven by one or two independent unread uniform tails and efficiently computable
certified intervals.
\item There is a computable $q_0(x)=\poly(|x|)$ such that the conditional probability that a
comparison remains unresolved after $q_0(x)+q$ further refinement levels is at most
$6\cdot2^{-q}$; one level reads at most one bit from each active tail.
\item At most $b(x)=\poly(|x|)$ tails are stored simultaneously, and each is used in at most
$\poly(|x|)$ comparisons before it is discarded.
\end{enumerate}

For $K\ge1$, set
\[
H=\left\lceil2^KT(x)\right\rceil,
\qquad
Q=q_0(x)+\left\lceil\log_2(6H)\right\rceil+K.
\]
Abort after $H$ bit operations, when any lazy comparison requests more than $Q$ further bits from
one of its active tails, or when the stored exact data use more than $S(x)$ bits.  Then the
truncated procedure has the following guarantees.
\begin{enumerate}[label=(\alph*),leftmargin=2.3em]
\item Its output is within $\delta_{\mathrm{repr}}+2^{1-K}$ in total variation distance of
$\mathcal A(x)$.
\item Its expected running time is at most $T(x)$.
\item It uses at most
\[
S(x)+b(x)\poly(|x|,Q,\log H)
\]
bits.
\end{enumerate}
\end{lemma}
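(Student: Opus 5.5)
The plan is a coupling argument: run the truncated and untruncated procedures on the same random bits and bound the probability that they differ by the probability of some abort. The two runs agree until the first abort. So for (a) it suffices to bound the probability of each of the three abort types and apply a union bound, since total variation distance is at most the probability that the coupled outputs differ.

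The representation cutoff contributes at most $\delta_{\mathrm{repr}}$ by hypothesis (i).

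For the time cutoff, the untruncated run has expected bit cost at most $T(x)$. Markov's inequality then gives probability at most $T(x)/H\le2^{-K}$ that it exceeds $H$ bit operations.

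For the bit cutoff, we only need to consider the event that the run is within $H$ operations. On that event, at most $H$ comparisons are started, since each comparison costs at least one operation. By (ii) and (iii), each comparison, conditioned on its reachable prefix, remains unresolved after $Q=q_0(x)+q$ further levels with probability at most $6\cdot2^{-q}$. Here $q=\lceil\log_2(6H)\rceil+K$, so this is at most $2^{-K}/H$. I will make this rigorous by taking, for each $i\le H$, the indicator that the $i$th comparison overruns. Its conditional probability given the past is at most $2^{-K}/H$, so summing over $i$ and using linearity of expectation over the adaptive sequence bounds the overrun probability by $2^{-K}$. The total is $\delta_{\mathrm{repr}}+2^{1-K}$, which is (a). The main subtlety is that the comparisons are adaptive and their tails may be shared. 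Hypothesis (iii) is stated conditionally on every prefix exactly so that the per-comparison bound survives this adaptivity. One level reads at most one bit per tail, so "$Q$ further bits" matches "$Q$ further levels".

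Part (b) is immediate: truncation only removes work. Under the coupling the truncated cost is pointwise at most the untruncated cost, plus $O(1)$ for detecting an abort, which can be absorbed into the cost accounting. Hence its expectation is at most $T(x)$.

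For (c), the exact stored data are at most $S(x)$ bits because we abort otherwise. The remaining space consists of the following:
\begin{itemize}
\item the active tails, at most $b(x)$ of them, each holding at most the bits read so far. By (iv) each tail is used in $\poly(|x|)$ comparisons, each reading at most $Q$ bits, giving $\poly(|x|,Q)$ bits per tail;
\item the certified intervals needed at refinement level up to $q_0+Q$, which have $\poly(|x|,Q)$ bits by efficient computability;
\item a counter up to $H$, using $O(\log H)$ bits.
\end{itemize}
Together these give $S(x)+b(x)\poly(|x|,Q,\log H)$. The only real care is to ensure that a discarded tail frees its space and that interval computations at level $q$ use space polynomial in $q$. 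Both hold because polynomial time bounds polynomial space.
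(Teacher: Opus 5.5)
Your proposal is correct and follows essentially the same route as the paper. Both arguments use Markov's inequality for the $H$-operation cutoff, a union bound over at most $H$ adaptive comparisons via the prefix-conditional estimate $6\cdot2^{-(Q-q_0(x))}\le2^{-K}/H$, the additive $\delta_{\mathrm{repr}}$ term, the fact that stopping cannot increase cost, and the same space inventory of tail prefixes, precision-$Q$ intervals, and an $O(\log H)$-bit counter; your explicit coupling and summed conditional overrun probabilities just spell out the paper's one-sentence union bound.
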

\begin{proof}
Markov's inequality bounds the probability that the procedure runs for more than $H$ bit operations
by $2^{-K}$.  Before that event there are at
most $H$ comparisons requiring adaptive precision.  The conditional
probability that any one of them requests more than $Q$ further bits from an active tail is at most
$6\cdot2^{-(Q-q_0(x))}$.  This bound holds
after conditioning on an arbitrary reachable prefix, so a union bound gives at most
$6H2^{-(Q-q_0(x))}\le2^{-K}$ even when later thresholds depend on earlier outcomes.  Adding the
event that the stored exact data use more than $S(x)$ bits proves the statistical bound.  The procedure stores
only the stated number of random prefixes, counters of length $O(\log H)$, and certified threshold
intervals of precision $Q$.
Stopping can only decrease the expected running time.
\end{proof}

\begin{lemma}
\label{lem:lazy-comparison}
Fix a reachable transcript prefix and suppose that every discrete envelope value, scale, and
coefficient used by a comparison has at most $B$ bits.  Conditional on this prefix, let one operand
$X$ have a density bounded by $d_X$ and be independent of the other operand.  Suppose that the
comparison refines certified intervals $I_X(\ell)$ and $I_Y(\ell)$ so that, after $\ell$ further
refinement rounds,
\[
 d_X\bigl(|I_X(\ell)|+|I_Y(\ell)|\bigr)\le 2^{p(B)-\ell}
\]
for a fixed polynomial $p$.  There is a computable $q_0=\poly(B)$ such that the probability that
the comparison remains unresolved after $q_0+q$ rounds is at most $6\cdot2^{-q}$.
\end{lemma}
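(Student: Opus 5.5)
The plan is to reduce the event ``unresolved after $q_0+q$ rounds'' to the event that $X$ falls into a short interval around a quantity independent of $X$, and then use the density bound. Fix the reachable prefix and condition on the other operand $Y$; by independence, $X$ keeps a density bounded by $d_X$. The comparison decides $X<Y$ (or $X\le Y$; ties have probability zero because $X$ has a density) as soon as the certified intervals $I_X(\ell)$ and $I_Y(\ell)$ are disjoint. If they still intersect after $\ell$ rounds, then since $X\in I_X(\ell)$ and $Y\in I_Y(\ell)$, we get $|X-Y|\le|I_X(\ell)|+|I_Y(\ell)|=:w(\ell)$. So failure at round $\ell$ is contained in the event $X\in[Y-w(\ell),Y+w(\ell)]$.

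The first subtlety is that $w(\ell)$ may depend on $X$ itself: the interval widths are produced by refining $X$'s lazy expansion. To avoid this, I would use the hypothesis as a deterministic bound. Put $\bar w(\ell):=2^{p(B)-\ell}/d_X$. Then $w(\ell)\le\bar w(\ell)$ on every path, and $\bar w(\ell)$ is a number fixed by the prefix. The failure event is therefore contained in $\{|X-Y|\le\bar w(\ell)\}$. Conditional on $Y$, this event has probability at most $d_X\cdot2\bar w(\ell)=2\cdot2^{p(B)-\ell}$. Averaging over $Y$ gives the same bound.

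Next, set $q_0:=\lceil p(B)\rceil$, which is computable and polynomial in $B$ because $p$ is a fixed polynomial. With $\ell=q_0+q$ the unresolved probability is at most $2\cdot2^{-q}\le6\cdot2^{-q}$. The slack factor $3$ matches \Cref{lem:lazy-bernoulli}: if a dyadic interval is used for $X$ instead, its endpoint alignment can lose a constant factor, and I would keep the constant $6$ to cover that case uniformly.

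The main obstacle I expect is justifying that conditioning on the prefix really leaves $X$ with density at most $d_X$ and independent of $Y$, even though the refinement schedule is adaptive. I would handle this by noting that the statement assumes exactly this conditional independence and density bound. The remaining step is just to check that adaptivity only affects $w(\ell)$, and $w(\ell)$ has been replaced by the deterministic envelope $\bar w(\ell)$. Because of that replacement, the bound is uniform over the prefix, which is what \Cref{lem:truncate-sampling}(iii) requires.
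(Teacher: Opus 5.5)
Your proposal is correct and follows essentially the paper's argument: condition on the other operand, observe that overlapping certified intervals force $X$ into a window of length $2\bigl(|I_X(\ell)|+|I_Y(\ell)|\bigr)$ around it, and apply the density bound $d_X$ with $q_0$ of order $p(B)+1$ (the fixed-threshold case is the special case of a deterministic $Y$). Your explicit replacement of the path-dependent width by the deterministic envelope $2^{p(B)-\ell}/d_X$ usefully spells out a step the paper leaves implicit: the failure event is contained in an event depending only on $(X,Y)$.
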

\begin{proof}
Condition on the complete value of the other operand.  If the certified intervals overlap, $X$
lies in an interval of length at most $2(|I_X(\ell)|+|I_Y(\ell)|)$ around that value.  The
conditional probability is therefore at most $2^{p(B)+1-\ell}$.  Taking
$\ell=q_0+q$ with $q_0\ge p(B)+1$ proves the claim.  The fixed-threshold case is identical.
\end{proof}

\begin{lemma}
\label{lem:gram-computation}
Let $\Gamma\in\Q^{r\times r}$ be positive definite.  Every lattice operation used by the ADRS
sampler can be implemented directly in integer coefficient coordinates in time polynomial in $r$
and the bit length of $\Gamma$.  In particular, the implementation does not require a rational
Euclidean basis for the lattice represented by $\Gamma$.
\end{lemma}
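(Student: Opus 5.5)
The plan is to identify a lattice with its Gram matrix and work only with integer coefficient vectors. A rank-$r$ lattice is recorded by $\Gamma=B^TB\in\Q^{r\times r}$. A lattice vector $x=Ba$ is stored as its coefficient vector $a\in\Z^r$. Every Euclidean quantity the ADRS sampler consults is a polynomial in $\Gamma$ and the coefficients: $\norm{x}^2=a^T\Gamma a$, inner products are $a^T\Gamma b$, and parameters enter only through the rational ratios $a^T\Gamma a/\sigma$. I will go through the operations of the ADRS sampler one at a time and exhibit a coefficient-domain version of each, with exact rational arithmetic.

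\textbf{Step 1: the initial sampler.} The ADRS smooth sampler is seeded by the BLPRS/GPV sampler. That sampler needs the Gram--Schmidt data of the basis, namely the squared lengths $\norm{b_i^*}^2$ and the coefficients $\mu_{ij}=\ip{b_i}{b_j^*}/\norm{b_j^*}^2$. Both come from the rational $LDL^T$ (Cholesky-without-square-roots) factorization $\Gamma=U^TDU$, with $U$ unit upper triangular and $D$ diagonal. This factorization is computable by exact Gaussian elimination in time polynomial in $r$ and $\operatorname{bits}(\Gamma)$, with the usual Bareiss-type bit-length bounds. The one-dimensional samples then have rational centers $-\sum_{j>i}\mu_{ji}c_j$ and rational squared widths $\sigma/D_{ii}$. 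Square roots are never needed as exact values, only as certified intervals, and those are efficiently computable as noted before \Cref{lem:lazy-bernoulli}. The output is built directly as the integer vector $c$ obtained by back-substitution through $U$.

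\textbf{Step 2: the remaining operations.} Basis reduction (LLL) runs on $\Gamma$ alone, since it only uses Gram--Schmidt data and integer unimodular updates $\Gamma\mapsto V^T\Gamma V$. Passing to a sublattice or superlattice of index $2$, as in the ADRS sublattice step, replaces $\Gamma$ by $C^T\Gamma C$ for an explicit integer or half-integer matrix $C$. The pair-and-average combination $(x+y)/2$ becomes $(a+b)/2$. Its membership in the current lattice is a parity test on $a+b$ in the appropriate coefficient system. The one genuinely Euclidean test in the sampler compares $e^{-\pi\norm{x}^2/\sigma}$-type ratios, and it uses only $a^T\Gamma a$. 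Each of these is a polynomial-size rational computation, so every operation is polynomial in $r$ and $\operatorname{bits}(\Gamma)$.

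\textbf{Main obstacle: bit-length control across repeated basis changes.} The ADRS recursion applies $\Gamma\mapsto C^T\Gamma C$ many times. I would not compose these changes naively. Instead, I would re-express every intermediate lattice relative to the fixed initial basis: each lattice is $\Gamma_0$ together with an integer-or-dyadic matrix $C_k$, and $C_k$ is kept in Hermite normal form after each step. The index of each step is $2$ and the depth is polynomial, so the entries of $C_k$ have polynomially many bits. Then $\Gamma_k=C_k^T\Gamma_0C_k$ also has polynomial bit length. This uniform representation is what justifies the phrase "does not require a rational Euclidean basis": the rational basis $B$ of $L_z$ from \Cref{lem:kernel-basis} would be used only through $\Gamma=G_z^TG_z$.
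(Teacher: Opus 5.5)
Your proposal is correct and follows essentially the same route as the paper: vectors are stored as integer coefficient vectors, inner products are $a^T\Gamma b$, Gram--Schmidt data come from $\Gamma$ (your $U^TDU$ factorization makes this explicit), every lattice change is a congruence $C^T\Gamma C$, membership is an integer-coordinate test, and real comparisons use certified intervals. The differences are cosmetic: the paper also records that the dual form is $\Gamma^{-1}$ and that $\Gamma=[2]$ admits no rational $G$ with $G^TG=\Gamma$, which is why the ``in particular'' clause matters; it controls intermediate bit lengths later, through the ADRS halving-coordinates tower, so your HNF bookkeeping is not needed here but is harmless if used only for storage.
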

\begin{proof}
Exact inner products are $a^T\Gamma b$, the dual quadratic form is represented by $\Gamma^{-1}$,
and Gram--Schmidt quantities are efficiently computable from $\Gamma$.
All lattice transformations in the sampler are rational changes of coefficient coordinates with
denominators of polynomial bit length.  Their
Gram matrices are obtained by congruence transformations $U^T\Gamma U$.  Exact membership and equality
are integer-coordinate questions, while all real comparisons depend only on rational quadratic forms,
square roots, exponentials, and theta masses, which are efficiently computable as above.
No rational Euclidean factorization need exist: $\Gamma=[2]$ has no
$G\in\Q^{1\times1}$ satisfying $G^TG=\Gamma$.
\end{proof}

\begin{lemma}
\label{lem:oned-exact}
The one-dimensional sampling step used in the BLPRS construction has an exact Las Vegas
implementation from fair random bits for the parameters arising from rational Gram data.  Its expected bit
cost is polynomial in the input length.  Its branches requiring adaptive precision satisfy the
conditional tail bound in \Cref{lem:truncate-sampling} after every transcript prefix that has not
exceeded the representation cutoff, and only a constant number of lazy uniforms are stored
simultaneously.
\end{lemma}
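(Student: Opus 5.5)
The plan is to implement the one-dimensional step as rejection sampling over $\Z+c$ from a two-sided geometric (Laplace-type) envelope. The target is the density $\propto e^{-\pi (x-c)^2/\sigma}$ on $x\in\Z$, where the squared scale $\sigma$ and center $c$ come from rational Gram data. Their bit lengths are polynomial: $\sigma$ is $\sigma/\|\tilde b_i\|^2$, a quotient of rationals, but the center $c$ may involve a square root only through ratios that are rational in Gram coordinates. I would first write every scale as a positive rational and every center as a rational, using \Cref{lem:gram-computation}. I would then split into a narrow and a wide regime according to whether $\sigma$ is below or above a fixed constant.

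\textbf{Narrow regime.} Sample $x$ from a discrete envelope centered at $\lfloor c\rceil$ with geometric tails of ratio $e^{-\pi/\sigma}$ or a fixed ratio $1/2$. Such a variable is exactly generated from fair bits with expected $O(1)$ bits by repeated coin flips, and it dominates the Gaussian up to a constant factor $M$, since the Gaussian decays at least geometrically beyond distance one from the center. Accept with probability $e^{-\pi(x-c)^2/\sigma}/(M\,g(x))$. This ratio is efficiently computable, because it is an exponential of a rational divided by a rational envelope value, so \Cref{lem:lazy-bernoulli} gives exact acceptance.

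\textbf{Wide regime.} Sample an integer $k$ with two-sided geometric weights of ratio $e^{-1/\sqrt\sigma}$ around $c$, or a discretized continuous Gaussian obtained by flooring a lazily generated continuous variable. Then accept with the appropriate exponential ratio. The acceptance rate is bounded below by a constant because the discrete and continuous masses agree within constant factors for $\sigma\ge1$ via Poisson summation. This is where the theta masses, computed by Poisson transformation, certify the normalizers. In both regimes the expected number of trials is $O(1)$, each trial has polynomial expected bit cost, and the output integer has expected bit length $O(\log\sigma + \operatorname{bits}(c))$. The number of simultaneously stored lazy uniforms is constant: at most one for the envelope draw and one for the acceptance test.

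\textbf{Tail hypothesis.} This is the main obstacle. I would show every adaptive comparison fits \Cref{lem:lazy-comparison}. The acceptance test compares a fresh uniform, of density $1$, with an efficiently computable threshold. If envelope values, scales, and coefficients have $B$ bits, the threshold can be refined to width $2^{p(B)-\ell}$ in polynomial time, which gives $q_0=\poly(B)$. The inversion of a geometric or continuous envelope compares a lazy uniform with thresholds of the form $e^{-\lambda k}$, again with density $1$, and is handled the same way.

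The delicate part is that conditioning on a prefix must leave the unread tail uniform. I would argue this by noting that each comparison reads from a fresh tail, or from the unread suffix of an existing one, and that conditioning on a dyadic prefix leaves the suffix uniform. The representation cutoff handles the case where $x$ is huge: $B$ is then polynomial only before the cutoff, which is exactly what \Cref{lem:truncate-sampling} requires.
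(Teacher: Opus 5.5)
Your proposal is essentially sound but takes a different route from the paper. The paper does not design a new one-dimensional sampler. It implements the BLPRS procedure itself:
\begin{itemize}
\item a one-sided continuous normal is drawn with Karney's exact Algorithm~N, rejecting outside the half-line; this costs $O(1)$ because the normalized width is at least $\sqrt{\ln(2n+4)/\pi}$;
\item the sample is rounded to the adjacent point $y$ and accepted by comparing an independent exact exponential $E$ with $\pi(y^2-x^2)/r^2$;
\item the acceptance rates $1/2$ and $e^{-2}$ are quoted from BLPRS.
\end{itemize}
Its continuous values are affine images of lazy uniform tails, so some comparisons involve two random operands. The paper verifies the tail hypothesis through the fact that (conditional density)$\times$(output width after $\ell$ more bits)$=2^{-\ell}$, together with balanced refinement. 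A side benefit is that a huge width costs only $O(\log r)$ extra bits when rounding a scaled Karney normal.

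Your discrete-envelope sampler instead arranges that every adaptive comparison pits one lazy uniform against a threshold computed from already-fixed rational data. Hypothesis (iii) of \Cref{lem:truncate-sampling} then follows almost directly from \Cref{lem:lazy-bernoulli}, which is a simpler verification. The price is that you must establish the envelope bounds and acceptance rates yourself rather than citing BLPRS, and two of your steps need tightening.

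First, as written you accept with the unnormalized weight $e^{-\pi(x-c)^2/\sigma}/(Mg(x))$. The acceptance rate is then $\rho_{\sqrt\sigma}(\Z-c)/M$, which is not bounded below as $\sigma\to0$ (take $c$ a half-integer). You need to invoke the BLPRS width bound $\sigma\ge\ln(2n+4)/\pi$, which you never use, or else normalize by the peak weight.

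Second, in the wide regime $\sigma$ can be $2^{\poly}$. So the geometric variable of ratio $e^{-1/\sqrt\sigma}$ cannot be drawn by sequential Bernoulli trials or by a linear scan over the thresholds $e^{-\lambda k}$, since both take about $\sqrt\sigma$ steps. You need a doubling-and-bisection search using $\poly(\log\sigma)$ comparisons. After the first comparison, the uniform has density $1$ only after rescaling its current dyadic cell, so each later threshold must be refined relative to that cell's width. This is exactly the balanced refinement the paper uses, and your ``again with density $1$'' remark glosses over it.
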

\begin{proof}
Write $n$ for the ambient rank and $P$ for the bit length of the rational Gram data.
Karney represents a continuous uniform variable by a finite binary prefix with an unread uniform
tail.  In Algorithms E and N, after the discrete envelope and sign have been chosen, the
exponential and normal outputs have the form $k+U$, up to a fixed sign, where $U$ is this uniform
fractional tail~\citep[Algorithms~E and N]{Karney16}.
Each one-sided restriction of the continuous normal used by BLPRS is sampled by rejecting
Karney normals outside the required half-line.  In the BLPRS one-dimensional calls, the normalized
width is at least $\sqrt{\ln(2n+4)/\pi}$, and the truncation point is either
$c\in[0,1)$ or $c-1\in(-1,0]$~\citep[Section~5.1]{BLPRS13}.  The retained half-line therefore
has probability bounded below by an absolute constant, so this rejection has constant expected
cost.

Once a rejection decision has been certified from finite prefixes, the unread fractional tail of
an accepted value remains uniform on its current dyadic interval.  If its output is $a+bU$ and the
current cell for $U$ has length $h$, the conditional density is $(|b|h)^{-1}$, while $\ell$
additional bits give an output interval of width $|b|h2^{-\ell}$.  Their product is exactly
$2^{-\ell}$, independently of how many bits were read before the current prefix.

The comparison routines use balanced refinement: a deterministic threshold is certified to the
current interval width before another tail bit is read, and two random operands are refined so that
their current output intervals differ by at most a factor $2^{\poly(n,P)}$.  Before the
representation cutoff, all scales, reciprocal scales, and derivatives used to propagate an
interval have polynomial bit length, so this balance costs only a polynomial number of refinement
steps.  For two operands, condition on the one with the smaller current output interval and use the
independent tail with the larger interval as $X$.  Consequently the density of $X$ times the sum of
the two intervals after $\ell$ further rounds is at most $2^{\poly(n,P)-\ell}$.  The same estimate
holds against a fixed certified threshold.  Thus \Cref{lem:lazy-comparison} gives a polynomial
$q_0$ and the conditional $6\cdot2^{-q}$ bound required by \Cref{lem:truncate-sampling}.

The acceptance probabilities and theta masses in the BLPRS rejection sampler are efficiently
computable from the rational Gram data by certified intervals.  For the one-dimensional acceptance
ratio, the adjacent proposal point $y$ satisfies
$a=\pi(y^2-x^2)/r^2\ge0$.  Draw an independent exact exponential $E$ and accept exactly
when $E\ge a$; this has probability $e^{-a}$.  The exponential tail used in this comparison is
independent of the proposed point.  Outside the standard Gaussian and exponential tail event, all
continuous values have polynomial-bit magnitude; on this cutoff range, the expression $a$ and its
derivatives have size at most $2^{\poly(n,P)}$.  Balanced refinement keeps the certified interval
for $a$ within a $2^{\poly(n,P)}$ factor of the current interval for $E$, so the same relative
small-ball estimate applies.  Exceeding the representation cutoff is accounted for separately by
$\delta_{\mathrm{repr}}$ in \Cref{lem:truncate-sampling}.

Replacing each remaining real comparison by the preceding lazy comparison therefore preserves its
exact discrete distribution.
The inner iteration succeeds with probability at least $1/2$, and the outer rejection step succeeds
with probability at least $e^{-2}$~\citep[Lemma~2.3 and Section~5]{BLPRS13}.  Hence the expected
number of iterations, and thus the expected bit cost, is polynomial.  Each iteration discards its
lazy variables before the next one begins.
\end{proof}

\subsection{Implementing the ADRS sampler with fair random bits}

The remaining random choices in ADRS are integer/rational category counts, pairings, permutations,
and Bernoulli or Poisson variables.  Uniform integers, permutations, and the primitive vector $z$
are generated by ordinary rejection from fixed-length random strings.  Bernoulli decisions use
\Cref{lem:lazy-bernoulli}.

For a Poisson variable of rational mean $\lambda>0$, put
$b=\lceil8\lambda\rceil+2\kappa$ and $k_0=\lfloor\lambda\rfloor$.  Starting from $w_{k_0}=1$,
compute the unnormalized weights on $\{0,\ldots,b\}$ by
\[
 w_{k-1}=w_k\frac{k}{\lambda}\quad(k\le k_0),
 \qquad
 w_{k+1}=w_k\frac{\lambda}{k+1}\quad(k\ge k_0).
\]
Every multiplier is at most one.  Fixed-point interval arithmetic with
$Q=4\kappa+4\lceil\log_2(b+2)\rceil+20$ guard bits therefore produces dyadic weights whose normalized
distribution differs by $\exp(-\Omega(\kappa))$ from the Poisson distribution conditioned on
$X\le b$.  Indeed, the recurrence accumulates at most $b2^{-Q}$ error in any weight and hence at most
$b^2 2^{-Q}$ in total; the exact mode weight is one, so normalization does not amplify this bound.
An exact draw from the dyadic weights uses rejection from a uniform integer.  This bounds the error
from representing the truncated weights.  It remains to bound the probability that the Poisson
variable lies beyond the truncation point.  The standard Chernoff bound
$\Pr[X>b]\le(e\lambda/b)^b$ is at most $e^{-2\kappa}$ by the choice of $b$.  Thus the draw has
statistical error $\exp(-\Omega(\kappa))$ and costs
$O((\lambda+\kappa)\poly(\log(2+\lambda),\kappa))$ bit operations.  This mode-centered computation
does not form $e^{-\lambda}$ or subtract nearly equal cumulative probabilities.  The sum of the
Poisson means in an ADRS call is already charged to the number of list elements processed in its
running-time analysis.

Consequently the fair-random-bit execution has the distribution in
\Cref{thm:adrs-smooth-sampler}, up to $\exp(-\Omega(\kappa))$ statistical error, and uses expected
$2^{n/2+\polylog(\kappa)+o(n)}\poly(n,\beta_{\mathrm{in}},\beta_s)$ bit operations per call.

The discrete Gaussian has unbounded support, so coefficient lengths cannot be deterministically
polynomial without a tail event.  The following standard truncation supplies the required bound.

\begin{lemma}
\label{lem:coefficient-tail}
Let $G\in\Q^{n\times n}$ be a nonsingular basis of bit length at most $P$, let
$\sigma=s^2\in\Q_{>0}$ have bit length at most $P$, let $X\sim D_{G\Z^n,s}$, and let
$K\ge n^3$.  Except with probability $\exp(-\Omega(K))$, the ambient coordinates of $X$ and its
coefficient vector $G^{-1}X\in\Z^n$ have bit length $\poly(n,P,\log K)$.
\end{lemma}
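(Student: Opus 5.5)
The proof will have two parts. First we bound the Euclidean norm of $X$ using \Cref{lem:dgs-tail}. Then we convert that norm bound into a bound on coefficient bit lengths using the rational structure of $G$.

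\emph{Norm bound.} Apply \Cref{lem:dgs-tail} with $T:=\max\{n,\sqrt{K}\}$. Since $\sqrt{2\pi e}\,Te^{-\pi T^2}\le e^{-T^2}$ for $T\ge n\ge 2$ (small $n$ only affects constants), we get
\[
\Pr[\norm{X}>Ts\sqrt n]\le e^{-nT^2}\le e^{-K}.
\]
The hypothesis $K\ge n^3$ is only needed to ensure $T\ge n$, so the bound holds in the regime where the lemma is clean. Outside this event, $\norm{X}\le \sqrt{Kn}\,s$. The rational $\sigma$ has bit length at most $P$, so $s\le 2^{P}$. Hence $\norm{X}\le 2^{P+O(\log K+\log n)}$.

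\emph{Coefficients.} Write $X=Gc$ with $c\in\Z^n$. Then $\norm{c}_\infty\le\norm{c}\le\norm{G^{-1}}_{\mathrm{op}}\norm{X}$. Cramer's rule bounds every entry of $G^{-1}$ by $2^{\poly(n,P)}$: the numerators are cofactors of $G$ after clearing a common denominator $d\le 2^{nP\cdot n}$, and $|\det|$ of the integer matrix $dG$ is at least one. Therefore each $|c_i|\le 2^{\poly(n,P)+O(\log K)}$, and $c$ has bit length $n\cdot\poly(n,P,\log K)$.

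\emph{Ambient coordinates.} The coordinates $X=Gc$ are rationals. Their common denominator divides the least common multiple of the denominators of $G$, which is at most $2^{n^2P}$. Their numerators are bounded by $n\cdot 2^P\cdot\max|c_i|$ times that denominator. This again gives $\poly(n,P,\log K)$ bits.

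The only step needing care is the Cramer's-rule bound on $G^{-1}$, in particular handling the denominators correctly. This is routine. The tail event has probability $e^{-K}=\exp(-\Omega(K))$, which proves the lemma.
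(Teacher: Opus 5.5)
Your proposal is correct and follows the paper's proof. You obtain a norm bound from \Cref{lem:dgs-tail} with $T$ of order $\sqrt K$, bound $\log(1+\norm{G^{-1}})=\poly(n,P)$ by Cramer's rule, and from that control first the coefficient vector and then the ambient coordinates. One small difference: you take $T=\sqrt K$ instead of the paper's $T=\sqrt{CK/n}$, which loses a $\sqrt n$ factor in the norm bound but makes the tail estimate immediate. As a result, your argument does not actually need $K\ge n^3$, and your remark that this hypothesis ensures $T\ge n$ is unnecessary, since $T\ge n$ holds by definition of the maximum.
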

\begin{proof}
Apply \Cref{lem:dgs-tail} with $T=\sqrt{C K/n}$ for a sufficiently large absolute constant $C$.
Because $K\ge n^3$, the polynomial factor inside the tail bound is dominated by the Gaussian
exponent, and $\norm X\le s\sqrt{CK}$ except with probability $\exp(-\Omega(K))$.  Standard
determinant and minor bounds for rational matrices give
$\log(1+\norm{G^{-1}})=\poly(n,P)$.  Hence
$\norm{G^{-1}X}\le\norm{G^{-1}}s\sqrt{CK}$
has logarithm polynomial in $n$, $P$, and $\log K$ on the same event.  Rational ambient
coordinates then have polynomial bit length as well.
\end{proof}

For CVP, we represent the augmented lattice by a rational Gram matrix.  The same tail argument
controls its coefficient representation.

\begin{lemma}
\label{lem:coefficient-tail-gram}
Let $\Gamma\in\Q^{r\times r}$ be positive definite of bit length at most $P$, let
$\sigma=s^2\in\Q_{>0}$ have bit length at most $P$, and let $a\in\Z^r$ be the coefficient vector of
$X\sim D_{\Lambda,s}$ in the lattice represented by $\Gamma$.  If $K\ge r^3$, then, except with
probability $\exp(-\Omega(K))$,
\[
a^T\Gamma a\le CKs^2
\qquad\text{and}\qquad
\norm a^2\le CKs^2/\underline\lambda,
\]
where $C$ is an absolute constant and
$0<\underline\lambda\le\lambda_{\min}(\Gamma)$ is a computable rational of bit length
$\poly(r,P)$.  In particular, $a$ has bit length $\poly(r,P,\log K)$.
\end{lemma}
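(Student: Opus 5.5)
The plan is to reduce to \Cref{lem:dgs-tail} exactly as in the proof of \Cref{lem:coefficient-tail}, but working intrinsically with the quadratic form $\Gamma$ rather than with a Euclidean basis, since \Cref{lem:gram-computation} warns that no rational factorization need exist. Abstractly, pick any real matrix $G$ with $G^TG=\Gamma$ (for instance the Cholesky factor). Then $\Lambda=G\Z^r$ is a rank-$r$ lattice with $\norm{Ga}^2=a^T\Gamma a$, and $D_{\Lambda,s}$ is the law of $Ga$, where the coefficient vector $a$ has weight proportional to $e^{-\pi a^T\Gamma a/s^2}$. Only this real factor is used in the analysis; nothing is computed with it, so irrationality of $G$ causes no difficulty.

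First apply \Cref{lem:dgs-tail} in dimension $r$ with $T=\sqrt{CK/r}$. Since $K\ge r^3$, we have $T\ge r\ge1/\sqrt{2\pi}$ for $r\ge1$. The bound $(\sqrt{2\pi e}\,Te^{-\pi T^2})^r=\exp(-\pi CK+\tfrac r2\log(2\pi eCK/r))$ is $\exp(-\Omega(K))$ once $C$ is a large absolute constant, because $r\log(K)\le O(K)$ when $K\ge r^3$. On the complementary event, $a^T\Gamma a=\norm{X}^2\le T^2s^2r=CKs^2$, which is the first inequality.

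Second, deduce $\norm a^2\le a^T\Gamma a/\lambda_{\min}(\Gamma)\le CKs^2/\underline\lambda$. It remains to produce a computable rational $\underline\lambda\in(0,\lambda_{\min}(\Gamma)]$ of bit length $\poly(r,P)$. The route I would take uses $\lambda_{\min}(\Gamma)=1/\lambda_{\max}(\Gamma^{-1})\ge1/\operatorname{tr}(\Gamma^{-1})$. Here $\Gamma^{-1}$ is computed exactly by rational Gaussian elimination in polynomial time, and its entries have polynomial bit length by the standard Cramer/Hadamard bounds. Its trace is a positive rational of polynomial bit length, and $\underline\lambda:=1/\operatorname{tr}(\Gamma^{-1})$ works. (Alternatively one can use $\det\Gamma/(\operatorname{tr}\Gamma)^{r-1}$.)

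Finally, every coordinate of $a$ is bounded by $\norm a\le\sqrt{CKs^2/\underline\lambda}$. Taking logarithms, $\log(CK)$, $\log s^2\le P$, and $\log(1/\underline\lambda)=\poly(r,P)$ give integer coordinates of bit length $\poly(r,P,\log K)$. I expect no real obstacle here. The only points needing care are the absorption of the polynomial prefactor in the tail bound, which is where $K\ge r^3$ is used, and making sure $\underline\lambda$ is an explicitly computable rational rather than the irrational eigenvalue itself.
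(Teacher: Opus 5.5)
Your proposal is correct and follows the paper's proof. Both apply \Cref{lem:dgs-tail} with $T=\sqrt{CK/r}$ to get $a^T\Gamma a=\norm X^2\le CKs^2$, then divide by a computable rational lower bound on $\lambda_{\min}(\Gamma)$. Your primary choice $\underline\lambda=1/\operatorname{tr}(\Gamma^{-1})$ is only a cosmetic variant of the paper's $\det(\Gamma)/\lambda_{\max}(\Gamma)^{r-1}$ bound, and your parenthetical alternative $\det\Gamma/(\operatorname{tr}\Gamma)^{r-1}$ is exactly the paper's bound with the trace serving as the upper bound on $\lambda_{\max}$.
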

\begin{proof}
Apply \Cref{lem:dgs-tail} with $T=\sqrt{CK/r}$ to obtain
$a^T\Gamma a=\norm X^2\le CKs^2$ outside an event of probability $\exp(-\Omega(K))$.
Positive definiteness gives $a^T\Gamma a\ge\lambda_{\min}(\Gamma)\norm a^2$.  Clearing
denominators in $\Gamma$, and using its positive determinant together with a rational upper bound
on its largest eigenvalue, gives a rational lower bound
$\underline\lambda\le\det(\Gamma)/\lambda_{\max}(\Gamma)^{r-1}\le\lambda_{\min}(\Gamma)$
of bit length $\poly(r,P)$.
\end{proof}

The preceding lemmas implement the random choices of one ADRS call and control the representation
tails of its output.  We now collect these ingredients in the bit-complexity guarantee used by the
outer algorithms.

\begin{theorem}[Turing-machine implementation of the ADRS sampler]
\label{thm:adrs-rational}
Let a rank-$r$ lattice be given by a rational basis or a positive-definite rational Gram matrix of
bit length at most $P$, let $s^2\in\Q_{>0}$ have bit length $\beta_s$, and let
$\kappa=\Omega(r)$.  The ADRS procedure has a randomized Turing-machine implementation whose
output, including $\bot$, is within $\exp(-\Omega(\kappa))$ statistical distance of the ideal list
distribution in \Cref{def:variable-dgs}.  There are computable bounds
$T_{\mathrm{ADRS}}(r,P,\beta_s,\kappa)$ and $S_{\mathrm{ADRS}}(r,P,\beta_s,\kappa)$, both at most
\[
2^{r/2+\polylog(\kappa)+o(r)}\poly(r,P,\beta_s),
\]
such that the expected running time is at most $T_{\mathrm{ADRS}}$ and every execution uses at most
$S_{\mathrm{ADRS}}$ bits.  The precision and representation cutoffs are also computable from the input.
\end{theorem}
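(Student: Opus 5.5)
The plan is to view the ideal ADRS procedure of \Cref{thm:adrs-smooth-sampler} as an exact Las Vegas procedure $\mathcal A$ on rational data, then apply \Cref{lem:truncate-sampling} with $K=\Theta(\kappa)$. First, reduce both input forms to Gram data. For a rational basis $G$ put $\Gamma=G^TG$. By \Cref{lem:gram-computation}, every lattice operation of ADRS (pairing by cosets, averaging, sublattice and dual manipulations, the initial BLPRS samples) becomes a rational change of integer coefficient coordinates, with Gram matrix $U^T\Gamma U$ of polynomial bit length. The random primitives are implemented as follows: uniform integers and permutations by fixed-length rejection, Bernoulli choices by \Cref{lem:lazy-bernoulli}, Poisson counts by the mode-centered recurrence above (error $\exp(-\Omega(\kappa))$ per draw), and the one-dimensional BLPRS steps by \Cref{lem:oned-exact}. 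The ADRS running-time analysis bounds the number of such primitive calls by $2^{r/2+\polylog(\kappa)+o(r)}$. A union bound over them, through \Cref{lem:adaptive-tv}, keeps the accumulated distribution error at $\exp(-\Omega(\kappa))$ once $\kappa$ absorbs the $\polylog$ count; it may be necessary to rescale $\kappa$ by a constant for this.

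Next, verify hypotheses (i)--(iv) of \Cref{lem:truncate-sampling}. For (i), take $S(x)$ to be the list size $2^{r/2+\polylog(\kappa)+o(r)}$ times the per-vector bit bound $\poly(r,P,\log\kappa)$. By \Cref{lem:coefficient-tail} or \Cref{lem:coefficient-tail-gram} with $K'=\max(\kappa,r^3)$, each stored vector exceeds this bound with probability $\exp(-\Omega(K'))$. Intermediate ADRS vectors are themselves discrete Gaussian samples at parameters in a polynomially bounded family, so the same lemma applies to them. A union bound over all stored vectors then gives $\delta_{\mathrm{repr}}=\exp(-\Omega(\kappa))$. Hypotheses (ii)--(iii) come from \Cref{lem:oned-exact} and \Cref{lem:lazy-comparison}: the only adaptive-precision comparisons are the BLPRS ones, since Bernoulli comparisons against one lazy uniform with a certified threshold fall directly under \Cref{lem:lazy-comparison} with density one. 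Hypothesis (iv) holds because each one-dimensional call stores $O(1)$ tails and discards them afterwards.

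With $T(x)=T_{\mathrm{ADRS}}$ taken as the expected-cost bound of the exact procedure, \Cref{lem:truncate-sampling} gives the following. The statistical error is $\delta_{\mathrm{repr}}+2^{1-K}=\exp(-\Omega(\kappa))$. The expected time is at most $T_{\mathrm{ADRS}}$. The space is at most $S+\poly(r,P,\kappa)\cdot\poly(Q,\log H)$, where $\log H=O(r+\polylog\kappa+\log P)$, so we may set $S_{\mathrm{ADRS}}$ to this expression. The truncated procedure aborts with $\bot$, and abortion is charged to the error bound, so the output including $\bot$ is close to \Cref{def:variable-dgs}. Adding the Poisson and Bernoulli discretization errors via \Cref{lem:adaptive-tv} completes the distribution claim. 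Computability of $H$, $Q$ and the cutoffs follows because $T$, $q_0$ and $S$ are explicit functions of $(r,P,\beta_s,\kappa)$.

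I expect the main obstacle to be hypothesis (i) for the intermediate lists. ADRS passes through many scales, and below smoothing its intermediate distributions are only mixtures of discrete Gaussians over sublattices, not exact $D_{\Lambda,s}$ samples. I would first argue that every intermediate vector is, conditionally on the combinatorial choices, distributed as a discrete Gaussian on some coset of a lattice with Gram data of polynomial bits. I would then apply \Cref{lem:shifted-mass} together with \Cref{lem:dgs-tail} in its shifted form. If that fails, the fallback is to bound norms directly: averaging never increases the maximum norm, so the initial BLPRS samples control all later ones.
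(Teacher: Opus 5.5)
Your proposal is correct and follows essentially the paper's route: instantiate \Cref{lem:truncate-sampling} with the ADRS expected-cost bound, discharge (ii)--(iv) via \Cref{lem:oned-exact,lem:lazy-comparison,lem:gram-computation}, and get the representation cutoff from \Cref{lem:coefficient-tail,lem:coefficient-tail-gram} plus a union bound over the exponentially many stored numbers. The paper uses the single parameter $K=K_{\mathrm{tail}}=C_K(r^3+\kappa+\log(2+P+\beta_s))$ for both the truncation and the tails, and it explicitly traces the $O(\ell)$ extra bits of the halved tower bases.

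Your averaging fallback is a sound way to cover the below-smoothing intermediate lists, which the paper treats only implicitly: every later vector arises from the exact BLPRS samples on $\Lambda_0$ by pairwise averaging, so its norm is controlled by theirs. Your primary route would not work as stated. \Cref{lem:shifted-mass} bounds a coset's Gaussian mass from above, while a per-coset tail bound needs the normalizing coset mass bounded from below.
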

\begin{proof}
Set
\[
K_{\mathrm{tail}}=C_K\bigl(r^3+\kappa+\log(2+P+\beta_s)\bigr).
\]
Let $B_{\mathrm{repr}}:=p_{\mathrm{repr}}(r,P,\beta_s,\log K_{\mathrm{tail}})$ be the
computable polynomial obtained by tracing
\Cref{lem:coefficient-tail,lem:coefficient-tail-gram,lem:kernel-basis} and the normal and
exponential tails above.  Let
$Q_0=p_{\mathrm{cmp}}(r,P,\beta_s,B_{\mathrm{repr}})$ bound the precision needed by the
certified comparisons in \Cref{lem:oned-exact}.  The quantity $K_{\mathrm{tail}}$ controls a tail
probability; a stored operand may still require $\poly(P)$ bits.

The ADRS procedure is explicit \citep[proof of Theorem 5.11]{ADRS15}.  With
\[
a=\left\lceil\frac r2+\frac{Cr}{\log r}\right\rceil,
\qquad
\ell=C\left\lceil\log^4r\right\rceil,
\qquad
M=(C\kappa^4)^{\ell+1}2^{a},
\]
it builds a tower $(\Lambda_0,\ldots,\Lambda_\ell)$ of index $2^{a}$ in $\poly(r)$ operations, draws
$M$ samples on $\Lambda_0$ in $(2^{O(r/\log r)}+M)\poly(r)$ operations, performs $\ell$ combiner
passes over a list of length $M$ in $M\poly(r,\kappa,\ell)$ operations, and repeats the whole
procedure $\kappa$ times.  The abstract operation count is therefore
\[
T_{\mathrm{abs}}
=
M\poly(r,\kappa,\ell)
=
2^{r/2+\polylog(\kappa)+o(r)},
\]
since $2^{a}=2^{r/2+O(r/\log r)}=2^{r/2+o(r)}$ and, for $\kappa=\Omega(r)$,
$(C\kappa^4)^{\ell+1}=2^{O(\log^4r\cdot\log\kappa)}=2^{\polylog(\kappa)}$.

Every operation in this procedure is an exact rational basis or Gram--Schmidt manipulation, a
one-dimensional BLPRS sample, or a category count, pairing, permutation, or Bernoulli/Poisson draw
on the current list.  The preceding subsections implement each of these on random bits, and
\Cref{lem:gram-computation,lem:oned-exact} cover the Gram computations and one-dimensional
sampling steps derived from either input representation.  Each branch requiring adaptive precision uses the unread
uniform tails described in \Cref{lem:truncate-sampling}, and the procedure stores only a polynomial
number of these tails at once.  Coefficient bounds come from \Cref{lem:coefficient-tail} for a
rational basis and from \Cref{lem:coefficient-tail-gram} for Gram coordinates; \Cref{lem:kernel-basis} bounds the
superlattice bases.  ADRS build the tower by halving coordinates~\citep{ADRS15}, so each of its
$\ell+1$ bases costs only $O(\lfloor\ell a/r\rfloor)=O(\ell)$ additional bits.  Hence
every number stored by the procedure has bit length
$\poly(r,P,\beta_s,\log K_{\mathrm{tail}})=B_{\mathrm{repr}}$ outside an event of probability
$\exp(-\Omega(K_{\mathrm{tail}}))$.  Since $\log K_{\mathrm{tail}}=O(\log r+\log\kappa+\log\log(2+P+\beta_s))$,
the factor $\poly(\log K_{\mathrm{tail}})$ is absorbed by the $2^{\polylog(\kappa)}$ already present.

Implementing each operation in the bit model multiplies $T_{\mathrm{abs}}$ by a
$\poly(r,P,\beta_s,\log K_{\mathrm{tail}})$ factor, giving the computable integer upper bound
$T_{\mathrm{ADRS}}$.  Set
\[
H_{\mathrm{ADRS}}
=\left\lceil2^{K_{\mathrm{tail}}}T_{\mathrm{ADRS}}\right\rceil,
\qquad
Q=Q_0+\left\lceil\log_2(6H_{\mathrm{ADRS}})\right\rceil+K_{\mathrm{tail}}.
\]
The stored list of length $M$ uses more space than all other exact data.  Adding the
polynomial number of stored prefixes, their $Q$-bit certified intervals, and the
$O(\log H_{\mathrm{ADRS}})$-bit operation counter does not change the stated
$S_{\mathrm{ADRS}}$ bound.

The cutoffs used here are computed deterministically from the input.  The computation evaluates the
displayed formulas for
$K_{\mathrm{tail}},a,\ell,M,T_{\mathrm{abs}},H_{\mathrm{ADRS}}$, propagates numerator and
denominator lengths through the fixed rational-arithmetic and Gram--Schmidt routines, and applies
the certified comparison moduli from \Cref{lem:truncate-sampling,lem:oned-exact}.  It returns
$(B_{\mathrm{repr}},Q,T_{\mathrm{ADRS}},S_{\mathrm{ADRS}},H_{\mathrm{ADRS}})$.  All recurrences
have polynomial length and use integer arithmetic on polynomial-bit inputs.  The resulting tuple is
computed within the time needed to write its binary representation.

The definition of $Q_0$ uniformly dominates the polynomial $q_0$ in
\Cref{lem:truncate-sampling} for every admissible primitive call.  Apply that lemma with
$K=K_{\mathrm{tail}}$.  The probabilities of exceeding $H_{\mathrm{ADRS}}$ operations or
precision $Q$ together contribute at most $2^{1-K_{\mathrm{tail}}}$ statistical distance,
uniformly over all adaptive prefixes.  The probability that a coefficient or stored number exceeds
the stated size is $\exp(-\Omega(K_{\mathrm{tail}}))$.  The procedure aborts before this occurs.
Adding these events to the $\exp(-\Omega(\kappa))$ error of the ADRS procedure proves the
claimed statistical-distance and space bounds.
\end{proof}

\begin{corollary}
\label{cor:kappa-absorb}
Fix $d\ge1$ and let $\kappa=C\bigl(r^3+\log(2+P+\beta_s)\bigr)$, which satisfies the hypothesis
$\kappa=\Omega(r)$ of \Cref{thm:adrs-rational}.  Then
$2^{O((\log\kappa)^d)}=2^{o(r)}\poly(P,\beta_s)$, and consequently
\[
T_{\mathrm{ADRS}},\;S_{\mathrm{ADRS}}
\;=\;
2^{r/2+o(r)}\poly(r,P,\beta_s).
\]
\end{corollary}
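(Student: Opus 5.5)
The plan is to reduce the corollary to one elementary estimate and then substitute it into \Cref{thm:adrs-rational}. Write $X:=\log(2+P+\beta_s)$, so that $\kappa=C(r^3+X)$. The hypothesis $\kappa=\Omega(r)$ of \Cref{thm:adrs-rational} is immediate, since $\kappa\ge Cr^3\ge r$ for $r\ge1$. For the main estimate, use $\log(a+b)\le\log(2\max\{a,b\})$ to get
\[
(\log\kappa)^d\le\bigl(O(1)+\max\{3\log r,\ \log X\}\bigr)^d .
\]
Then split into two cases according to which term attains the maximum.

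In the first case, $3\log r\ge\log X$. The exponent is then $O((\log r)^d)=o(r)$ for fixed $d$, so $2^{O((\log\kappa)^d)}=2^{o(r)}$.

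In the second case, $\log X>3\log r$. The exponent is then $O((\log X)^d)=O\bigl((\log\log(2+P+\beta_s))^d\bigr)$. Any fixed power of $\log\log$ is eventually at most $\log(2+P+\beta_s)$, so $2^{O((\log\log N)^d)}\le N^{O(1)}$ for all $N\ge2$, absorbing the finitely many small $N$ into the constant. This gives $\poly(P,\beta_s)$.

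Combining the two cases, $2^{O((\log\kappa)^d)}\le2^{o(r)}\poly(P,\beta_s)$, since each case is bounded by the product of the two factors. The only subtlety is that $d$ is fixed while the implied constants may depend on $d$, which is permitted because the statement fixes $d$ first.

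Finally, \Cref{thm:adrs-rational} bounds $T_{\mathrm{ADRS}}$ and $S_{\mathrm{ADRS}}$ by $2^{r/2+\polylog(\kappa)+o(r)}\poly(r,P,\beta_s)$. Here $\polylog(\kappa)$ means $O((\log\kappa)^d)$ for some fixed $d$; tracing its proof, $d$ is about $5$, coming from $\ell\log\kappa$ with $\ell=O(\log^4r)$. Applying the estimate with that $d$ turns the factor $2^{\polylog(\kappa)}$ into $2^{o(r)}\poly(P,\beta_s)$, and multiplying the polynomial factors gives the claimed bound.

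I expect no serious obstacle here. The only care needed is to make sure the $\polylog(\kappa)$ in \Cref{thm:adrs-rational}, together with the absorbed $\poly(\log K_{\mathrm{tail}})$ factors, has a fixed degree independent of the input. Otherwise the case analysis would not produce a polynomial in $P$ and $\beta_s$.
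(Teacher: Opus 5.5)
Your proposal is correct and follows the paper's own argument. The paper bounds $\log\kappa$ by $O(\log r+\log\log(2+P+\beta_s))$, sends the $(\log r)^d$ part into $2^{o(r)}$, and sends the $(\log\log(2+P+\beta_s))^d=o(\log(2+P+\beta_s))$ part into $\poly(P,\beta_s)$; your max and case split is only a repackaging of that sum decomposition. Your explicit check that the $\polylog(\kappa)$ in \Cref{thm:adrs-rational} has fixed degree (about $5$, coming from $\ell\log\kappa$) is a point the paper leaves implicit.
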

\begin{proof}
Here $\log\kappa=O\bigl(\log r+\log\log(2+P+\beta_s)\bigr)$, so
$(\log\kappa)^d=O((\log r)^d)+O\bigl((\log x)^d\bigr)$ with $x:=\log(2+P+\beta_s)$.  The first term
contributes $2^{O((\log r)^d)}=2^{o(r)}$.  For the second, $(\log x)^d=o(x)$, so it contributes
$2^{o(x)}\le2^{O(x)}=\poly(P,\beta_s)$.
\end{proof}

\subsection{Complexity of the sampling algorithm}

\Cref{thm:adrs-rational} implements each ADRS call on rational inputs.  It remains to sum the costs
and statistical errors over the calls made during one iteration of \Cref{alg:one-dgs}.

\paragraph{Expected running time.}

For \Cref{thm:one-dgs}, the adaptive execution is not stopped after a prescribed number of
operations, as it is for exact SVP.  Each iteration has expected running time
$2^{n/2+o(n)}\poly(n,\beta_{\mathrm{in}},\beta_s)$ and succeeds with probability
$1-\exp(-\Omega(\kappa_0))$, where $\kappa_0$ is defined in \Cref{alg:one-dgs}.  A local
precision or bit-length cutoff makes the current ADRS call return $\bot$, which
\Cref{alg:superlattice-run} treats as returning an empty sequence.  Only an iteration in which every run fails
restarts the outer loop.  The probability mass removed by the cutoffs is $\exp(-\Omega(\kappa_0))$, so these failures
change the output distribution only negligibly.  Repeating the iteration therefore has the same expected
running time.  Thus \Cref{thm:one-dgs} gives expected, rather than worst-case, running time.

\paragraph{Exact lattice arithmetic.}

A sample from $L_z$ is stored as an integer coefficient vector relative to the rational basis
$BC_z^{-T}$.  Its ambient coordinates are rational.  Membership in $L$ is equivalent to
$B^{-1}x\in\Z^n$ and is decided exactly.  Candidate squared norms are evaluated as exact
rationals.  On the complement of the negligible tail event in \Cref{lem:coefficient-tail}, all
operands fit in $\poly(n,\beta_{\mathrm{in}},m,\beta_s)$ bits.

\paragraph{Uniformity.}

One iteration of \Cref{alg:one-dgs} makes a polynomial number of sampler calls.  Every squared scale
and every superlattice basis has polynomial bit length, even when the superlattice index is large.
With $\kappa_0$ as defined in \Cref{alg:one-dgs}, the
$2^{\polylog(\kappa_0)+o(n)}$ overhead in \Cref{thm:adrs-rational} is absorbed into
$2^{o(n)}\poly(n,\beta_{\mathrm{in}},\beta_s)$.  A union bound over the calls in one iteration leaves
total statistical and representation error $\exp(-\Omega(n^3))$.
\section{Applications}

The pointwise estimate in \Cref{cor:pointwise-list} is explicit except for $\rho_t(L)$.  The next
lemma bounds this mass in terms of $t/\lambda_1(L)$.  Balancing a candidate vector's Gaussian
weight against this mass determines the exponents for both applications.

The constant in this mass bound comes from the Kabatiansky--Levenshtein spherical-code exponent at angle
$\pi/3$.

Put $v=\sqrt3/2$ and define
\[
\ckl
:=
\frac{1+v}{2v}\log_2\frac{1+v}{2v}
-
\frac{1-v}{2v}\log_2\frac{1-v}{2v},
\qquad
\betakl:=2^{\ckl}.
\]
The value $\ckl\approx0.4014$ is the direct Levenshtein
spherical-code exponent at angle $\pi/3$~\citep{KL78}.  We use it through the lattice-point
bound of Pujol and Stehl\'e~\citep[Lemma~3]{PS09}.  Set
\[
\mu:=\frac{\betakl^2}{2e\ln2}\approx0.4629,
\qquad
\delta:=\frac\mu2,
\qquad
c_0:=\frac12+\delta\approx0.7315.
\]

We use the following effective form of ADRS Lemma~4.2, obtained by retaining the unrounded
Kabatiansky--Levenshtein exponent and the lattice-shell argument of
\citep[Lemma~3]{PS09}.

\begin{lemma}
\label{lem:adrs-mass}
There is a computable function
$h_{\mathrm{KL}}:\Z_{>0}\to\Z_{\ge0}$ with
$h_{\mathrm{KL}}(n)=o(n)$ such that, for every rank-$n$ lattice $L$ and every $s>0$,
\[
\rho_s(L)
\le
1+
\left(
\frac{\betakl^2s^2n}{2\pi e\lambda_1(L)^2}
\right)^{n/2+1}
2^{h_{\mathrm{KL}}(n)}.
\]
The function $h_{\mathrm{KL}}$ may be chosen computable.
\end{lemma}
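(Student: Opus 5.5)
We prove \Cref{lem:adrs-mass} by a shell decomposition of $L$ combined with a lattice-point counting bound. By scaling, assume $\lambda_1(L)=1$ and write $\beta=\betakl$. The input is the Pujol--Stehl\'e bound \citep[Lemma~3]{PS09}: combined with the Kabatiansky--Levenshtein exponent at angle $\pi/3$, it gives
\[
 |\{x\in L:\norm x\le r\}|\le 2^{\ckl n+o(n)}\,r^n = \beta^{n}r^n2^{o(n)}
 \qquad (r\ge1).
\]
I would retain the unrounded exponent and keep the $o(n)$ term as an explicit computable function. The proof of that lemma runs as follows. Points of $L$ in a thin shell $\{\gamma r<\norm x\le r\}$ with $\gamma$ close to $1$ have pairwise distance at least $1$. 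Hence, once normalized, their pairwise angles are at least roughly $\pi/3$, and Kabatiansky--Levenshtein bounds their number by $2^{\ckl n+o(n)}$. A geometric decomposition into $O(\poly(n)\log r)$ such shells, with inner ones handled by a packing or volume bound, yields the count above. I would re-derive this with every $o(n)$ term written as an explicit computable expression.

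Next, I would bound the Gaussian sum by summing over dyadic-like shells. For $r\ge1$, let $N(r)$ denote the count above. By Abel summation,
\[
 \rho_s(L)-1=\int_1^\infty e^{-\pi r^2/s^2}\,dN(r)
 \le \int_1^\infty \frac{2\pi r}{s^2}e^{-\pi r^2/s^2}N(r)\,dr .
\]
Inserting $N(r)\le\beta^nr^n2^{o(n)}$ gives at most $\beta^n2^{o(n)}\int_0^\infty \frac{2\pi r^{n+1}}{s^2}e^{-\pi r^2/s^2}dr$. With $u=\pi r^2/s^2$, this integral equals $(s^2/\pi)^{n/2}\Gamma(n/2+1)$. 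Stirling's formula then gives $\Gamma(n/2+1)\le (n/(2e))^{n/2+1}2^{o(n)}$. Collecting factors produces $(\beta^2s^2n/(2\pi e))^{n/2}\cdot2^{o(n)}$, up to a polynomial factor.

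To match the displayed exponent $n/2+1$, I would split into two regimes. When the base $X:=\beta^2s^2n/(2\pi e)$ is at least $1$, the extra factor of $X$ only helps. When $X<1$, the sum is at most an exponentially small correction, and the leading $1$ together with the slack in $2^{h_{\mathrm{KL}}(n)}$ absorbs it. The remaining loose factors (powers of $\beta$, $\Gamma$ versus power, and polynomial terms) all go into $h_{\mathrm{KL}}$. Since every such term is an explicit elementary function of $n$, $h_{\mathrm{KL}}$ is computable; rounding up gives integer values.

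The main obstacle is the case of small $s$, that is $X<1$, where a factor $X^{n/2}2^{o(n)}$ must be compared with $X^{n/2+1}2^{h(n)}$ uniformly in $s$. Here I would avoid the integral bound and use directly that points have norm at least $1$. Then $\rho_s(L)-1\le\sum_{k\ge1}N(k+1)e^{-\pi k^2/s^2}$, and for $X\le1$ this sum is $e^{-\Omega(n)}$-small in a way dominated by $X^{n/2+1}$ only if $s$ is not tiny. For tiny $s$ the claim is weaker than needed, so I would check this carefully. If it fails, I would weaken the interpretation by allowing $h_{\mathrm{KL}}$ to absorb it, using $e^{-\pi/s^2}\le (c s^2)^{n/2+1}$ whenever $s^2\le 1/n$. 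This inequality holds with a computable $c^{n}$ loss, because $e^{-\pi y}y^{n/2+1}$ is bounded by $(n/(2\pi e))^{n/2+1}$, which is exactly the shape of the base.
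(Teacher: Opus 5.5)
Your architecture matches the paper's: a radius-uniform count $N(r)\le2^{\ckl n+o(n)}r^n$, Stieltjes summation, $\Gamma(n/2+1)$ with Stirling, and a split in $s$. However, the proposal does not close the regime $X<1$, and that is where uniformity in $s$ with $h_{\mathrm{KL}}(n)=o(n)$ is decided. Normalize $\lambda_1(L)=1$ and write $z=s^2n/(2\pi e)$, so the target is $1+\betakl^{n+2}z^{n/2+1}2^{h_{\mathrm{KL}}(n)}$.

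\begin{itemize}
\item The leading $1$ is $\rho_s(\{0\})$ and absorbs nothing.
\item Absorbing a $c^n$ loss into $h_{\mathrm{KL}}$ contradicts $h_{\mathrm{KL}}(n)=o(n)$.
\item Your diagnosis is reversed. For tiny $s$, $e^{-\pi/s^2}$ is super-polynomially small and the target is easy to meet. It is at moderate $s$ that your unit-width grouping fails. Its $k=1$ term charges up to $K_n2^n$ points (where $N(r)\le K_nr^n$) with weight $e^{-\pi/s^2}$. At $s^2=2\pi/n$, where $z=1/e$, this exceeds the target unless $h_{\mathrm{KL}}(n)\ge n-o(n)$.
\end{itemize}

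The integral bound you discarded is exactly what works in that range. The paper handles $s\le s_0=\sqrt{2\pi/n}$ via
\[
e^{-\pi r^2/s^2}\le e^{-\pi/s^2+n/2}\,e^{-\pi r^2/s_0^2}\qquad(r\ge1).
\]
This leaves only the coefficient $2K_n\sqrt n$ in front of $e^{-\pi/s^2}$. It then applies $e^{-m/x}\le4(x/e)^{m+1}$ with $m=n/2$ and $x=s^2n/(2\pi)$. A fix inside your own framework also works:
\begin{itemize}
\item keep the integral bound for all $s^2\ge1/n$, where $z^{n/2}\le2\pi e\,z^{n/2+1}$;
\item use unit shells only for $s^2\le1/n$, where $e^{-\pi/s^2}/z^{n/2+1}\le2^{-2.4n+O(1)}$ absorbs the factor $2^n$.
\end{itemize}

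Second, your sketch of the count is not right as stated. The lemma needs the count uniformly in $r$, because the Gaussian sum probes radii of order $s\sqrt n$ for every $s$.
\begin{itemize}
\item Points in a shell at radius $r$ with pairwise distance at least $1$ have normalized angular separation only about $1/r$, not $\pi/3$. If each of $O(\poly(n)\log r)$ shells obeyed the $\pi/3$ bound, the count would grow like $\log r$ rather than $r^n$.
\item The factor $r^n$ comes from Sidelnikov's cap reduction from angle $\pi/3$ in dimension $n+1$. This costs the reciprocal of a cap area of order $q^{n-1}$ with $q\approx1/r$.
\item Multiplicative coronas of ratio $1+1/\poly(n)$ have absolute width above $1$ once $r\gg\poly(n)$. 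There, $2ab(1-\cos\phi)\ge\norm{x-y}^2-(a-b)^2$ gives no separation. This is why the paper uses shells of absolute width $1/n$ and obtains $N_L(R)\le32n^2S_{n+1}R^n$ for all $R\ge1$.
\item The $o(n)$ in the Kabatiansky--Levenshtein exponent is not an explicit elementary expression. Your claim that every loose term is explicit therefore does not establish computability of $h_{\mathrm{KL}}$; a finite bound is needed. The paper uses Levenshtein's finite bound $S_{n+1}=M_{\mathrm{Lev}}(n+1,\pi/3)$, evaluated through Jacobi-polynomial zeros located by Sturm sequences.
\end{itemize}
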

\begin{proof}
The mass inequality is the argument of ADRS Lemma~4.2 with the unrounded spherical-code
exponent.  \Cref{app:effective-kl} gives a finite-dimensional version, defines
$h_{\mathrm{KL}}(n)$, and proves that it is computable in polynomial time.
\end{proof}

\begin{theorem}[{\citealp[Theorem~2.15]{ADRS15}}]
\label{thm:adrs-reduction}
Given a rational basis of a rank-$n$ lattice $L$, for every $r\ge2$ an
$r^{n/r}$-reduced basis can be computed in $\exp(O(r))\poly(n,\beta_{\mathrm{in}})$ time.
For $r=2$, the first basis vector has length $d$ satisfying
$\lambda_1(L)\le d\le2^{n/2}\lambda_1(L)$.
\end{theorem}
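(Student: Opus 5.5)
Since this statement is quoted from \citep[Theorem~2.15]{ADRS15}, I would not reprove the general case in detail. I would check the two facts the later applications actually use: the computability claim for general $r$, and the explicit $r=2$ bound on $d$. The lower bound $\lambda_1(L)\le d$ needs no argument, because the first basis vector is a nonzero vector of $L$. The content is therefore the upper bound together with the running time.

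For $r=2$ I would run LLL with parameter $\delta=3/4$ on the rational basis $B$. Write $b_1^*,\dots,b_n^*$ for the Gram--Schmidt vectors of the output. The Lov\'asz condition with $\delta=3/4$ gives $\norm{b_{i+1}^*}^2\ge\frac12\norm{b_i^*}^2$, and hence $\norm{b_i^*}^2\ge 2^{-(i-1)}\norm{b_1}^2$. Every nonzero lattice vector has length at least $\min_i\norm{b_i^*}$. This yields $\norm{b_1}\le2^{(n-1)/2}\lambda_1(L)\le2^{n/2}\lambda_1(L)$. The usual potential-function argument bounds the number of swaps by $\poly(n,\beta_{\mathrm{in}})$ for rational input, and the standard size bounds on the intermediate rationals give $\poly(n,\beta_{\mathrm{in}})$ bit operations. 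These are the classical LLL bounds. The resulting $2^{(n-1)/2}$ factor sits inside the claimed $2^{n/2}=r^{n/r}$ at $r=2$.

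For general $r$ I would use block reduction of Schnorr type with block size $r$, preferably in the slide-reduction form of Gama and Nguyen. Its projected $r$-dimensional blocks are handled by an exact SVP oracle, which I would instantiate with a $2^{O(r)}$-time exact-SVP algorithm such as the Voronoi-cell algorithm of~\citep{MV13}; $r^{O(r)}$ enumeration is not fast enough. The Hermite-type analysis of slide reduction gives $\norm{b_1}\le\gamma_r^{(n-1)/(r-1)}\lambda_1(L)$. Since $\gamma_r\le r$, this is at most $r^{n/r}$ up to the normalization of the exponent, which matches ADRS's definition of an $r^{n/r}$-reduced basis. The total cost is the number of oracle calls, polynomially many, times $2^{O(r)}\poly(n,\beta_{\mathrm{in}})$ per call, giving $\exp(O(r))\poly(n,\beta_{\mathrm{in}})$ overall.

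The step I expect to be the main obstacle is bit complexity in the general-$r$ case. I would need that the number of block-reduction tours is polynomial and that the projected block bases passed to the SVP oracle keep polynomial bit length. I would handle this as in LLL: first LLL-reduce, then use a potential function $\prod_i\norm{b_i^*}^{2(n-i)}$ that decreases by a constant factor per non-trivial tour, and bound the Gram--Schmidt denominators by subdeterminants of $B^TB$. For the present paper only the $r=2$ case enters \Cref{thm:cvp-alpha}, where it supplies a $2^{n/2}$-approximation to $\lambda_1(L)$. That case follows from LLL alone.
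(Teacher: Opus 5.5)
The paper gives no proof of this statement; it imports it from \citep[Theorem~2.15]{ADRS15} and only uses the $r=2$ consequence, to bracket $\lambda_1(L)$ within a factor $2^{n/2}$ in \Cref{alg:cvp,alg:svp}. Your LLL argument for that case is correct and self-contained. Size reduction plus the Lov\'asz condition with $\delta=3/4$ gives $\norm{b_{i+1}^*}^2\ge\frac12\norm{b_i^*}^2$, hence $\lambda_1(L)\le d\le2^{(n-1)/2}\lambda_1(L)$ in $\poly(n,\beta_{\mathrm{in}})$ bit operations. That is everything the later sections need.

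Your general-$r$ sketch, however, has a step that fails as written. From $\gamma_r\le r$ you only obtain $\norm{b_1}\le r^{(n-1)/(r-1)}\lambda_1(L)$, and
\[
\frac{r^{(n-1)/(r-1)}}{r^{n/r}}=r^{\frac{n-r}{r(r-1)}}.
\]
This ratio is exponential in $n$ for fixed $r$ (for instance $3^{(n-3)/6}$ at $r=3$), so it is not a matter of normalizing the exponent.

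The missing ingredient is the sharper estimate $\gamma_r\le r^{1-1/r}$, with some slack. For $r=2$ this holds because $\gamma_2=2/\sqrt3<\sqrt2$. For $r\ge3$, the standard bound $\gamma_r\le1+r/4$ is at most $r^{1-1/r}/1.18$. You should also use Gama--Nguyen's actual guarantee $\norm{b_1}\le((1+\epsilon)\gamma_r)^{(n-r)/(r-1)}\lambda_1(L)$. You dropped the factor $1+\epsilon$, which is exponential in $n/r$ unless $\epsilon$ is controlled; a constant such as $\epsilon=1/10$ suffices here. With these, you get $\norm{b_1}\le r^{(n-r)/r}\lambda_1(L)\le r^{n/r}\lambda_1(L)$.

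Two further points remain for general $r$. First, the original slide reduction assumes $r\mid n$, so you need to handle the case $r\nmid n$. Second, if the intended notion of an $r^{n/r}$-reduced basis also constrains the projected lattices (as HKZ-type definitions do), you must check it for those projections, not only for $b_1$. None of this affects the $r=2$ case on which the paper relies.
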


\subsection{Exact CVP}

For a fixed rational $\alpha\ge0$, let $\CVP_\alpha$ denote exact closest-vector search under the
promise $\dist(y,L)\le\alpha\lambda_1(L)$.  No uniqueness is assumed.

The centered algorithm rejects samples outside $L$.  For a target $y$, we instead embed the lattice
in one additional dimension and consider vectors whose last coordinate is fixed.  Their sampling
probability depends on $\dist(y,L)$, while the added coordinate is chosen so that the augmented
lattice remains above smoothing.

\begin{lemma}
\label{lem:shifted-list}
Fix $u>0$ and let $m=m_*(L,u)$ be the integer in \eqref{eq:mstar}.  Sample $z$ as in
\Cref{thm:random-smoothing}, put $\Lambda=L_z$, and let
$\mathcal E$ be the event $\rho_{1/u}(\Lambda^*\setminus\{0\})\le1/4$.
For $y\in\Q^n$, put $S=\sqrt2u$, $H=S/2$, and
\[
\widetilde\Lambda=\{(x-ky,kH):x\in\Lambda,\ k\in\Z\}.
\]
Then the following hold.
\begin{enumerate}[label=(\roman*),leftmargin=2.3em]
\item $\Pr_z[\mathcal E]\ge3/4$, and on $\mathcal E$ we have
$S>\sqrt2\eta_{1/2}(\widetilde\Lambda)$.
\item For every $x\in L$, writing $\Delta=\norm{x-y}$, the joint probability that $\mathcal E$
occurs and that a list of $N_{n+1}$ independent samples from $D_{\widetilde\Lambda,S}$ contains
$(x-y,H)$ is at least
\[
\frac1{C(2m-1)}
\min\left\{1,\frac{\exp(-\pi\Delta^2/(2u^2))}{\rho_u(L)}\right\},
\]
where $C=512$.  Replacing the ideal list by an output distribution within total variation distance
$\nu$ changes this probability by at most $\nu$; in particular, this applies to the ADRS
implementation in \Cref{thm:adrs-rational}.
\item The lattice $\widetilde\Lambda$ has a positive-definite rational Gram matrix whose bit length
is polynomial in the input length and $m$.
\end{enumerate}
\end{lemma}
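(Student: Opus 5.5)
The plan is to treat the three parts separately. Part (i) reduces to a dual computation for the one-dimensional extension. Part (ii) follows from a direct mass bound combined with \Cref{thm:mass-comparison}. Part (iii) is bookkeeping.

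\emph{Part (i).} The bound $\Pr_z[\mathcal E]\ge3/4$ is exactly \Cref{thm:random-smoothing} with $t=u$, since $m=m_*(L,u)$ and $\Lambda^*=M_z$. For smoothness, I would first compute the dual of $\widetilde\Lambda$. A basis of $\widetilde\Lambda$ consists of the columns $(b,0)$ for $b$ in a basis of $\Lambda$, together with $(-y,H)$. A vector $(w,c)$ lies in $\widetilde\Lambda^*$ iff $w\in\Lambda^*$ and $cH-\ip wy\in\Z$. Hence
\[
\widetilde\Lambda^*=\{(w,(k+\ip wy)/H):w\in\Lambda^*,\ k\in\Z\}.
\]
Then
\[
\rho_{1/u}(\widetilde\Lambda^*)=\sum_{w\in\Lambda^*}e^{-\pi u^2\norm w^2}\rho_{1/u}\bigl(H^{-1}\Z+\ip wy/H\bigr).
\]
By \Cref{lem:shifted-mass}, each one-dimensional factor is at most $\rho_{1/u}(H^{-1}\Z)$. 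Since $H=u/\sqrt2$, this equals $\sum_k e^{-2\pi k^2}=1+\epsilon_0$ with $\epsilon_0<0.004$. Removing the origin leaves two contributions. The $w=0$ part contributes at most $\epsilon_0$, and the $w\ne0$ part contributes at most $(1+\epsilon_0)/4$ on $\mathcal E$. The nonzero dual mass is therefore below $1/2$, so continuity gives $\eta_{1/2}(\widetilde\Lambda)<u=S/\sqrt2$.

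\emph{Part (ii).} One ideal sample equals $(x-y,H)$ with probability
\[
\frac{e^{-\pi(\Delta^2+H^2)/S^2}}{\rho_S(\widetilde\Lambda)}=\frac{e^{-\pi/4}e^{-\pi\Delta^2/(2u^2)}}{\rho_S(\widetilde\Lambda)}.
\]
Slicing by the last coordinate gives $\rho_S(\widetilde\Lambda)=\sum_k e^{-\pi k^2/4}\rho_S(\Lambda-ky)$. By \Cref{lem:shifted-mass}, this is at most $c_1\rho_S(\Lambda)$ with $c_1=\sum_k e^{-\pi k^2/4}<2.1$. On $\mathcal E$, \Cref{thm:mass-comparison} with $t=u$ bounds $\rho_S(\Lambda)\le40(2m-1)2^{n/2}\rho_u(L)$.

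Using $N_{n+1}\ge\sqrt2\,2^{n/2}$, the per-list quantity $Np$ is at least
\[
\frac{\sqrt2e^{-\pi/4}}{2.1\cdot40(2m-1)}\cdot\frac{e^{-\pi\Delta^2/(2u^2)}}{\rho_u(L)},
\]
and the leading constant exceeds $1/(130(2m-1))$. The inequality $1-(1-p)^N\ge\frac12\min\{1,Np\}$ then gives the bound conditional on $\mathcal E$. Multiplying by $\Pr[\mathcal E]\ge3/4$ gives a constant at most $1/(350(2m-1))$, which is absorbed by $C=512$. The total-variation remark holds because the event "the list contains a fixed vector" is a fixed event of the output. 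The approximation error therefore changes its probability by at most $\nu$, and \Cref{thm:adrs-rational} supplies $\nu$ via the threshold from part (i).

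\emph{Part (iii).} I would take the basis matrix $\widetilde G=\begin{pmatrix}G_z&-y\\0&H\end{pmatrix}$ with $G_z=BC_z^{-T}$. Its Gram matrix has the following entries:
\begin{itemize}
\item the block $G_z^TG_z$;
\item the column $-G_z^Ty$;
\item the corner $\norm y^2+H^2$, where $H^2=u^2/2$ is rational because $u^2$ is.
\end{itemize}
The bit length is polynomial by \Cref{lem:kernel-basis}. Positive definiteness holds because $\widetilde G$ is nonsingular.

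I expect the main care to lie in the dual description used in part (i), namely handling the shift $\ip wy/H$. Applying \Cref{lem:shifted-mass} coordinatewise resolves it.
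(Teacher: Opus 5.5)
Your proposal is correct and follows the paper's proof essentially step for step: the same dual description of $\widetilde\Lambda^*$, with \Cref{lem:shifted-mass} controlling the shifted last coordinate (mass at most $\rho_{1/\sqrt2}(\Z)<1.004$); the same slicing bound $\rho_S(\widetilde\Lambda)\le\rho_2(\Z)\rho_S(\Lambda)$ combined with \Cref{thm:mass-comparison} and $N_{n+1}\ge\sqrt2\,2^{n/2}$; and the same Gram-matrix bookkeeping. One harmless arithmetic slip: with $c_1=2.1$, the constant $\sqrt2e^{-\pi/4}/84$ is about $1/130.3$, not above $1/130$, but the final constant (about $347$, or about $331$ with the true value $\rho_2(\Z)\approx2.00001$) is still below your $350$ and well within $C=512$.
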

\begin{proof}
The dual lattice is
\[
\widetilde\Lambda^*
=
\left\{\left(w,\frac{j+\ip wy}{H}\right):w\in\Lambda^*,\ j\in\Z\right\}.
\]
At dual parameter $1/u$, the second coordinate contributes
$\exp(-2\pi(j+\ip wy)^2)$.  Since a shifted one-dimensional Gaussian has no more mass than the
centered one, with $\vartheta_0:=\rho_{1/\sqrt2}(\Z)<1.004$,
\[
\rho_{1/u}(\widetilde\Lambda^*\setminus\{0\})
\le(\vartheta_0-1)+\vartheta_0/4<1/2.
\]
Thus $S>\sqrt2\eta_{1/2}(\widetilde\Lambda)$.

The same one-dimensional comparison gives
\[
\rho_S(\widetilde\Lambda)
=\sum_{k\in\Z}e^{-\pi k^2/4}\rho_S(\Lambda-ky)
\le\rho_2(\Z)\rho_S(\Lambda),
\]
where the inequality follows from \Cref{lem:shifted-mass}.
The vector $(x-y,H)$ has weight
$e^{-\pi/4}\exp(-\pi\Delta^2/S^2)$.  There are $N_{n+1}$ ideal samples, and
\Cref{thm:mass-comparison} bounds
$\rho_S(\Lambda)$ by $40(2m-1)2^{n/2}\rho_u(L)$.  The number of samples cancels the factor
$2^{n/2}$; the event in \Cref{thm:random-smoothing} has constant probability.
The standard inequality $1-(1-p)^N\ge\tfrac12\min\{1,Np\}$ proves the probability bound.
Indeed, using $\rho_2(\Z)<9/4$ and $e^{\pi/4}<9/4$, the reciprocal of the resulting
absolute constant is at most
$320\rho_2(\Z)e^{\pi/4}/(3\sqrt2)<512$.
The total-variation statement follows because the probability of any event changes by at most the
total-variation distance.

If $G$ is the rational basis of $\Lambda$, the natural basis of $\widetilde\Lambda$ has Gram entries
$G^TG$, $-G^Ty$, and $\norm y^2+H^2$, all rational because $H^2=u^2/2$.  Coefficients recover the
integer $k$ in the last coordinate and test membership in $L$ exactly.  The representation and tail bounds follow
from \Cref{lem:gram-computation,lem:coefficient-tail-gram,thm:adrs-rational}.
\end{proof}

Define
\[
\alpha_c:=\frac{\sqrt{2e}}{\betakl},
\qquad
\psi(\alpha):=
\begin{cases}
\mu\alpha^2/2,&0\le\alpha\le\alpha_c,\\[0.4ex]
1/(2\ln2)+\log_2(\alpha/\alpha_c),&\alpha\ge\alpha_c.
\end{cases}
\]

To make the repetition count computable, put
\[
R_\alpha:=
\begin{cases}
\sqrt{2\pi e}/\betakl,&0\le\alpha\le\alpha_c,\\
\sqrt\pi\,\alpha,&\alpha\ge\alpha_c,
\end{cases}
\qquad
a_{\mathrm{KL}}:=\frac{\betakl^2}{2\pi e}.
\]
Fix integers $U_\alpha>\max\{1,R_\alpha\}$ and
\[
C_\alpha\ge12+\max\{0,\log_2(4a_{\mathrm{KL}}R_\alpha^2)\}.
\]
Since $\alpha$ is fixed, certified upper approximations can be used to choose and hardwire both
integers; no exact comparison with an integer boundary is required.
The constant accounts for the absolute constant in the preceding lemma, the extra power in
\Cref{lem:adrs-mass}, and rounding to the scale grid.

For each $n$, certified interval evaluation of the fixed exponent $\psi(\alpha)$ gives an
integer $E_\alpha(n)$ satisfying
\begin{equation}
\label{eq:E-alpha}
 \psi(\alpha)n+h_{\mathrm{KL}}(n)+C_\alpha
 \le E_\alpha(n)
 \le \psi(\alpha)n+h_{\mathrm{KL}}(n)+C_\alpha+2.
\end{equation}
It suffices to enclose $\psi(\alpha)n$ in an interval of width less than one and round the upper
endpoint upward.  This takes polynomially many bit operations and does not require deciding whether
the real number on the left of \eqref{eq:E-alpha} is an integer.

Because $\lambda_1(L)$ is unknown, the estimate in \Cref{thm:adrs-reduction} is used to form a
polynomial-size grid containing a scale within a factor $1+n^{-2}$ of the optimizer.

\begin{algorithm}[H]
\caption{Exact CVP}
\label{alg:cvp}
\begin{algorithmic}[1]
\Require Nonsingular rational basis $B\in\Q^{n\times n}$ of $L$, target $y\in\Q^n$,
and fixed rational $\alpha\ge0$
\State $\beta_{\mathrm{in}}\gets\operatorname{bits}(B)$ and
$\beta_y\gets\operatorname{bits}(y)$
\State use \Cref{thm:adrs-reduction} to obtain its first vector $b_1$ and set $d^2\gets\norm{b_1}^2\in\Q$
\State $J_{\mathrm{sc}}\gets\min\{J\ge1:(1+n^{-2})^J\ge2^n(n+1)U_\alpha^2\}$
\State $u_j^2\gets(n+1)^2U_\alpha^2d^2(1+n^{-2})^{-2j}$ for $0\le j\le J_{\mathrm{sc}}$
\State compute $E_\alpha(n)$ as in \eqref{eq:E-alpha}
\State $\mathsf{best}\gets0$ and
$\kappa_0\gets\left\lceil C_\kappa(n^3+\log(2+\beta_{\mathrm{in}}+\beta_y))\right\rceil$
\For{$j=0,\ldots,J_{\mathrm{sc}}$}
  \State $J_j\gets J_{\mathrm{mod}}(B,u_j^2)$
  \State $A_j\gets(2J_j+1)2^{E_\alpha(n)+2}$
  \For{$m=1,\ldots,J_j$}
    \For{$r=1,\ldots,A_j$}
      \State sample $z$ uniformly from the primitive vectors in $(\Z/2^m\Z)^n$
      and construct $C_z$ as in \Cref{lem:kernel-basis}; set $G\gets BC_z^{-T}$ and $L_z\gets G\Z^n$
      \State construct the Gram matrix of
      $\widetilde\Lambda=\{(x-ky,ku_j/\sqrt2):x\in L_z,\ k\in\Z\}$
      \State $W\gets$ the rank-$(n+1)$ ADRS output with squared parameter $2u_j^2$ and statistical parameter $\kappa_0$
      \If{$W\ne\bot$}
        \For{each coefficient vector $(a,k)\in W$ with $k=1$}
          \State $x\gets Ga$; if $x\in L$ and $\norm{x-y}^2<\norm{\mathsf{best}-y}^2$, set $\mathsf{best}\gets x$
        \EndFor
      \EndIf
    \EndFor
  \EndFor
\EndFor
\State \Return $\mathsf{best}$
\end{algorithmic}
\end{algorithm}

The displayed procedure describes the untruncated execution.  The randomized-Turing-machine
implementation analyzed below stops it at the explicit global operation and representation bounds given
in the proof and returns failure if either bound is exceeded.

\begin{theorem}[Exact CVP]
\label{thm:cvp-alpha}
For every fixed rational $\alpha\ge0$, there is a randomized Turing-machine algorithm for
$\CVP_\alpha$ on a rank-$n$ lattice specified by a nonsingular rational basis and a rational target
$y\in\Q^n$.  It succeeds with probability at least $2/3$.  On every execution, its running time is
\[
2^{(1/2+\psi(\alpha))n+o(n)}\poly(n,\beta_{\mathrm{in}},\beta_y)
\]
and its space usage is
$2^{n/2+o(n)}\poly(n,\beta_{\mathrm{in}},\beta_y)$.
In particular, the running time is strictly below $2^n$ for
$\alpha<1/\sqrt\mu$, where $1/\sqrt\mu\approx1.4697$.
\end{theorem}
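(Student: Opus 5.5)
We analyze \Cref{alg:cvp}, run with global operation and representation cutoffs. Correctness comes from showing that, for one good grid scale $u_j$ and the distinguished modulus $m=m_*(L,u_j)$, a single inner run places a fixed closest vector $x^\star$ (with $\Delta=\dist(y,L)\le\alpha\lambda_1(L)$) into the list with probability at least $2^{-E_\alpha(n)}/(C(2m-1))$. The algorithm only ever replaces $\mathsf{best}$ by a lattice point strictly closer to $y$. So once $x^\star$ appears in some list, the output is a closest vector; no uniqueness is needed. With $A_j=(2J_j+1)2^{E_\alpha(n)+2}$ repetitions at that $m\le J_j$, and $C=512$ absorbed into $C_\alpha\ge12$, the failure probability is at most $e^{-4}$ plus the negligible total variation and cutoff losses. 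This is below $1/3$.

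\textbf{Key estimate.} By \Cref{lem:shifted-list}(ii), it suffices to lower bound $\exp(-\pi\Delta^2/(2u^2))/\rho_u(L)$. Use $\Delta\le\alpha\lambda_1$ and \Cref{lem:adrs-mass} with $x:=u/\lambda_1$. This gives
\[
\log_2\frac{\rho_u(L)}{e^{-\pi\Delta^2/(2u^2)}}\le \frac{\pi\alpha^2}{2x^2\ln2}+\Bigl(\tfrac n2+1\Bigr)\log_2\bigl(a_{\mathrm{KL}}x^2 n\bigr)+h_{\mathrm{KL}}(n)+1,
\]
when the power term dominates 1. Otherwise the mass is at most 2.

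Substitute $x^2=n\xi^2$ up to constants. The natural scale is $u\approx R\lambda_1/\sqrt n$: the power term is $n$-independent in exponent while the shift term becomes linear in $n$. Optimizing $\frac{\pi\alpha^2}{2\xi^2\ln2}+\frac n2\log_2(a_{\mathrm{KL}}\xi^2)$, more precisely with $x^2=\xi^2/n$, should give the two regimes of $\psi$. In the first regime the optimum $\xi=R_\alpha=\sqrt{2\pi e}/\betakl$ makes $a_{\mathrm{KL}}\xi^2=1$, so the power term is $O(1)$ and the exponent is $\pi\alpha^2/(2R_\alpha^2\ln2)\cdot n=\mu\alpha^2n/2$. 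For $\alpha\ge\alpha_c$, the interior optimum $\xi^2=\pi\alpha^2\cdot(\cdot)$ gives $1/(2\ln2)+\log_2(\alpha/\alpha_c)$. I will verify this algebra against the definitions of $\mu,\alpha_c,R_\alpha$; it is the main bookkeeping step. The slack $C_\alpha$ covers the factor $(a_{\mathrm{KL}}x^2n)^{+1}$ from the extra power, with $4a_{\mathrm{KL}}R_\alpha^2$ accounting for grid rounding.

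\textbf{Grid.} Since $\lambda_1\le d\le2^{n/2}\lambda_1$ by \Cref{thm:adrs-reduction}, the geometric grid $u_j$ descends from $(n+1)U_\alpha d$ by factors $1+n^{-2}$ down below $R_\alpha\lambda_1/\sqrt n$. The choice of $J_{\mathrm{sc}}$ and $U_\alpha>R_\alpha$ guarantees this. Some $u_j$ is therefore within a factor $1+n^{-2}$ of the optimizer, which perturbs both terms by $O(1)$ additive bits because $(1+n^{-2})^{n}=O(1)$ and the $\Delta^2/u^2$ term is $O(n)$. $J_{\mathrm{sc}}=O(n^3+n\log(nU_\alpha))$ is polynomial.

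\textbf{Cost.} The number of ADRS calls is $\sum_j J_j A_j=\poly(n,\beta_{\mathrm{in}},\beta_y)2^{\psi(\alpha)n+h_{\mathrm{KL}}(n)+O(1)}$, using $J_j$ polynomial from \eqref{eq:explicit-J} with $u_j^2$ of polynomial bit length. Each rank-$(n+1)$ call is processed on Gram data by \Cref{thm:adrs-rational} and \Cref{lem:shifted-list}(iii). With \Cref{cor:kappa-absorb}, each call costs $2^{n/2+o(n)}\poly$ expected time and space. To obtain worst-case time, I will impose a global cutoff at $2^{K}$ times the total expected bound via Markov (as in \Cref{lem:truncate-sampling}), losing constant probability. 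The total error is bounded by \Cref{lem:adaptive-tv}: polynomially times $2^{\psi n}$ calls each with $\exp(-\Omega(n^3))$ error, which stays negligible. Space is one list at a time plus $\mathsf{best}$.

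Finally, $1/2+\psi(\alpha)<1$ iff $\mu\alpha^2<1$ in the first regime. This requires checking $1/\sqrt\mu\le\alpha_c$; indeed $\mu\alpha_c^2=1/\ln2>1$, so $\alpha_c<1/\sqrt\mu$. Hence the threshold actually lies in the second regime, where the condition is $1/(2\ln2)+\log_2(\alpha/\alpha_c)<1/2$. I will need to recompute which formula produces $1.4697$; resolving this consistency is the step I expect to be most delicate.
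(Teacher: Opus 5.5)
Your plan follows the paper's proof: the single-run bound from \Cref{lem:shifted-list}(ii) combined with \Cref{lem:adrs-mass} at $u=R\lambda_1(L)/\sqrt n$, the two-branch minimization at $R=\sqrt{2\pi e}/\betakl$ and $R=\sqrt\pi\,\alpha$ (your values are right), the reduced-basis grid, $A_j$ repetitions over all $m\le J_j$, strict-improvement comparison instead of uniqueness, and a Markov cutoff on the total running time. There is one genuine error, and it is in your last paragraph, so the ``in particular'' clause is left unproved.

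\textbf{The inverted inequality.} From $\mu\alpha_c^2=\frac{\betakl^2}{2e\ln2}\cdot\frac{2e}{\betakl^2}=1/\ln2>1$ it follows that $\alpha_c>1/\sqrt\mu$, not $\alpha_c<1/\sqrt\mu$. Numerically, $\alpha_c\approx1.765$ and $1/\sqrt\mu\approx1.4697$. Your conclusion that the threshold lies in the second regime is therefore false. The condition you wrote there, $1/(2\ln2)+\log_2(\alpha/\alpha_c)<1/2$, has no solution with $\alpha\ge\alpha_c$, because on that branch $\psi(\alpha)\ge\psi(\alpha_c)=1/(2\ln2)\approx0.721>1/2$. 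There is no inconsistency to resolve. The function $\psi$ is continuous at $\alpha_c$, since both branches equal $1/(2\ln2)$ there, and it is increasing. The threshold lies in the first branch, where $\mu\alpha^2/2<1/2$ holds exactly when $\alpha<1/\sqrt\mu$. Replacing your final paragraph with this argument completes the proof.

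\textbf{Two bookkeeping slips to fix.} First, your displayed rate $\frac{\pi\alpha^2}{2\xi^2\ln2}+\frac n2\log_2(a_{\mathrm{KL}}\xi^2)$ mixes normalizations. With $x^2=\xi^2/n$ the bound reads
$n\frac{\pi\alpha^2}{2\xi^2\ln2}+\bigl(\tfrac n2+1\bigr)\bigl[\log_2(a_{\mathrm{KL}}\xi^2)\bigr]_++h_{\mathrm{KL}}(n)+1$.
This is what your later evaluations actually use. Second, state the single-run bound with the constant $512$ already absorbed into $C_\alpha$, as the paper does:
$p\ge2^{-\psi(\alpha)n-h_{\mathrm{KL}}(n)-C_\alpha}/(2J_j+1)\ge2^{-E_\alpha(n)}/(2J_j+1)$.
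If an extra factor $1/C$ remains in the denominator, as in your first sentence, then $A_jp\ge4$ fails by a factor $512$. In that case the $e^{-4}$ failure bound does not follow.
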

\begin{proof}
Let $x$ be a closest lattice vector and write $u=R\lambda_1(L)/\sqrt n$.  Combining
\Cref{lem:shifted-list} with \Cref{lem:adrs-mass} shows that the reciprocal of the probability of
outputting $(x-y,H)$ has exponential rate at most
\begin{equation}
\label{eq:cvp-rate}
\frac{\pi\alpha^2}{2R^2\ln2}
+\frac12\max\left\{0,\log_2\!\left(\frac{\betakl^2R^2}{2\pi e}\right)\right\}.
\end{equation}
For $\alpha\le\alpha_c$, this expression is minimized at
$R=\sqrt{2\pi e}/\betakl$; for $\alpha\ge\alpha_c$, it is minimized at
$R=\sqrt\pi\alpha$.  The two values are precisely the two branches of $\psi(\alpha)$.

The algorithm does not know $\lambda_1(L)$.  Use \Cref{thm:adrs-reduction} to obtain its first
vector $b_1$, and write $d=\norm{b_1}$ in the analysis.  The algorithm stores only the rational
value $d^2$.  Try the squared scales
\[
u_j^2=(n+1)^2U_\alpha^2d^2(1+n^{-2})^{-2j}
\]
until $(1+n^{-2})^j\ge2^n(n+1)U_\alpha^2$.  The first scale is above the optimizer and the
last is below it.  This is a polynomial-size grid, and one scale is within a
factor $1+n^{-2}$ of the minimizing scale above.  At each scale use the modulus range
$1\le m\le J_{\mathrm{mod}}(B,u_j^2)$ from \eqref{eq:explicit-J}.  For every pair $(j,m)$, sample a
fresh primitive vector $z$, form the Gram matrix in \Cref{lem:shifted-list}, run the ADRS sampler at
$\sqrt2u_j$, and consider vectors with last-coordinate coefficient $k=1$ whose associated point lies in $L$.  Exact rational
comparison of $\norm{x-y}^2$ keeps the best candidate.

For the selected grid point, $R_\alpha\le R\le(1+n^{-2})R_\alpha$.  Moving from
$R_\alpha$ to $R$ increases $n$ times the expression in \eqref{eq:cvp-rate} by at most
$n\log_2(1+n^{-2})\le1$.  The extra power $n/2+1$ in \Cref{lem:adrs-mass} contributes at
most $\max\{0,\log_2(4a_{\mathrm{KL}}R_\alpha^2)\}$ beyond that rate.  Together with the
factor $512$ in \Cref{lem:shifted-list} and $1+x\le2\max\{1,x\}$, the definition of
$C_\alpha$ therefore gives
\[
p_x
\ge
\frac{2^{-\psi(\alpha)n-h_{\mathrm{KL}}(n)-C_\alpha}}{2J_{\mathrm{mod}}(B,u_j^2)+1}.
\]
The value $A_j$ in \Cref{alg:cvp} therefore makes the probability of missing a fixed closest
vector at most $e^{-4}$.  If it is seen, exact distance comparison returns a closest vector even
when there are several.

There are $2^{\psi(\alpha)n+h_{\mathrm{KL}}(n)+o(n)}\poly(n,\beta_{\mathrm{in}},\beta_y)
=2^{\psi(\alpha)n+o(n)}\poly(n,\beta_{\mathrm{in}},\beta_y)$ calls.  To specify the global cutoff, put
\[
 K_{\mathrm{calls}}:=\sum_{j=0}^{J_{\mathrm{sc}}}J_jA_j.
\]
Let $P_{\max}$ bound the bit lengths of every superlattice basis and augmented Gram matrix in these
calls, and let $\beta_{\sigma,\max}$ bound the bit lengths of their squared parameters.  The basis
construction above and \Cref{lem:kernel-basis} give computable bounds polynomial in
$n,\beta_{\mathrm{in}},\beta_y$.  Let
$T_{\max}:=T_{\mathrm{ADRS}}(n+1,P_{\max},\beta_{\sigma,\max},\kappa_0)$ and
$S_{\max}:=S_{\mathrm{ADRS}}(n+1,P_{\max},\beta_{\sigma,\max},\kappa_0)$ denote the
expected-time bound and the space bound on every execution from \Cref{thm:adrs-rational}.  If
$W_{\mathrm{other}}$ is a computable upper bound on the expected cost of sampling the primitive
vectors, constructing the bases, and executing the remaining finite loops, define
\[
 W_{\max}:=K_{\mathrm{calls}}T_{\max}+W_{\mathrm{other}}.
\]
These quantities have the time and space bounds in the theorem.  Stop the complete procedure after
$100W_{\max}$ bit operations.  Markov's inequality bounds the probability of reaching this time limit
by $1/100$.  The precision and representation cutoffs belong to the individual ADRS calls; a violation
returns $\bot$, and their total contribution is $\exp(-\Omega(n^3))$ by
\Cref{lem:adaptive-tv,thm:adrs-rational}.  Calls are sequential, so the space used is
$S_{\max}$ plus polynomial bookkeeping.  The stopped procedure therefore succeeds with probability
greater than $2/3$ and has the stated time and space bounds.
\end{proof}

At $\alpha=1$, the exponent is $c_0$.  In the BDD regime $0\le\alpha<1/2$, re-optimizing the proof
of ADRS Corollary~7.4~\citep{ADRS15} and repeating the ADRS call when necessary gives exponent
$\max\{1/2,2\ckl\alpha^2/(1-4\alpha^2)\}$.  It is smaller than our exponent for
$\alpha<0.42716$, while \Cref{thm:cvp-alpha} is smaller above that point.  BDD ceases to apply at
$\alpha=1/2$, whereas \Cref{thm:cvp-alpha} continues without uniqueness until its exponent reaches
$1$ at $\alpha=1/\sqrt\mu$.  For larger $\alpha$, the unrestricted $2^{n+o(n)}$ algorithm of
Aggarwal, Dadush, and Stephens-Davidowitz is faster~\citep{ADS15}.

\begin{corollary}[$\gamma$-uSVP]
For every fixed rational $\gamma>1$, exact SVP on rank-$n$ lattices given by rational bases and satisfying
$\lambda_2(L)>\gamma\lambda_1(L)$ is solvable with probability at least $2/3$ in
time
$2^{(1/2+\mu/(2\gamma^2))n+o(n)}\poly(n,\beta_{\mathrm{in}})$ and space
$2^{n/2+o(n)}\poly(n,\beta_{\mathrm{in}})$.
\end{corollary}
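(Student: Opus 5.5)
The plan is to reduce $\gamma$-uSVP to $\CVP_\alpha$ with $\alpha=1/\gamma$, using the standard sublattice trick, and then invoke \Cref{thm:cvp-alpha}. Let $b_1,\dots,b_n$ be the input basis and let $v=\sum_i c_ib_i$ be a shortest vector. Since $v$ is primitive, some coefficient $c_i$ is odd. For each index $i$, form the sublattice
\[
L_i:=\{\textstyle\sum_j a_jb_j:\ a_i\in2\Z\},
\]
with basis obtained from $B$ by replacing $b_i$ with $2b_i$, and the target $y_i:=b_i$. For the good index $i$ we have $v\in b_i+L_i$, so $\dist(y_i,L_i)\le\lambda_1(L)$. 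The lattice $L_i$ has rank $n$ and rational basis of bit length $\beta_{\mathrm{in}}+O(n)$, so the $n$ resulting instances have polynomial size.

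The key inequality is $\lambda_1(L_i)\ge\lambda_2(L)>\gamma\lambda_1(L)$. To prove it, take any nonzero $w\in L_i$. If $w$ were a multiple $kv$ with $k\ne0$, then its $i$th coefficient would be $kc_i$. This is even only when $k$ is even, and then $\norm{w}\ge2\lambda_1(L)$. To handle this, I would use the stronger fact that under the uSVP promise, every vector of $L$ with norm less than $\lambda_2(L)$ lies in $\Z v$. So either $w\notin\Z v$ and $\norm w\ge\lambda_2(L)$, or $w=kv$ with $k$ even and $\norm w\ge2\lambda_1(L)$. This gives only $\lambda_1(L_i)\ge\min\{\lambda_2(L),2\lambda_1(L)\}$.

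This $\min$ is the main obstacle when $\gamma>2$, because then $2\lambda_1$ may be the binding term. I would first try to remove it by observing that $\dist(y_i,L_i)\le\lambda_1(L)\le\lambda_1(L_i)/\min\{\gamma,2\}$. For $\gamma\ge2$, however, the claimed exponent is then only obtained with $\alpha=1/2$, not $1/\gamma$. The fix is to use a prime modulus $p>\gamma$ instead of $2$. For each index $i$ with $c_i\not\equiv0\pmod p$ and each residue $a\in\{1,\dots,p-1\}$, take $L_{i,p}=\{a_i\in p\Z\}$ and target $ab_i$. For the correct $a\equiv c_i$, the coset $ab_i+L_{i,p}$ contains $v$, while nonzero multiples $kv\in L_{i,p}$ require $p\mid k$ and so have norm at least $p\lambda_1\ge\gamma\lambda_1$. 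Hence $\lambda_1(L_{i,p})\ge\gamma\lambda_1(L)$ and $\dist\le\lambda_1(L_{i,p})/\gamma$. Because $\gamma$ is fixed, $p=O(1)$, so there are only $O(n)$ instances, and the basis sizes grow by $O(n)$ bits.

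Apply \Cref{thm:cvp-alpha} with the fixed rational $\alpha=1/\gamma$. On the range $\alpha\le1<\alpha_c$ (note $\alpha_c=\sqrt{2e}/\betakl>1$), we have $\psi(\alpha)=\mu\alpha^2/2=\mu/(2\gamma^2)$, which gives the stated time and space. Each call either returns a lattice point $x$ or fails; on a successful correct instance, $x$ is a closest vector with $\norm{x-y}\le\lambda_1(L)$. The vector $x-y$ lies in $L$, and it is nonzero because $y\notin L_{i,p}$. It is then a shortest vector of $L$. Among all returned candidates $x-y$ that are nonzero vectors of $L$, output the shortest, using exact rational comparison.

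To reach success probability $2/3$, boost each CVP call by $O(\log n)$ independent repetitions. This makes the correct instance succeed with probability at least $2/3$. Incorrect instances cannot hurt, since every candidate is a genuine nonzero lattice vector. The polynomial overhead is absorbed.
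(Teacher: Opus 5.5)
Your final argument (prime $p>\gamma$) is correct and is essentially the paper's proof: you reduce to $\CVP_{1/\gamma}$ on the sublattices $L_{i,p}=\{a_i\equiv0\pmod p\}$ with targets $rb_i$ for a constant prime $p$, and keep the shortest nonzero candidate. The differences are minor. The paper takes $p>\gamma+1$ and also proves the closest vector is unique, whereas you correctly note that uniqueness is unnecessary, since any closest $x$ gives a nonzero $x-y\in L$ of norm at most $\lambda_1(L)$. Your $O(\log n)$-fold boosting is also superfluous, because one call on the correct instance already succeeds with probability at least $2/3$.
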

\begin{proof}
Choose a constant prime $p>\gamma+1$ and, for each basis coordinate $i$, let
$L_{i,p}=\{Bz:z_i\equiv0\pmod p\}$.  If $v=Ba$ is shortest, then $a$ is primitive, so for some
$i$ we have $a_i\not\equiv0\pmod p$.  With $r=a_i\bmod p$ and $y=rb_i$, the vector $v$ lies in
$y-L_{i,p}$.  A nonzero vector of $L_{i,p}$ collinear with $v$ has length at least
$p\lambda_1(L)$, while a noncollinear vector has length at least $\lambda_2(L)$.  Thus
$\lambda_1(L_{i,p})>\gamma\lambda_1(L)$.  In the coset $y-L_{i,p}$, the next collinear vector
after $v$ has length at least $(p-1)\lambda_1(L)$, and every noncollinear vector has length at
least $\lambda_2(L)$.  Hence $v$ is the unique shortest vector in the coset $y-L_{i,p}$ and
$\dist(y,L_{i,p})<\lambda_1(L_{i,p})/\gamma$.  Equivalently,
$w:=y-v$ is the unique closest point in $L_{i,p}$ to the target $y$.  Apply
\Cref{thm:cvp-alpha} with $\alpha=1/\gamma$ to the polynomially many pairs $(i,r)$; when a call
returns $w\in L_{i,p}$, keep $y-w\in L$ as a candidate and return the shortest nonzero candidate.
\end{proof}

\subsection{An exact-SVP consequence}
At the largest scale for which $\rho_t(L)$ is subexponential, a fixed shortest vector has
Gaussian weight $2^{-\delta n}$.  Repeating a $2^{n/2+o(n)}$ above-smoothing call
$2^{\delta n+o(n)}$ times therefore gives the exponent $c_0=1/2+\delta$.  The same decomposition
also shows how the bound changes for a different above-smoothing sampler.

Let $t_0=\sqrt{2\pi e}\lambda_1(L)/(\betakl\sqrt n)$.
At this scale, \Cref{lem:adrs-mass} gives
\begin{equation}
\label{eq:rho-subexp}
\rho_{t_0}(L)\le1+2^{h_{\mathrm{KL}}(n)}\le2^{h_{\mathrm{KL}}(n)+1}.
\end{equation}
Moreover,
\begin{equation}
\exp\!\left(-\frac{\pi\lambda_1(L)^2}{2t_0^2}\right)
=2^{-\mu n/2}=2^{-\delta n}.
\end{equation}
If $t\in[t_0,(1+n^{-2})t_0]$, then \Cref{lem:gaussian-scaling} changes
\eqref{eq:rho-subexp} by only a constant factor.  Hence, for a fixed shortest vector $v$,
\Cref{cor:pointwise-list} gives
\begin{equation}
\label{eq:svp-run}
\Pr[v\text{ is output in a run with random $z$ and $m=m_*$}]
\ge
\frac{2^{-\delta n-h_{\mathrm{KL}}(n)-12}}{2m_*-1}.
\end{equation}

Certified interval evaluation similarly gives, in polynomial time, an integer
$E_{\mathrm{SVP}}(n)$ satisfying
\begin{equation}
 \delta n+h_{\mathrm{KL}}(n)
 \le E_{\mathrm{SVP}}(n)
 \le \delta n+h_{\mathrm{KL}}(n)+2.
\end{equation}

The scale $t_0$ is unknown.  Apply \Cref{thm:adrs-reduction} with $r=2$ and let $d$ be the length
of its first vector.  Define
$J_{\mathrm{sc}}:=\min\{J\ge1:(1+n^{-2})^J\ge2^n(n+1)\}=O(n^3)$ and
$\tau_j=(n+1)^2d^2(1+n^{-2})^{-2j}$ for $0\le j\le J_{\mathrm{sc}}$.
All comparisons defining this grid are exact rational comparisons.  The first scale
$(n+1)d$ is larger than $t_0$, whereas the last is at most
$d/2^n\le2^{-n/2}\lambda_1(L)<t_0$.  Consecutive scales differ by $1+n^{-2}$, so some
$t_j=\sqrt{\tau_j}$ lies in $[t_0,(1+n^{-2})t_0]$.

\begin{algorithm}[H]
\caption{Exact SVP via above-smoothing DGS}
\label{alg:svp}
\begin{algorithmic}[1]
\Require Nonsingular rational basis $B\in\Q^{n\times n}$ of $L$
\State $\beta_{\mathrm{in}}\gets\operatorname{bits}(B)$
\If{$n=1$}
  \State \Return the nonzero column of $B$
\EndIf
\State use \Cref{thm:adrs-reduction} to obtain its first vector $b_1$ and set $d^2\gets\norm{b_1}^2\in\Q$
\State $J_{\mathrm{sc}}\gets\min\{J\ge1:(1+n^{-2})^J\ge2^n(n+1)\}$
\State $\tau_j\gets(n+1)^2d^2(1+n^{-2})^{-2j}$ for $0\le j\le J_{\mathrm{sc}}$
\State compute $E_{\mathrm{SVP}}(n)$ as above
\State $\mathsf{best}\gets$ any nonzero basis vector of $L$
\State $\kappa_0\gets\left\lceil C_\kappa(n^3+\log(2+\beta_{\mathrm{in}}))\right\rceil$
\For{$j=0,\ldots,J_{\mathrm{sc}}$}
  \State compute $J_j:=J_{\mathrm{mod}}(B,\tau_j)$ from \eqref{eq:explicit-J}
  \State $A_j\gets(2J_j+1)2^{E_{\mathrm{SVP}}(n)+14}$
  \For{$m=1,\ldots,J_j$}
    \For{$r=1,\ldots,A_j$}
      \State $W\gets$ the sequence returned by \Cref{alg:superlattice-run} on $(B,\tau_j,m,\kappa_0)$
      \For{each nonzero vector $x\in W$}
        \State if $\norm{x}^2<\norm{\mathsf{best}}^2$, set $\mathsf{best}\gets x$
      \EndFor
    \EndFor
  \EndFor
\EndFor
\State \Return $\mathsf{best}$
\end{algorithmic}
\end{algorithm}

As in the main sampling algorithm, every superlattice is represented by a rational basis
and no quotient is enumerated.  The displayed procedure is untruncated; the implementation stops it at
the explicit global bounds below and returns failure if a bound is exceeded.

\begin{theorem}[Exact SVP]
\label{thm:exact-svp}
There is a randomized Turing-machine algorithm that, on input a rank-$n$ lattice
$L\subseteq\R^n$ specified by a nonsingular rational basis, returns a shortest nonzero vector with
probability at least $2/3$.  On every execution, it runs in time
\[
2^{c_0n+o(n)}\poly(n,\beta_{\mathrm{in}})
\]
and uses space $2^{n/2+o(n)}\poly(n,\beta_{\mathrm{in}})$.
\end{theorem}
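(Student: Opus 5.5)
The plan is to analyze \Cref{alg:svp} in the same way as the proof of \Cref{thm:cvp-alpha}, with the shifted-target lemma replaced by the centered estimate \eqref{eq:svp-run}.

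\textbf{Correctness.} The case $n=1$ is trivial, so assume $n\ge2$ and fix a shortest vector $v$. By the grid argument preceding the algorithm, some index $j$ has $t_j=\sqrt{\tau_j}\in[t_0,(1+n^{-2})t_0]$. For this $j$ I will check three facts.
\begin{enumerate}
\item $m_*(L,t_j)\le J_j=J_{\mathrm{mod}}(B,\tau_j)$, by the bound after \eqref{eq:explicit-J}.
\item Each ideal run of \Cref{alg:superlattice-run} at $m=m_*$ outputs $v$ with probability at least $2^{-\delta n-h_{\mathrm{KL}}(n)-12}/(2m_*-1)$, by \eqref{eq:svp-run}.
\item A run outputs $v$ only if $v\in L$ appears among the returned vectors, which is exactly what the membership filter keeps.
\end{enumerate}
Since $2m_*-1\le 2J_j+1$ and $E_{\mathrm{SVP}}(n)\ge\delta n+h_{\mathrm{KL}}(n)$, the $A_j=(2J_j+1)2^{E_{\mathrm{SVP}}(n)+14}$ independent runs at $m=m_*$ miss $v$ with probability at most $(1-2^{-14-E}/(2J_j+1))^{A_j}\le e^{-4}$. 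If $v$ (or any shortest vector) is seen, the exact rational comparison of squared norms sets $\mathsf{best}$ to a shortest vector. Every stored candidate is a nonzero lattice vector, so no incorrect output can displace it.

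\textbf{From ideal to actual lists.} The ideal lists are replaced by the ADRS outputs through \Cref{lem:adaptive-tv} together with the remark after \Cref{cor:pointwise-list}: hitting $v$ is an event about the whole list, so its probability changes by at most the total-variation distance. Each call contributes $\exp(-\Omega(\kappa_0))=\exp(-\Omega(n^3))$, and the number of calls is only $2^{O(n)}\poly$. The precision and representation cutoffs inside calls return $\bot$ and are covered by \Cref{thm:adrs-rational}.

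\textbf{Complexity.} The total number of calls is
\[
K_{\mathrm{calls}}=\sum_j J_jA_j=2^{\delta n+o(n)}\poly(n,\beta_{\mathrm{in}}).
\]
This uses $J_{\mathrm{sc}}=O(n^3)$, the fact that each $J_j$ is polynomial in the input length (the $\tau_j$ have polynomial bit length because $d^2$ comes from \Cref{thm:adrs-reduction}), and $h_{\mathrm{KL}}(n)=o(n)$. Each call has expected cost $T_{\max}=2^{n/2+o(n)}\poly$ by \Cref{thm:adrs-rational} and \Cref{cor:kappa-absorb}, and uses space $S_{\max}$ of the same form. As in \Cref{thm:cvp-alpha}, I will define a computable $W_{\max}=K_{\mathrm{calls}}T_{\max}+W_{\mathrm{other}}$ and stop after $100W_{\max}$ bit operations. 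Markov's inequality makes this cutoff cost at most $1/100$ in success probability. Adding $e^{-4}$ and the negligible errors leaves success probability above $2/3$. Since calls are sequential and each list is discarded after scanning, the space bound holds on every execution.

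\textbf{Main obstacle.} The step needing the most care is making the global cutoff legitimate: $W_{\max}$, $P_{\max}$, and the bit-length bounds on all superlattice bases at every $(j,m)$ must be computable in advance and polynomial. I would handle this exactly as in the CVP proof, using \Cref{lem:kernel-basis} together with the polynomial bit length of the $\tau_j$.
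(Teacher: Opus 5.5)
Your proposal is correct and follows essentially the same route as the paper's proof. Both use the good grid point and distinguished modulus via \eqref{eq:svp-run}, the $A_j$ repetitions for a miss probability of $e^{-4}$, \Cref{lem:adaptive-tv} for the $\exp(-\Omega(n^3))$ implementation error over $2^{\delta n+o(n)}$ calls, and a global Markov cutoff at $100W_{\max}$.

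There is one arithmetic slip. Your derived per-run bound is at least $2^{-12-E_{\mathrm{SVP}}(n)}/(2J_j+1)$, not $2^{-14-E_{\mathrm{SVP}}(n)}/(2J_j+1)$. With the correct bound, the expected number of hits among the $A_j$ runs is at least $4$ and the miss probability is at most $e^{-4}$. Your displayed expression as written gives only $e^{-1}$, which is too weak for the $2/3$ success bound.
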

\begin{proof}
Consider the grid point in the interval above and its distinguished modulus $m_*$.  By
\eqref{eq:svp-run} and the definition of $A_j$, the expected number of occurrences of a fixed
shortest vector is at least $4$.  Hence the probability that all $A_j$ runs miss that vector is at
most $e^{-4}$.  Every output vector lies in $L$, and exact squared-norm comparisons keep a shortest
one once it appears.

The number of runs is
\[
2^{\delta n+h_{\mathrm{KL}}(n)+o(n)}\poly(n,\beta_{\mathrm{in}})
=2^{\delta n+o(n)}\poly(n,\beta_{\mathrm{in}}).
\]
Each call costs
$2^{n/2+o(n)}\poly(n,\beta_{\mathrm{in}})$ bit operations and uses
$2^{n/2+o(n)}\poly(n,\beta_{\mathrm{in}})$ space by
\Cref{thm:adrs-rational,cor:kappa-absorb}.  Calls are sequential, and their output is discarded
before the next call, so
the exponents add in time but not in space.  The statistical-error composition and the conversion
to a fixed global time bound are given below.

It remains to justify the two points suppressed in the preceding running-time calculation: the
sampling errors must remain negligible over exponentially many calls, and the expected cost of an
ADRS call must be controlled by a fixed cutoff.  The calls made by \Cref{alg:svp} are adaptive
only through their generated lattices and random outputs, so \Cref{lem:adaptive-tv} applies.

For \Cref{alg:svp}, let
$K_{\mathrm{calls}}:=\sum_{j=0}^{J_{\mathrm{sc}}}J_jA_j
=2^{\delta n+h_{\mathrm{KL}}(n)+o(n)}\poly(n,\beta_{\mathrm{in}})$.
Let $P_{\max}$ be the computable uniform bound from \Cref{lem:kernel-basis} on the bit length of
all superlattice bases occurring in these calls, and let $\beta_{\tau,\max}$ bound the bit lengths
of their squared scales.  Both are polynomial in $n$ and $\beta_{\mathrm{in}}$.  Consequently the
chosen
$\kappa_0=C_\kappa(n^3+\log(2+\beta_{\mathrm{in}}))$ also dominates
$n^3+\log(2+P_{\max}+\beta_{\tau,\max})$ after increasing $C_\kappa$.
The sum of all sampling, tail, and finite-precision errors is therefore
$\exp(-\Omega(n^3))$ by \Cref{lem:adaptive-tv,thm:adrs-rational}.  Replacing the ideal calls by
their implementations on a randomized Turing machine changes the probability of seeing a shortest vector by at most
this amount.

Write $T_{\mathrm{ADRS}}(n,P_{\max},\beta_{\tau,\max},\kappa_0)$ for an integer upper bound on
the concrete computable expectation supplied by \Cref{thm:adrs-rational}.  Let
$W_{\mathrm{other}}$ be an integer upper bound on the expected cost of sampling primitive vectors,
constructing the bases, scanning each output list, and performing the exact norm comparisons in
\Cref{alg:svp}.  Define
$W_{\max}:=K_{\mathrm{calls}}T_{\mathrm{ADRS}}(n,P_{\max},\beta_{\tau,\max},\kappa_0)+W_{\mathrm{other}}$.
The preceding bounds give
$W_{\max}\le2^{c_0n+o(n)}\poly(n,\beta_{\mathrm{in}})$.  Stop the complete algorithm after
$100W_{\max}$ bit operations and return failure.  Markov's inequality shows that this time limit is
reached with probability at most $1/100$.  The per-call precision and representation cutoffs already
contribute only the preceding negligible implementation error.  Together with the $e^{-4}$ failure
bound for the ideal procedure, this leaves success probability greater than $2/3$.  This proves the
time and success claims in \Cref{thm:exact-svp}; the space bound in
\Cref{thm:adrs-rational} holds on every execution.
\end{proof}

\subsection{Exact SVP from above-smoothing DGS}

The preceding exponent consists of the cost of one above-smoothing call and the number of calls
needed to output a shortest vector.  Separating these two terms gives the exact-SVP bound for a
sampler with a different number of outputs, running time, or space.

\begin{theorem}[Exact SVP from faster sampling above smoothing]
\label{thm:svp-above-smoothing}
Fix $\xi,\omega,\zeta\ge0$ with $\omega\ge\xi$.  Suppose a sampler satisfying
\Cref{def:variable-dgs} on every rank-$n$ input lattice $\Lambda$ has threshold
$\sqrt2\eta_{1/2}(\Lambda)$.  Whenever the input parameter exceeds this threshold, it produces
$N=N(n)$ samples, where $N(n)$ is computable in $2^{o(n)}$ bit operations and
$\log_2N(n)=\xi n+o(n)$, and has statistical error
$\exp(-\Omega(\kappa))$.  Suppose further that one call has computable expected bit cost
$2^{\omega n+\polylog(\kappa)+o(n)}\poly(n,\beta_{\mathrm{in}})$, including outputting and
processing the samples, and uses at most
$2^{\zeta n+o(n)}\poly(n,\beta_{\mathrm{in}})$ bits on every execution.  The sampler also supplies,
in $2^{o(n)}$ bit operations, a numerical upper bound on its expected cost and explicit precision,
representation, and space cutoffs for one call.  Then a randomized Turing-machine algorithm solves
exact SVP with probability at least $2/3$.  On every execution, its running time is
\[
2^{c(\xi,\omega)n+o(n)}\poly(n,\beta_{\mathrm{in}}),
\qquad
c(\xi,\omega)
=
\omega+\max\left\{0,\frac12+\delta-\xi\right\},
\]
and its space usage is $2^{\zeta n+o(n)}\poly(n,\beta_{\mathrm{in}})$.
\end{theorem}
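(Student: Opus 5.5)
We follow \Cref{alg:svp} and the proof of \Cref{thm:exact-svp}, replacing the ADRS sampler by the hypothesized sampler. The replacement changes only two places: the per-run hit probability, through the list size $N$, and the per-call cost.

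First we redo the pointwise bound. Fix a grid point $t\in[t_0,(1+n^{-2})t_0]$, the distinguished modulus $m_*$, and $z$ in the good event of \Cref{thm:random-smoothing}. Then $\eta_{1/2}(L_z)<t$, so the sampler at parameter $\sqrt2t$ lies above its threshold and returns $N$ samples. By \Cref{thm:mass-comparison}, one ideal sample equals a fixed shortest vector $v$ with probability
\[
p\ge\frac{2^{-\delta n}}{40(2m_*-1)2^{n/2}\rho_t(L)}.
\]
Here $\rho_t(L)\le2^{h_{\mathrm{KL}}(n)+O(1)}$ by \eqref{eq:rho-subexp} and \Cref{lem:gaussian-scaling}. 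Using $1-(1-p)^N\ge\frac12\min\{1,Np\}$ together with the factor $3/4$ for $z$, one run hits $v$ with probability at least
\[
\frac{\min\{1,\,2^{(\xi-1/2-\delta)n-o(n)}\}}{\poly(m_*)}.
\]
We therefore set $A_j=(2J_j+1)2^{E'(n)}$, where $E'(n)$ is an integer upper approximation to $[1/2+\delta-\xi]_+n+h_{\mathrm{KL}}(n)+o(n)+O(1)$. Computing $E'(n)$ in $2^{o(n)}$ time uses the computable $N(n)$: we take $E'$ from $\lceil\log_2 N(n)\rceil$ and certified bounds on $\delta$, rather than from $\xi$ itself, because $\xi$ need not be computable. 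The expected number of hits is then at least $4$, so the miss probability is at most $e^{-4}$. The scale grid and modulus range are unchanged, both polynomial.

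Next we count cost. The number of calls is $K_{\mathrm{calls}}=2^{[1/2+\delta-\xi]_+n+o(n)}\poly(n,\beta_{\mathrm{in}})$. We take $\kappa_0=C_\kappa(n^3+\log(2+\beta_{\mathrm{in}}))$. By the argument of \Cref{cor:kappa-absorb}, the factor $2^{\polylog(\kappa_0)}$ is absorbed into $2^{o(n)}\poly$. The total expected cost is $K_{\mathrm{calls}}\cdot2^{\omega n+o(n)}\poly$, which gives the exponent $c(\xi,\omega)$. The hypothesis $\omega\ge\xi$ guarantees that outputting and scanning $N$ samples is already charged to one call. Calls are sequential and outputs are discarded, so space is $2^{\zeta n+o(n)}\poly$ plus polynomial bookkeeping.

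Finally, errors and cutoffs. Each call has error $\exp(-\Omega(\kappa_0))=\exp(-\Omega(n^3))$, so \Cref{lem:adaptive-tv} over $2^{O(n)}$ adaptive calls gives total error $\exp(-\Omega(n^3))$. Define $W_{\max}$ from the supplied numerical expected-cost bound, as in the proof of \Cref{thm:exact-svp}. Stopping after $100W_{\max}$ operations fails with probability at most $1/100$ by Markov's inequality. The supplied per-call precision, representation and space cutoffs make the space bound hold on every execution, with violations returning $\bot$ at negligible statistical cost. The success probability is therefore at least $1-e^{-4}-1/100-o(1)>2/3$.

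I expect the main obstacle to be the bookkeeping in the regime $\xi>1/2+\delta$, where one run already succeeds with inverse-polynomial probability and the $\min\{1,\cdot\}$ truncation must be handled. A second obstacle is making $A_j$ and $W_{\max}$ computable within $2^{o(n)}$ time using only the supplied computable quantities. I would handle both by deriving every loop bound from $N(n)$ and from the sampler's supplied cost bound, never from the real exponents.
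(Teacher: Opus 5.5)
Your proposal is correct and follows essentially the paper's proof. Like the paper, you:
\begin{itemize}
\item reuse the grid scale near $t_0$ and the distinguished modulus from \Cref{thm:exact-svp};
\item apply the pointwise bound with $N$ samples, so that the reciprocal per-run hit probability is $\max\{1,2^{(1/2+\delta)n+h_{\mathrm{KL}}(n)}/N(n)\}$ up to polynomial factors;
\item take the repetition count from the computable $N(n)$ rather than from $\xi$ (the paper uses exactly $2^{12}(2J_{\mathrm{mod}}(B,t^2)+1)\max\{1,\lceil2^{\lceil n/2\rceil+E_{\mathrm{SVP}}(n)}/N(n)\rceil\}$);
\item carry over the error-composition, Markov-stopping and sequential-space arguments unchanged.
\end{itemize}
One minor point: to make $E'(n)$ an upper approximation you should subtract $\lfloor\log_2N(n)\rfloor$, not $\lceil\log_2N(n)\rceil$. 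The difference is at most a factor $2$, which your $O(1)$ slack absorbs.
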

\begin{proof}
For a fixed shortest vector $v$, the proof of \Cref{cor:pointwise-list} with $N$ samples
gives, at the good scale and modulus,
\[
\Pr[v\text{ occurs}]
\ge
\frac1{160(2m_*-1)}
\min\left\{1,
N2^{-n/2}\frac{e^{-\pi\lambda_1(L)^2/(2t^2)}}{\rho_t(L)}
\right\}.
\]
Write $t=R\lambda_1(L)/\sqrt n$.  Substitution of \Cref{lem:adrs-mass} shows that the two
$n$-dependent terms in the reciprocal success probability contribute
\[
\frac{\pi}{2R^2\ln2}
+
\frac12\left[\log_2\!\left(\frac{\betakl^2R^2}{2\pi e}\right)\right]_+.
\]
The minimum over $R>0$ is $\delta$, attained at
$R=\sqrt{2\pi e}/\betakl$.  The factor $N2^{-n/2}$ contributes $\xi-1/2$.
More explicitly, at each scale and modulus it suffices to make
\[
2^{12}(2J_{\mathrm{mod}}(B,t^2)+1)
\max\left\{1,\left\lceil\frac{2^{\lceil n/2\rceil+E_{\mathrm{SVP}}(n)}}{N(n)}\right\rceil\right\}
\]
independent runs.  This repetition count is computable in $2^{o(n)}$ bit operations,
and its exponential rate is $\max\{0,1/2+\delta-\xi\}$.
Multiplying by the cost per call gives the time exponent.  Sequential execution gives the stated
space, and the error-composition and stopping argument above applies unchanged.
\end{proof}

For the ADRS sampler, $\xi=\omega=\zeta=1/2$, and
\Cref{thm:svp-above-smoothing} gives $c(1/2,1/2)=c_0$.  More generally, substituting a different number of
samples or a different cost per call into the displayed formula gives the corresponding exact-SVP
exponent; the random-superlattice and Gaussian-mass bounds are unchanged.

\section{Conclusion}

\Cref{thm:one-dgs} produces one centered discrete Gaussian sample at every positive rational squared
parameter in expected $2^{n/2+o(n)}$ time and $2^{n/2+o(n)}$ space on every execution, up to polynomial
factors in the input length.  It uses random superlattices that are smooth at the requested scale
with constant probability and scans one ADRS output list for a point of the original lattice.  The
Gaussian-mass comparison gives inverse-polynomial success probability, while
\Cref{thm:mass-floor} shows that the $2^{n/2}$ factor is unavoidable in comparisons against
$\rho_{s/\sqrt2}(L)$.

A one-dimensional augmentation gives exact CVP in sub-$2^n$ time for
every fixed rational $\alpha<1/\sqrt\mu\approx1.4697$, without a uniqueness assumption.  The same
pointwise estimate gives the secondary exact-SVP and $\gamma$-uSVP consequences.

Two limitations remain.  The algorithm stores the $2^{n/2+o(n)}$ samples returned by one ADRS call.
It is also centered: for a shift, the success probability is governed by the ratio between shifted
and centered Gaussian masses, which can be arbitrarily small.  The distance guarantee controls this
ratio for the CVP application, but not for arbitrary-parameter shifted DGS.

\section*{Acknowledgments}
We are grateful to Minki Hhan for helpful discussions.

\paragraph{AI use disclosure.}
OpenAI ChatGPT and Codex were used to help develop and rigorously check proofs and to assist with
writing the manuscript.  The author independently verified all claims and proofs and takes full
responsibility for the correctness, originality, and integrity of the manuscript.
{\small
\bibliographystyle{alphaurl}
\bibliography{references}
}

\appendix
\section{An effective Gaussian-mass bound}
\label{app:effective-kl}

This appendix makes the $o(n)$ term in \Cref{lem:adrs-mass} uniform and computable.  We first
apply a finite-dimensional spherical-code bound and only then sum over radii.  The multiplicative
coronas in the asymptotic proof of \citep[Lemma~3]{PS09} suffice when the outer radius is fixed;
absolute-width shells give the uniform estimate needed here.

Write $M(d,\theta)$ for the largest size of a spherical code in $S^{d-1}$ with minimum angle
$\theta$.  For $a,b>-1$, let $t_{1,k}^{a,b}$ be the largest zero of the degree-$k$ Jacobi
polynomial.  With $\alpha=(d-3)/2$, the finite Levenshtein bound is
\[
M_{\mathrm{Lev}}(d,\theta)=
\begin{cases}
2\binom{k+d-1}{d-1},
 &t_{1,k}^{\alpha+1,\alpha}<\cos\theta\le t_{1,k}^{\alpha+1,\alpha+1},\\[1ex]
\binom{k+d-1}{d-1}+\binom{k+d-2}{d-1},
 &t_{1,k-1}^{\alpha+1,\alpha+1}<\cos\theta\le t_{1,k}^{\alpha+1,\alpha}.
\end{cases}
\]
The relevant line and the integer $k$ are uniquely determined.  Levenshtein's linear-programming
bound gives $M(d,\theta)\le M_{\mathrm{Lev}}(d,\theta)$; see
\citep[Equations~(6)--(7)]{SZ24}.  Put
\[
 S_d:=M_{\mathrm{Lev}}(d,\pi/3).
\]
For $v=\sqrt3/2$ and $c=\ckl$, the Kabatiansky--Levenshtein analysis gives
\begin{equation}
\label{eq:lev-asymptotic}
 \log_2 S_d=cd+o(d).
\end{equation}

We first dispose of rank one.  After scaling, write $L=\Z$ and $x=s/\lambda_1(L)$.  If
$0<x\le1$, then $k^2\ge k$ and the elementary inequality
$e^{-\pi/x^2}\le e^{-\pi}x^3$ give
\[
 \rho_x(\Z)-1
 \le \frac{2e^{-\pi/x^2}}{1-e^{-\pi/x^2}}
 \le x^3.
\]
If $x\ge1$, Poisson summation and the preceding estimate at $1/x$ give
$\rho_x(\Z)=x\rho_{1/x}(\Z)\le2x\le1+2x^3$.
Since $\betakl>1$ and
$2^8/(2\pi e)^{3/2}>2$, the inequality in \Cref{lem:adrs-mass} holds in rank one after setting
$h_{\mathrm{KL}}(1)=8$.  Henceforth assume $n\ge2$.

We turn this spherical bound into a lattice-point bound that holds at every radius.  Normalize
$\lambda_1(L)=1$ and let
$N_L(R)=|\{x\in L\setminus\{0\}:\norm{x}\le R\}|$.  For $R\ge1$, cover these vectors by the shells
\[
 E_j=\{x\in L:j/n\le\norm{x}<(j+1)/n\},
 \qquad j=n,\ldots,\lfloor nR\rfloor.
\]
If $x,y\in E_j$ are distinct, $a=\norm{x}$, $b=\norm{y}$, and $\phi$ is the angle between them, then
\[
 2ab(1-\cos\phi)=\norm{x-y}^2-(a-b)^2\ge1-n^{-2}.
\]
Thus the normalized vectors in $E_j$ form a spherical code whose chordal distance is at least
$q_j:=\sqrt{1-n^{-2}}/r_j$, where $r_j=(j+1)/n$.  Let $\mathrm{cap}_n(q)$ denote the normalized
area of a cap in $S^{n-1}$ with radial radius $q$.  Sidelnikov's cap reduction
\citep[Equation~(5)]{SZ24}, applied with target angle $\pi/3$, gives
\[
 |E_j|\le \frac{S_{n+1}}{\mathrm{cap}_n(q_j)}.
\]
Orthogonal projection of the cap onto $B_2^{n-1}(q)$ gives
\[
 \mathrm{cap}_n(q)\ge\frac{V_{n-1}q^{n-1}}{nV_n},
\]
where $V_d$ is the volume of the Euclidean unit ball in dimension $d$.  There are at most $2nR$
shells, $r_j\le(1+1/n)R$, $(1+1/n)^{n-1}<3$,
$(1-n^{-2})^{-(n-1)/2}<2$, and $V_n\le2V_{n-1}$.  Hence
\begin{equation}
\label{eq:effective-lattice-count}
 N_L(R)\le K_nR^n,
 \qquad K_n:=32n^2S_{n+1}.
\end{equation}

We next sum the Gaussian weights.  Put $m=n/2$.  Stieltjes integration and
\eqref{eq:effective-lattice-count} give, for every $s>0$,
\begin{equation}
\label{eq:stieltjes-mass}
 \rho_s(L)-1
 \le K_n\Gamma(m+1)\left(\frac{s^2}{\pi}\right)^m.
\end{equation}
Robbins' form of Stirling's bound implies
$\Gamma(m+1)\le2\sqrt n\,(m/e)^m$.  Let $s_0^2=2\pi/n$ and
$z=s^2n/(2\pi e)$.  If $s\ge s_0$, then $z\ge1/e$, and
\eqref{eq:stieltjes-mass} is at most $8K_n\sqrt n\,z^{m+1}$.  If $s\le s_0$, put
$x=s^2n/(2\pi)\le1$.  For every $r\ge1$,
\[
 e^{-\pi r^2/s^2}
 \le e^{-\pi/s^2+n/2}e^{-\pi r^2/s_0^2}.
\]
Applying \eqref{eq:stieltjes-mass} at $s_0$ therefore gives
\[
 \rho_s(L)-1\le2K_n\sqrt n\,e^{-m/x}.
\]
For $n\ge2$, the function on the left below is maximized at $x=m/(m+1)$, so
\[
 \frac{e^{-m/x}}{(x/e)^{m+1}}
 \le\left(1+\frac1m\right)^{m+1}\le4.
\]
Consequently, with
$P_n:=\lceil16K_n\lceil\sqrt n\rceil\rceil$, uniformly in $s$ and $L$,
\begin{equation}
\label{eq:effective-mass-pre}
 \rho_s(L)\le1+P_n
 \left(\frac{s^2n}{2\pi e\lambda_1(L)^2}\right)^{n/2+1}.
\end{equation}

It remains to extract the exponent used in the applications.  For $n\ge2$, compute a certified rational number
$\underline c_n$ such that
\[
 c-(n+2)^{-2}\le\underline c_n\le c
\]
and define
\begin{equation}
 h_{\mathrm{KL}}(n):=
 \max\left\{0,\left\lceil\log_2P_n\right\rceil
 -\left\lfloor(n+2)\underline c_n\right\rfloor\right\}.
\end{equation}
Then $2^{(n+2)c+h_{\mathrm{KL}}(n)}\ge P_n$, so
\eqref{eq:effective-mass-pre} is the inequality in \Cref{lem:adrs-mass}.  Moreover,
\eqref{eq:lev-asymptotic} gives
$\log_2P_n=c(n+1)+o(n)$, and hence $h_{\mathrm{KL}}(n)=o(n)$.

Finally, the definition is effective.  At angle $\pi/3$ the comparison point is $1/2$, and clearing
denominators gives Jacobi polynomials with coefficient bit length polynomial in $n$.  If
$x_{r,r}(a,b)$ is the largest zero of $P_r^{(a,b)}$, Nikolov's bound~\citep[Theorem~1]{Nikolov20}
gives
\[
 1-x_{r,r}(a,b)
 <
 \frac{4(a+1)(a+2)(a+4)}
 {(5a+11)\bigl(r(r+a+b+1)+(a+1)(b+1)/3\bigr)}.
\]
For the relevant parameters, $a=(d-1)/2$ and $b\le(d-1)/2$.  Setting $r=8d$ makes the
right-hand side at most
$32d^3/(160d^3)<1/2$ for $d\ge3$.  Hence the relevant Levenshtein degree is $O(d)$; the remaining
dimensions are handled directly.  Sturm sequences at $1/2$ identify the interval in
\citep[Section~5.1]{SZ24}, after which the displayed binomial formula gives $S_{n+1}$.  Certified
logarithm evaluation gives $\underline c_n$, and the binary length of $P_n$ gives
$\lceil\log_2P_n\rceil$.  Thus $h_{\mathrm{KL}}(n)$ is computable in polynomial time.

\end{document}